\documentclass[12pt]{article}
\usepackage{amscd, amsfonts, amsmath, amssymb, amstext, amsthm, caption, epsfig,fancyhdr, float, graphicx, latexsym, multicol, multirow,tikz}
\usepackage[lofdepth,lotdepth]{subfig}
\usepackage{lmodern}
\usepackage{authblk}
\usepackage[utf8]{inputenc}
\usepackage[T1]{fontenc}
\usepackage[hyphens,spaces,obeyspaces]{url}
\usepackage{hyperref}
\usepackage{gensymb}
\usepackage{relsize}
\usepackage{booktabs}
\usepackage{mathtools}

\newcommand{\vh}{\widehat{V}}
\newcommand{\ph}{\widehat{\psi}}
\usepackage{tkz-graph}

\usetikzlibrary{backgrounds,automata}

\makeatletter

\usepackage{color}
\newtheorem{lemma}{Lemma}[section]
\newtheorem{theo}{Theorem}[section]
\newtheorem{coro}{Corollary}[section]
\newtheorem{prop}{Proposition}[section]

\newtheorem{rem}{Remark}[section]

\textwidth=16cm
\textheight=24cm
\setlength{\topmargin}{-.8in}
\setlength{\oddsidemargin}{0.0in}

\newcommand \E{\mathbb{E}}
\newcommand \Esp[1]{\Esp\left(#1\right)}
\renewcommand \P{\mathbb{P}}

\usepackage{xspace}
\newcommand{\iid}{{i.i.d.}\ }

\newcommand{\pmf}{{p.m.f.}\ }

\newcommand{\N}{\mathbb{N}\xspace}




\begin{document}

\title{On the profitability of selfish blockchain mining\\ under consideration of ruin}
\author{Hansj\"{o}rg Albrecher\footnote{\footnotesize Department of Actuarial Science, Faculty of Business and Economics, University of Lausanne and Swiss Finance Institute. Batiment Extranef, UNIL-Dorigny, 1015 Lausanne, Switzerland.\\ Email: hansjoerg.albrecher@unil.ch}\,\,  and Pierre-O. Goffard\footnote{ISFA, Université Lyon 1, LSAF EA2429. Email: pierre-olivier.goffard@univ-lyon1.fr}}

\date{}
\maketitle
\vspace{3mm}
\begin{abstract}
{Mining blocks on a blockchain equipped with a proof of work consensus protocol is well-known to be resource-consuming. A miner bears the operational cost, mainly electricity consumption and IT gear, of mining, and is compensated by a capital gain when a block is discovered. This paper aims at quantifying the profitability of mining when the possible event of ruin is also considered. This is done by formulating a tractable  stochastic model and using tools from applied probability and analysis, including the explicit solution of a certain type of advanced functional differential equation. The expected profit at a future time point is determined for the situation when the miner follows the protocol as well as when he/she withholds blocks. The obtained explicit expressions allow to analyze the sensitivity with respect to the different model ingredients and to identify conditions under which selfish mining is a strategic advantage.}
\end{abstract}
{\it Keywords:} Blockchain; miner; cryptocurrency; ruin theory; dual risk model.\\

\section{Introduction}\label{sec:introduction}
A blockchain is a distributed public data ledger maintained by achieving consensus among a number of nodes in a \textit{Peer-to-Peer} network. The nodes, referred to as miners, are responsible for the validity of the information recorded in the blocks. Each miner stores a local copy of the data and competes to solve a cryptographic puzzle. The first miner who is able to propose a solution includes the pending information in a block and collects a reward.  The mining process is energy-consuming and costly in terms of equipment. The goal of the present paper is to provide insights into how profitable it is to engage in mining activities when keeping the aforementionned expenses in mind.\\

\noindent Miners are supposed to follow an incentive-compatible protocol which consists of releasing immediately any newly discovered block and appending it to the longest existing branch of the blockchain. By doing so, miners receive reward proportionally to their contribution to the overall computing effort. It was shown by Kroll et al. \cite{KrDaFe13} that following the protocol is a dominant strategy leading to a consistent history of transactions. Later, Eyal and Sirer \cite{EySi18} introduced a blockwithholding strategy, called \textit{selfish mining}, that allows a pool of miners to get a revenue, relative to the network revenue, that is greater than its fair share of the computing power. Selfish mining entails a waste of resources for all the network participants but makes the honest nodes waste more. Other blockwithholding strategies were presented in Sapirshtein et al.\  \cite{SaSoZo16} and Nayak et al.\  \cite{NaKuMiSh16} leading to an optimized relative revenue. These strategies imply a lower revenue, even for the malicious nodes, which led Grunspan and Marco-P\'erez in a series of contributions \cite{GrPM19,GrPM18selfish,GrPM18Trailing,GrPM18Stubborn} to label them as not profitable. The authors pointed out that the time required to implement such a scheme can be rather long and the time before it becomes profitable even longer exposing them to a significant risk of going bankrupt. Blockchain protocols usually calibrate the difficulty of the cryptopuzzle to ensure a steady flow of confirmed information. The difficulty is adjusted periodically to account for the evolution of the computing power of the network. Because blockwithholding strategies impede the speed of the block generating process, their application will cause a decrease in the cryptopuzzle difficulty for the next round. The selfish miners resume then to follow the protocol and increase their revenue. Furthermore, the waste of resources will drive some honest miners out of the game, increasing the colluding miners' share of the network computing power. Yet another side effect of selfish mining, noted in Eyal and Sirer \cite{EySi18}, is that the revenue of the selfish miners will entice honest miners to join them. The pool will grow and may accumulate more than $50\%$ of the resources putting them in position to perform double spending attacks, described in Nakamoto \cite{Na08}, Rosenfeld \cite{Ro14} and Goffard \cite{Go19}.\\

\noindent In this paper we model the surplus of a miner by a stochastic process that is inspired by risk processes considered for the study of the surplus in insurance portfolios, for which a detailed understanding concerning the probability of that surplus to become negative is available (see e.g.\ Asmussen and Albrecher \cite{AsAl10} for an overview). The latter event is referred to as {\em ruin} in the insurance context, and we will adopt this terminology in the present setup, even if the investigation of the event of running out of money can be seen as a risk management tool rather than bankruptcy in the literal sense. The profitability will be assessed by computing the expected surplus of a miner at some time horizon given that ruin did not occur until then. It will turn out that under reasonably realistic model assumptions, one can derive closed form expressions for this quantity when letting the time horizon to be random, and more in particular, exponentially distributed. Apart from tractability, the assumption of an exponential time horizon also has a natural interpretation in terms of lack-of-memory as to when the surplus process will be observed for the assessment of the profitability. Note that the selfish mining strategy considered in this paper is a simplified version of the one studied in Eyal and Sirer \cite{EySi18}, with the purpose to allow an explicit treatment for the value function under consideration. \\


\noindent The rest of the paper is organized as follows. Section \ref{sec:PnL_miner} develops a stochastic model for the surplus process of a miner and derives formulas for safety and profitability measures for a miner that follows the prescribed protocol. Mathematically, our profitability analysis will lead to the solution of differential equations with advanced arguments. In Section \ref{sec:blockwithholding_strategy}, the study is then extended to the situation of a selfish miner. An adaptation of the definition of selfish mining allows to make the surplus process Markovian (in a suitable higher-dimensional space), and the resulting system of differential equations with advanced arguments is analytically solved in an appendix. Section \ref{sec:numerical_illustrations} subsequently uses the obtained explicit expressions for numerical studies and sensitivity tests with respect to model parameters in order to determine whether blockwithholding strategies are profitable or not. We find that selfish mining may be profitable,  but always to a lesser extent than following the protocol. In Section \ref{sec:diffi}, we consider a time horizon large enough to include a difficulty adjustment. We show that, depending on the electricity price, in the presence of difficulty adjustments selfish mining may in fact outperform the alternative of following the protocol. In Section \ref{sec:conc} we conclude and outline some directions for future research. 

\section{Ruin and expected surplus of a miner following the protocol}\label{sec:PnL_miner}
A miner, referred to as Sam, starts operating with an initial surplus level $u>0$. We consider the following stochastic model for the surplus process over time. Mining blocks generates steady operational costs of amount $c>0$ per unit of time, most notably due to electricity consumption. The entire network of miners appends blocks to the blockchain at an exponential rate $\lambda$, which means that the length of the blockchain over time is governed by a Poisson process $(N_t)_{t\geq0}$ with intensity $\lambda$. We assume that Sam owns a fraction $p\in(0,1/2)$ of the overall computing power, which implies that each block is published by Sam with a probability $p$. The number of blocks found by Sam and therefore the number of rewards of size $b>0$ he collects up to time $t\geq 0$ is a thinned Poisson process $(\tilde{N}_t)_{t\geq0}$ with intensity $\lambda$ and thinning parameter $p$. Sam's surplus $R_t$ at time $t$ is then given by  
\begin{equation}\label{eq:surplus_protocol}
R_t = u + b\cdot\tilde{N}_t - c\cdot t,\qquad t\ge 0.
\end{equation}
The stochastic process $(R_t)_{t\geq0}$ resembles the so-called dual risk process in risk theory, see for instance Avanzi et al.\ \cite{AVANZI2007111}. The focus of this paper is the profitability of mining blocks on the blockchain, but subject to a ruin constraint. In this context, the time of ruin $\tau_u = \inf\{t\geq0:\text{ }R_t = 0\}$ is defined as the first time when the surplus reaches zero, i.e.\ the miner runs out of money and cannot continue to operate. The riskiness of the mining business may be assessed via the finite-time and infinite-time horizon ruin probabilities defined as 
\begin{equation}\label{eq:ruin_probabilities}
\psi(u,t) = \mathbb{P}(\tau_u\leq t),\text{ and }\psi(u)=\mathbb{P}(\tau_u<\infty), 
\end{equation}
respectively. The rewards earned through mining must in expectation exceed the operational cost per time unit. The latter condition translates into $p\lambda b > c$, and is referred to as the net profit condition in standard risk theory, see \cite{AsAl10}. In particular, this implies $\psi(u)<1$, i.e.\ ruin does not occur almost surely. The profitability for a time horizon $t>0$ is now measured by
\begin{equation}\label{eq:value_function_finite_time}
V(u,t) = \E(R_t\cdot \mathbb{I}_{\tau_u>t}),
\end{equation}
where $\mathbb{I}_A$ denotes the indicator random variable of an event $A$. Correspondingly, $V(u,t)$ is the expected surplus level at time $t$, where in the case of ruin up to $t$ this surplus is 0 (i.e., due to the ruin event the surplus is frozen in at 0). In terms of a conditional expectation, one can equivalently express $V(u,t)$ as the probability to still be alive at $t$ times the expected value of the surplus at that time given that ruin has not occurred:
\[V(u,t)= (1-\psi(u,t))\cdot \E(R_t\vert {\tau_u>t}).\]
 The following result provides formulas for the ruin probabilities and $V(u,t)$ in this setting.
\begin{prop}\label{prop:ruin_proba_and_value_func} 
\text{ }
\begin{enumerate}
\item For any $u\ge 0$, the finite-time ruin probability is given by 
\begin{equation}\label{eq:finite_time_ruin_proba}
\psi(u,t) = \sum_{n = 0}^{\infty}\frac{u}{u+bn}\;\P\left[N_{\frac{u+bn}{c}} = n\right]\mathbb{I}_{\left\{t>\frac{u+bn}{c}\right\}}. 
\end{equation}
\item For any $u\ge 0$, the infinite-time ruin probability is given by
\begin{equation}\label{eq:infinite_time_ruin_proba}
\psi(u) =e^{-\theta^\ast u},
\end{equation}
where $\theta^\ast$ is the positive solution in $\theta$ of the equation 
\begin{equation}\label{lueq}{c}\,\theta + p\lambda \, (e^{-b\, \theta }-1)=0.\end{equation}
\item For any $u\ge 0$, the expected surplus at time $t$ in case ruin has not occurred until then, can be written as 
\begin{equation}\label{eq:value_function_finite_time_prop}
V(u,t) = \E\left[\left(u+bN_t - ct\right)_+(-1)^{N_t}G_{N_t}\left(0\;\Big\rvert \left\{\frac{u}{ct}\land 1,\ldots, \frac{u+(N_t-1)b}{ct}\land 1\right\}\right) \right],
\end{equation}
where $(.)_+$ denotes the positive part, $\land$ stands for the minimum operator and\\ $\left(G_n(\cdot\rvert\{\ldots\}\right)_{n\in\N}$ is the sequence of Abel-Gontcharov polynomials as defined in Goffard and Lefevre \cite[Proposition 2.4]{GoLe17JMAA}.
\end{enumerate}
\end{prop}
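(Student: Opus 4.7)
All three parts rest on a common sample-path observation for the surplus $R_t=u+b\tilde{N}_t-ct$: since the downward drift between jumps is deterministic and every jump is of the fixed size $b$, the process can only cross the barrier $0$ strictly between two consecutive jumps. Thus ruin is always exact ($R_{\tau_u}=0$ with no overshoot), and conditional on $\tilde{N}_{\tau_u}=n$ the ruin time is forced to be deterministic, $\tau_u=(u+bn)/c$. This exact-ruin structure is what will make each of the three formulas an equality rather than an inequality or an asymptotic.

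For Part 1 I would decompose $\{\tau_u\le t\}$ according to $n=\tilde{N}_{\tau_u}$, obtaining $\psi(u,t)=\sum_{n\ge 0}\mathbb{I}_{\{t\ge (u+bn)/c\}}\,\P(\tau_u=(u+bn)/c,\,\tilde{N}_{\tau_u}=n)$. Given $\tilde{N}_{(u+bn)/c}=n$ the $n$ jump epochs are distributed as the order statistics of $n$ \iid uniforms on $[0,(u+bn)/c]$, and the event in question reduces to the requirement $T_k<(u+(k-1)b)/c$ for $k=1,\dots,n$ (no ruin at an earlier epoch). A Tak\'acs-type ballot / cycle-lemma argument (equivalently, an induction on $n$ using the joint density of uniform order statistics, as the toy case $n=1,2$ confirms directly) shows that this conditional probability equals exactly $u/(u+bn)$. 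Multiplying by $\P(\tilde{N}_{(u+bn)/c}=n)$ yields (\ref{eq:finite_time_ruin_proba}).

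For Part 2 I would introduce the exponential martingale $M_t=e^{-\theta R_t}$. Computing $\E[e^{-b\theta\tilde{N}_t}]=\exp(p\lambda t(e^{-b\theta}-1))$ gives $\E[M_t]=e^{-\theta u}\exp(t[c\theta+p\lambda(e^{-b\theta}-1)])$, so $M$ is a martingale precisely when $\theta$ solves (\ref{lueq}); the existence and uniqueness of a strictly positive root $\theta^\ast$ follow from convexity of $\theta\mapsto p\lambda(e^{-b\theta}-1)+c\theta$ together with the net profit condition $p\lambda b>c$. Applying optional stopping at $\tau_u\wedge t$ and noting that $0\le M_{\tau_u\wedge t}\le 1$ (since $R\ge 0$ up to ruin and $R_{\tau_u}=0$ exactly), the dominated convergence theorem lets me pass to the limit $t\to\infty$; because $R_t\to\infty$ a.s.\ on $\{\tau_u=\infty\}$ under the net profit condition, one obtains $e^{-\theta^\ast u}=\psi(u)\cdot 1+(1-\psi(u))\cdot 0=\psi(u)$. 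The Lundberg inequality becomes an equality here precisely because ruin is exact.

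For Part 3 I would condition on $\tilde{N}_t=n$. Given $\tilde{N}_t=n$, the jump epochs $T_1<\dots<T_n$ are distributed as the order statistics of $n$ \iid uniforms on $[0,t]$, and the non-ruin event $\{\tau_u>t\}$ unfolds as the conjunction of the $n$ inequalities $cT_k<u+(k-1)b$ for $k=1,\dots,n$ with the terminal condition $ct<u+bn$. The latter produces the positive part $(u+bN_t-ct)_+$ in (\ref{eq:value_function_finite_time_prop}), while after the rescaling $T_k\mapsto T_k/t\in[0,1]$ the remaining constraints become $T_k/t<u_k$ with $u_k=\tfrac{u+(k-1)b}{ct}\wedge 1$ (the $\wedge 1$ accounting for the possibility that the linear boundary has already exited the unit interval before step $k$). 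The resulting $n$-fold Lebesgue integral of the indicator of this hyper-triangle against the uniform order-statistics density is exactly the object that Abel--Gontcharov polynomials evaluate in closed form: Proposition 2.4 of Goffard--Lef\`evre \cite{GoLe17JMAA} identifies it with $(-1)^n G_n(0\,|\,\{u_1,\dots,u_n\})$. Substituting this expression and taking expectation over $\tilde{N}_t$ yields (\ref{eq:value_function_finite_time_prop}). The main obstacle I expect is not the probabilistic conditioning itself but the bookkeeping needed to align the sign conventions and the $\wedge 1$ truncation with the cited proposition, in particular pinning down the factor $(-1)^{N_t}$.
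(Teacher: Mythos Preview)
Your proposal is correct and follows essentially the same route as the paper: decomposition over the discrete ruin epochs $t_n=(u+bn)/c$ plus a ballot-type identity for Part~1 (the paper invokes Corollary~3.4 of Goffard--Lef\`evre, which is the same $u/(u+bn)$ computation you sketch), the Lundberg exponential martingale with optional stopping for Part~2, and conditioning on the number of jumps together with uniform order statistics and Proposition~2.4 of Goffard--Lef\`evre for Part~3. The only cosmetic differences are that you spell out the dominated-convergence justification in Part~2 and the $\wedge 1$ truncation in Part~3 more explicitly than the paper does.
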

\begin{proof}
\begin{enumerate}
\item The ruin time $\tau_u$ may be rewritten as 
\begin{equation*}
\tau_u = \inf\left\{t\geq 0\text{ ; }\tilde{N}_t = {ct}/{b} - {u}/{b}\right\}.
\end{equation*}
Note that ruin can only occur at the specific times 
$$
t_k = \frac{u+bk}{c}\text{, }k \geq0,
$$
when the function $t\mapsto {ct}/{b} - {u}/{b}$ reaches integer levels. For $t>0$, define the set of indices $\mathcal{I} = \{k\geq0\text{ ; }t_k\leq t\}$. The finite-time ruin probability can then be written as 
\begin{eqnarray}
\psi(u,t)&=&\sum_{k\in\mathcal{I}}\P(\tau_u = t_k)\nonumber\; =\;\sum_{k\in\mathcal{I}}\frac{t_0}{t_k}\;\mathbb{P}\left[N(t_k) = k\right],
\end{eqnarray}
where the last equality follows from applying Corollary 3.4 of Goffard and Lefevre \cite{GoLe17JMAA} and is equivalent to \eqref{eq:finite_time_ruin_proba}. 
\item This result is standard, see e.g.\ \cite[Th.VI.2.1]{AsAl10}. For the sake of completeness, we still prefer to give the main idea behind its derivation and the probabilistic reason for Equation \eqref{lueq} here. The process $\{e^{\theta (ct-b\tilde{N}_t) -t\kappa(\theta)}\}_{t\geq 0},$ is a martingale for any $\theta\in{\mathbb R}$, where $\kappa(\theta) = \log\E(e^{c\theta - b\theta  \tilde{N}_1})$, see e.g. \cite[Th.II.2.1]{AsAl10}. 
%
%
The simplest choice for $\theta$ is the one for which $\kappa(\theta)=0$ , i.e.
$$
c\theta + p\lambda (e^{-b\theta}-1)=0. 
$$
The latter equation admits a unique non-negative solution $\theta^\ast$, and the process $\left(e^{\theta^\ast (ct-b\tilde{N}_t)}\right)_{t\geq0}$ is therefore a martingale. It then  follows by the Optional Stopping Theorem (cf.\ \cite[Prop.II.3.1]{AsAl10}) that 
\begin{equation*}
\psi(u) =e^{-\theta^\ast u},
\end{equation*}
since under the assumptions on $R_t$ (with only upward jumps) the surplus at the time of ruin is necessarily zero. 
\item Using the tower property, we can express the value function \eqref{eq:value_function_finite_time} as 
\begin{eqnarray}
V(u,t) &=&  \E\left[\E\left(R_t\mathbb{I}_{\tau_u>t}\Big\rvert N_t\right)\right]\nonumber\\
&=&\E\left[\left(u+bN_t - ct\right)\E\left(\mathbb{I}_{\tau_u>t}\Big\rvert N_t\right)\right]\nonumber\\
&=&\E\left[\left(u+bN_t - ct\right)_+\P\left(\bigcap_{k = 1}^{N_t}\{T_k < t_{k-1}\land t\}\Big\rvert N_t\right)\right]\nonumber\\
&=&\E\left[\left(u+bN_t - ct\right)_+\P\left(\bigcap_{k = 1}^{N_t}\{U_{1:k} < \frac{t_{k-1}}{t}\land 1\}\right)\right],
\end{eqnarray}
where $U_{1:n},\ldots, U_{n:n}$ denote the order statistics of $n$ \iid standard uniform random variables. Using the interpretation of Abel-Gontcharov polynomials as the joint probabilities of uniform order statistics, see Goffard and Lefevre \cite[Prop.2.4]{GoLe17JMAA} then yields \eqref{eq:value_function_finite_time_prop}.
\end{enumerate}
\end{proof}
\noindent Expression \eqref{eq:value_function_finite_time_prop} is unfortunately not of closed form. The Abel-Gontcharov polynomials may be evaluated recursively, and acceptable approximations follow from simple truncation or simulation. However, we are interested in a more explicit and amenable expression for this quantity. In addition, the choice of the considered time horizon $t$ is somewhat arbitrary. We therefore propose now to replace the fixed time horizon $t$ by an independent exponential random time horizon $T$ with mean $t$. This will allow us to derive a simple and explicit expression for the respective quantity, which hopefully will provide a good approximation of $V(u,t)$. At the same time, this random time horizon also has an intuitive interpretation on its own: it can be seen as the first epoch of an independent homogeneous Poisson process, so that there is a lack-of-memory as to when exactly the surplus level is going to be evaluated, with the expected value of that time horizon being $t$. We hence consider the value function
\begin{equation}\label{eq:value_function_exp_time_horizon}
\vh(u,t):= \E(V(u,T)) = \E(R_T\mathbb{I}_{\tau_u>T}),
\end{equation} 
where $T\sim \text{Exp}(t)$. 
\begin{prop}\label{prop:value_function_protocol}
For any $u\ge 0$, the value function $\vh(u,t)$ defined in \eqref{eq:value_function_exp_time_horizon} satisfies
\begin{equation}\label{eq:value_function_exp_time_horizon_formula}
\vh(u,t) = u+(p\lambda b - c)\,t\,(1 - e^{\rho^{\ast}u}),
\end{equation}
where $\rho^{\ast}$ is the negative solution of the equation
\begin{equation}\label{eq:equation_rho}
-c\rho +p\lambda \,(e^{b\,\rho }-1) = 1/t.
\end{equation}
\end{prop}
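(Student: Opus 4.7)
The plan is to derive a first-order linear functional-differential equation with an advanced argument satisfied by $\widehat{V}(u,t)$ as a function of $u$, and then solve it in closed form by combining a linear particular solution with an exponential homogeneous component, the two free constants being pinned down by the boundary conditions at $u=0$ and at $u\to\infty$.

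Writing $\delta=1/t$ for the rate of $T$, I would start by conditioning on what happens in an infinitesimal interval $[0,h]$. Using the independence of $T$ and $(\tilde{N}_t)_{t\ge 0}$, the strong Markov property of $R$ and the lack-of-memory of $T$, three outcomes contribute to $\widehat{V}(u,t)$: with probability $\delta h+o(h)$ the exponential clock rings and $R_T\simeq u$ is captured; with probability $p\lambda h+o(h)$ Sam discovers a block, the surplus jumps to $u+b$ up to $O(h)$, and the expectation restarts as $\widehat{V}(u+b,t)$; otherwise the surplus drifts down to $u-ch$ and the remaining expectation equals $\widehat{V}(u-ch,t)$. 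Collecting these contributions, dividing by $h$ and letting $h\downarrow 0$ produces
\begin{equation*}
c\,\widehat{V}'(u,t)+(p\lambda+\delta)\,\widehat{V}(u,t)-p\lambda\,\widehat{V}(u+b,t)=\delta\,u,\qquad u>0,
\end{equation*}
together with the boundary condition $\widehat{V}(0,t)=0$ since ruin occurs instantaneously at $u=0$.

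I would then solve this equation by superposition. The ansatz $Au+B$ immediately gives the particular solution $A=1$, $B=(p\lambda b-c)/\delta=(p\lambda b-c)\,t$. Plugging $e^{\rho u}$ into the homogeneous equation reduces it to the characteristic equation $-c\rho+p\lambda(e^{b\rho}-1)=1/t$, which is precisely \eqref{eq:equation_rho}. Its left-hand side is strictly convex, vanishes at $\rho=0$, and has derivative $p\lambda b-c>0$ there by the net profit condition, so it admits exactly one negative root $\rho^\ast$ and one positive root $\rho^+$. The general solution hence reads
\begin{equation*}
\widehat{V}(u,t)=u+(p\lambda b-c)\,t+C_-\,e^{\rho^\ast u}+C_+\,e^{\rho^+u}.
\end{equation*}

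It remains to fix the two constants. Since $0\le R_T\,\mathbb{I}_{\tau_u>T}\le u+b\tilde{N}_T+cT$, taking expectations gives the crude bound $0\le\widehat{V}(u,t)\le u+(p\lambda b+c)\,t$, so $\widehat{V}$ is sandwiched between linear functions of $u$ and $C_+$ must vanish; the boundary value $\widehat{V}(0,t)=0$ then forces $C_-=-(p\lambda b-c)\,t$, producing \eqref{eq:value_function_exp_time_horizon_formula}. The step I expect to require the most care is a rigorous justification of the infinitesimal-conditioning derivation of the functional ODE, cleanly carried out either by Dynkin's formula applied to $R$ killed at rate $\delta$, or equivalently by differentiating the renewal-type integral equation obtained from conditioning on $\min(\tau_1,T)$ with $\tau_1\sim\mathrm{Exp}(p\lambda)$. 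As a final cross-check one may substitute the candidate right-hand side of \eqref{eq:value_function_exp_time_horizon_formula} back into the functional ODE, which reduces to the defining identity for $\rho^\ast$.
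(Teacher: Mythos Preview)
Your proposal is correct and follows essentially the same approach as the paper: derive the advanced functional ODE $c\widehat{V}'(u,t)+(p\lambda+1/t)\widehat{V}(u,t)-p\lambda\widehat{V}(u+b,t)=u/t$ by conditioning on a short initial interval, then solve it with a linear particular solution plus an exponential homogeneous term determined by the characteristic equation \eqref{eq:equation_rho}, using $\widehat{V}(0,t)=0$ and the linear growth of $\widehat{V}$ to fix the constants. The only minor presentational differences are that the paper writes out the integral equation for a finite $h<u/c$ before differentiating (which makes the derivation of the ODE rigorous without appeal to Dynkin's formula), and that the paper works directly with the single-exponential ansatz $Ae^{\rho u}+Bu+C$ and selects the negative root by positivity of $\widehat{V}$, whereas you carry both roots and eliminate $\rho^+$ via the growth bound; these are equivalent.
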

\begin{proof}
Let $0<h<u/c$, so that ruin can not occur in the interval $(0,h)$. We distinguish three cases:
\begin{itemize}
	\item[(i)] $T>h$ and there is no block discovery in the interval $(0,h)$,
	\item[(ii)] $T<h$ and there is no block discovery in the interval $(0,T)$,
	\item[(iii)] There is a block discovery before time $T$ and in the interval $(0,h)$. 
\end{itemize}
By conditioning we see that 
\begin{eqnarray*}
	\vh(u,t)& =&e^{-h(1/t + p\lambda)}\,\vh(u-ch,t)+\int\limits_0^h\frac1t\, e^{-s(1/t + p\lambda)}\,(u-cs)ds\\
	&&\quad+\int\limits_0^h p\lambda\, e^{-s(1/t + p\lambda)}\,\vh(u-cs+b,t)ds.
	\end{eqnarray*}
Now we take the derivative with respect to $h$ and set $h=0$ to obtain

\begin{equation}\label{eq:ODE}
c\vh'(u,t) + \left(\frac{1}{t} +  p\lambda\right)\vh(u,t) - p\lambda \vh(u+b,t) - \frac{u}{t} =0,
\end{equation}
with boundary condition $\vh(0,t) = 0$ and such that $0\leq \vh(u,t)\leq u-ct+p\lambda t$ for $u>0$. 
 
Here the derivative is with respect to the first argument (note that the above derivation automatically guarantees the existence of the derivative $\vh'(u,t)$). 	
Equation \eqref{eq:ODE} is a particular case of an advanced functional differential equation (concretely, a differential equation with an advanced argument). By construction, it will have a solution (namely our value function), although we would like to remark here that, from a mathematical point of view, to the best of our knowledge there is no result available that guarantees the existence and uniqueness of a solution to such an equation  
in general
(typically, one would expect the knowledge of $\vh(u,t)$ on the entire interval  $u\in[0,b]$ as an initial condition). For instance, the form does not even fall into the rather general setup studied by Schauder fixed point techniques in \cite{banas} (as the asymptotics assumptions (iii) and (iv) of \cite[Sec.3]{banas} do not hold here), and the argument $u$ is not restricted to a compact interval which could accommodate the problem into the framework studied in \cite{jankowski}, see also \cite{iakovleva} and \cite[Ch.3]{smith} for further details. However, the concrete form of \eqref{eq:ODE} will allow to derive its solution, and one can re-confirm its uniqueness by stochastic simulation of $\vh(u,t)$, which then suffices for our purpose. We are seeking a solution of the form 
\begin{equation}\label{eq:potential_solution}
\vh(u,t) = Ae^{\rho u }+Bu + C,\text{ }u \ge 0, 
\end{equation}
where $A, B,C$ and $\rho$ are constants to be determined. Substituting \eqref{eq:potential_solution} in \eqref{eq:ODE} together with the boundary condition yields the system of equations 
\begin{equation*}
\begin{cases}
0&=ct\rho + \left(1+p\lambda t\right)-p\lambda te^{\rho b}, \\
0&= B\left(1+tp\lambda\right)-p\lambda tB - 1,\\
0&=Bct+C(1+tp\lambda) - p\lambda t Bb-p\lambda tC, \\
0&=A+C.
\end{cases}
\end{equation*}
We then have $A = -t(p\lambda b - c)$, $B = 1$, $C = t(p\lambda b - c)$ and $\rho$ is solution of Equation \eqref{eq:equation_rho},
which admits one negative and one positive solution. As $A<0$, we have to choose the negative solution $\rho^\ast<0$ in order to ensure $\vh(u,t)>0$. Substituting $A,B,C$ and $\rho^{\ast}$ in \eqref{eq:potential_solution}  yields the result.
\end{proof}
\begin{rem}\label{rem21}\normalfont 
Note that, as the solution of \eqref{eq:equation_rho}, $\rho^*$ can be expressed through the Lambert W function, see Corless et al. \cite{corless1996lambertw}, with 

	\begin{equation*}
	\rho^{\ast}=-\frac{p \lambda t+1}{ct}
	-\frac{1}{b} \,{\rm W} \left(-\frac{p\lambda
		\,b}{c}\,{e^{-b\,\left(\frac{p \lambda t+1}{ct}\right)}}
	\right).
	\end{equation*} 
	Furthermore, the similarity of the equations \eqref{lueq} and \eqref{eq:equation_rho} (referred to as \textit{Lundberg equations} in risk theory) is no coincidence. In contrast to the probabilistic derivation of \eqref{lueq} presented in the proof of Proposition  \ref{prop:ruin_proba_and_value_func}, one could also mimick the analytic approach above and identify $\psi(u)$ as the solution of the differential equation $c\psi'(u)+p \lambda(\psi(u)-\psi(u+b))=0$, of which \eqref{lueq} is the characteristic equation (here $\theta=-\rho$).\hfill $\Box$
\end{rem}

\noindent We can now complement the results of Proposition \ref{prop:ruin_proba_and_value_func}. In fact, for an exponential time horizon the ruin probability has a very simple form (see also Asmussen et al.\ \cite{asmussen2002} for ruin probabilities up to a random time horizon in the classical risk model).  

\begin{coro}\label{next} 
	\text{ }
	 For $u\ge 0, t>0$, let $\ph(u,t):=\E(\psi(u,T))=\P(\tau_u<T)$ denote the probability that ruin takes place before time $T$, where  
	$T$ is an independent exponential random variable with mean $t$. Then 
	\begin{equation}\label{nextfinite_time_ruin_proba}
	\ph(u,t) =e^{\rho^\ast u},
	\end{equation}
	where $\rho^\ast$ is the negative solution in $\rho$ of Equation \eqref{eq:equation_rho}.
\end{coro}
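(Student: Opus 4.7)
The plan is to mimic the martingale argument of Proposition~\ref{prop:ruin_proba_and_value_func}(2), replacing the infinite-horizon setup by a discount factor that captures the random exponential time horizon. Since $T$ is exponential with mean $t$ and independent of $(R_s)_{s\ge 0}$, conditioning on $\tau_u$ yields
\[
\ph(u,t) \;=\; \P(\tau_u < T) \;=\; \E\bigl[\P(T>\tau_u\mid \tau_u)\bigr] \;=\; \E\bigl[e^{-\tau_u/t}\,\mathbb{I}_{\tau_u<\infty}\bigr],
\]
so the task reduces to computing the Laplace transform of $\tau_u$ at rate $1/t$.

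As in the proof of Proposition~\ref{prop:ruin_proba_and_value_func}(2), the process $M_s := \exp\bigl(\theta(cs - b\tilde N_s) - s\,\kappa(\theta)\bigr)$ with $\kappa(\theta) = c\theta + p\lambda(e^{-b\theta}-1)$ is a martingale for every $\theta\in\R$. I would now pick $\theta$ such that $\kappa(\theta) = 1/t$; substituting $\rho=-\theta$, this equation is precisely Equation~\eqref{eq:equation_rho}, whose unique strictly positive root is $\theta=-\rho^\ast$. With this choice $M_s = e^{\theta(cs - b\tilde N_s) - s/t}$, and since $R_{\tau_u}=0$ forces $c\tau_u - b\tilde N_{\tau_u}=u$, we get $M_{\tau_u} = e^{\theta u - \tau_u/t}$.

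Applying the Optional Stopping Theorem at $\tau_u\wedge s$ yields $\E[M_{\tau_u\wedge s}]=1$ for every $s$. Before ruin, $cs-b\tilde N_s < u$, so $M_{\tau_u\wedge s}\le e^{\theta u}$ uniformly in $s$; on $\{\tau_u=\infty\}$, the net-profit condition $p\lambda b>c$ together with the strong law for $\tilde N_s$ force $cs-b\tilde N_s\to-\infty$, so $M_s\to 0$ almost surely. Dominated convergence then delivers $1=\E[e^{\theta u-\tau_u/t}\,\mathbb{I}_{\tau_u<\infty}]$, i.e.\ $\ph(u,t)=e^{-\theta u}=e^{\rho^\ast u}$, as claimed.

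The main care will be justifying the dominated-convergence passage cleanly. An attractive alternative that sidesteps it is to adapt the ODE derivation of Proposition~\ref{prop:value_function_protocol}: conditioning on the first event in a small interval $(0,h)$ with $h<u/c$ produces the advanced differential equation
\[
c\,\ph'(u,t)+\Bigl(\tfrac{1}{t}+p\lambda\Bigr)\,\ph(u,t)-p\lambda\,\ph(u+b,t)=0
\]
with boundary value $\ph(0,t)=1$, and the ansatz $\ph(u,t)=A\,e^{\rho u}$ recovers Equation~\eqref{eq:equation_rho} as characteristic equation, with boundedness of a probability selecting $A=1$ and the negative root $\rho^\ast$.
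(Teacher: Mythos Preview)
Your proposal is correct and essentially coincides with the paper's treatment: the paper gives the ODE derivation as its main proof and then sketches exactly your martingale/Optional Stopping argument in the remark immediately following the corollary, so you have simply reversed the emphasis. One small simplification for your dominated-convergence step: on $\{\tau_u=\infty\}$ you already have $cs-b\tilde N_s<u$ for all $s$, hence $M_s\le e^{\theta u-s/t}\to 0$ directly, without appeal to the net-profit condition or the strong law.
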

\begin{proof}
	Following up on Remark \ref{rem21}, using conditioning the ruin probability $\ph(u,t)$ can be seen to be the solution of the differential equation
	$$c\ph'(u,t)+(p \lambda+1/t)\,\ph(u,t)-p \lambda\,\ph(u+b,t)=0$$
	with initial condition $\psi(0,t)=1$ and boundary condition $\lim_{u\to\infty}\ph(u,t)=0$, where the derivative is with respect to the first argument. This is in fact the homogeneous equation of \eqref{eq:ODE} and gives a nice interpretation of the term $1/t$ as the intensity of the dynamic time horizon when moving along the surplus trajectory $R_t$. Trying a solution of the form $\ph(u,t)=Ae^{\rho u}+C$ then leads immediately to the result, since the boundary condition forces $C=0$ and the initial condition gives $A=1$, whereas the exponent $\rho$ has to be the negative solution of \eqref{eq:equation_rho}.  
\end{proof}
\noindent 
\begin{rem}\normalfont
Note that 	
\begin{eqnarray*}\ph(u,t)&=&\P(\tau_u<T)\\
	&=&\E(\P(T>\tau_u\vert \tau_u))\\
	&=&\E(e^{-\delta\tau_u})
\end{eqnarray*}	
with $\delta=1/t$. Thus, $\ph(u,t)$ is the Laplace transform of the ruin time density. In actuarial language, it is the expected present value (evaluated with the force of interest $\delta$) of 1 that is paid at the time of ruin. 	\\
It can be verified that the process $\{\exp( - \delta t + \rho^* R_t)\}_{t\ge 0}$  is a martingale. From the Optional Stopping Theorem it follows that the expectation at the time of ruin is equal to its initial value, which is precisely Formula \eqref{nextfinite_time_ruin_proba}, constituting another proof of that formula, in the spirit of the proof of \eqref{eq:infinite_time_ruin_proba} given above.
%
\end{rem}

Comparing the expression \eqref{eq:finite_time_ruin_proba} for $\psi(u,t)$ with deterministic time horizon $t$ and  \eqref{nextfinite_time_ruin_proba}  
for $\ph(u,t)$ with random time horizon (of the same expected value) shows the substantial simplification one can achieve through the principle of randomization. We refer to Section \ref{sec:numerical_illustrations} for a numerical comparison of the two quantities.

\section{Ruin and expected surplus of a selfish miner}\label{sec:blockwithholding_strategy}
It has been discussed in the literature for a while already that it may be advantageous to keep newly found blocks hidden (see e.g.\ \cite{EySi18}). Such a block withholding strategy is referred to as selfish mining. The goal is to build up a stock of blocks and release them publicly at well chosen times so as to fork the chain and make a part of the mining activities of competitors worthless, depending on which branch is followed up in the longer run. A fork happens when two equally long versions of the blockchain coexist, and it resolves whenever one of them becomes longer by at least one block. In the sequel, we consider a variant of the selfish mining strategy introduced by Eyal and Sirer \cite{EySi18}. Let again $(N_t)_{t\geq0}$ be the homogeneous Poisson process with intensity $\lambda$ that governs the block discovery process of the entire network. As in the previous section, we consider a miner named Sam who owns a share $p\in(0,1)$ of the computing power so that a newly found block belongs to Sam with probability $p$. We keep track of the number of blocks withheld by Sam via a Markov jump process
$$
X_t=Z_{N_t}, \text{ }t\geq0,
$$ 
where $(Z_{k})_{k \geq 0}$ is a homogeneous Markov chain with finite state space $E = \{0,1,0^\ast\}$ which we define now. Assume that at time $t\geq0$, the miners have published $N_t = k$ blocks.
\begin{itemize}
\item If Sam is not hiding any block, then $Z_k = 0$, 
\begin{itemize}
\item if Sam finds the next block, he stores it in a buffer and $Z_{k+1} = 1$, 
\item if the other miners discover a block then Sam's buffer remains empty $Z_{k+1}=0$, 
\end{itemize}
In both cases, Sam is not collecting any reward.
\item If Sam is hiding one block at that time, then $Z_k = 1$,
\begin{itemize} 
\item if he then finds a new block, then he broadcasts both blocks immediately (which resets the Markov chain to $Z_{k+1}=0$), and he collects two rewards,
\item if the others find a block, then Sam also releases his block leading to a fork situation characterized by $Z_{k+1} = 0^\ast$. At that moment Sam is not collecting any rewards.
\end{itemize}
\item If a fork situation is present at that time ($Z_k = 0^{\ast}$), then 
\begin{itemize}
\item if Sam finds a new block then he appends it to his branch of the chain and collects the reward for two blocks and $Z_{k+1} = 0$.  
\item if the others find a block then 
\begin{itemize}
\item they append it to Sam's branch with a probability $0\leq q \leq 1$, in which case Sam gets the reward for one block. 
\item If the block is mined on top of the competing branch, then Sam earns nothing. 
\end{itemize}
In both cases, the number of hidden blocks then becomes $Z_{k+1}=0$.
\end{itemize}
\end{itemize}
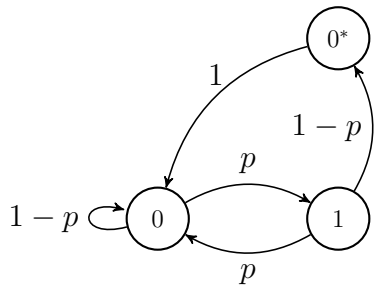
\begin{figure}[h]
\begin{center}
\begin{tikzpicture}[->, >=stealth', auto, semithick, node distance=3cm]
\tikzstyle{every state}=[fill=white,draw=black,thick,text=black,scale=0.8]
\node[state]    (1)                     {$0$};
\node[state]    (2)[right of=1]         {$1$};
\node[state]    (3)[above of=2]         {$0^{\ast}$};
\path
(1) edge[loop left]     node{$1-p$}        (1)
    edge[bend left]     node{$p$}          (2)
(2) edge[bend left]      node{$p$}         (1)
    edge[bend right]      node{$1-p$}      (3)
(3) edge[bend right, above]      node{$1$}         (1);
\end{tikzpicture}
\end{center}
\caption{Transition graph of the Markov chain $(Z_k)_{k\geq0}$ representing the stock of blocks retained by Sam when implementing the simplified selfish mining strategy.}
\label{fig:transition_graph}
\end{figure}
Figure \ref{fig:transition_graph} depicts the transition graph of $(Z_{k})_{k\geq0}$. The selfish mining strategy alters the reward collecting process. The surplus process of Sam introduced in \eqref{eq:surplus_protocol} now becomes 
\begin{equation}\label{eq:surplus_selfish}
R_t = u-c\cdot t+b\cdot\sum_{n = 1}^{N(t)}f\left[Z_{n-1},\xi_n, \zeta_n\right],
\end{equation}
where the $(\xi_n)_{n\ge 1}$ and $(\zeta_n)_{n\ge 1}$ are \iid Bernoulli random variables with parameter $p$ and $q$, respectively, and 
\begin{equation}\label{eq:gain_function_selfish}
f\left[Z_{n-1}, \xi_n,\zeta_{n}\right] = \begin{cases}
0,&\text{ if } Z_{n-1} =0,\\
0,&\text{ if } Z_{n-1} =1\text{ }\&\text{ }\xi_{n} =0,\\
2,&\text{ if } Z_{n-1} =1\text{ }\&\text{ }\xi_{n} =1,\\
0,&\text{ if } Z_{n-1} =0^\ast \text{ }\&\text{ }\xi_{n} = 0\text{ }\&\text{ }\zeta_{n} = 0,\\
1,&\text{ if } Z_{n-1} =0^\ast \text{ }\&\text{ }\xi_{n} = 0\text{ }\&\text{ }\zeta_{n} = 1,\\
2,&\text{ if }Z_{n-1} = 0^\ast \text{ }\&\text{ }\xi_{n} = 1.\\
\end{cases}
\end{equation} 
\begin{rem}\normalfont 
In contrast to the selfish mining strategy defined above, Eyal and Sirer's \cite{EySi18} more general selfish mining strategy does not automatically release the blocks when the buffer reaches size two. On the contrary, the stock is being built up continuously and the selfish miner releases one block for every block found by the others. As soon as the level of the buffer goes back to two, the selfish miner releases the last two blocks and collects the reward for all the blocks he has been hiding. The process $(Z_n)_{n\geq0}$ remains a proper Markov chain, and the study of its stationary distribution allowed Eyal and Sirer \cite{EySi18} to show how the relative revenue of the selfish miner exceeds his fair share. However, in Eyal and Sirer's setting the reward collecting process and the surplus process are no longer Markovian, rendering an explicit expression for the value function infeasible. In addition, we believe that postponing the capital gain for too long would impair the miner's solvency to the point of making this strategy unsustainable from a financial point of view. Consequently, for the consideration of solvency aspects we prefer to focus on the simpler selfish mining strategy here. \hfill $\Box$
\end{rem}
\noindent It is interesting to see whether selfish mining is still profitable for Sam if the possibility of ruin is included in the analysis. His average earning per time unit is now given by 
\begin{equation}\label{eq:average_earning_selfish}
\gamma = \frac{b}{t}\,\E\left[\sum_{k = 1}^{N(t)}f(Z_{k-1},\xi_k,\zeta_k)\right] - c.
\end{equation}
This quantity can be determined if we assume that Sam has been mining in a selfish way for quite some time already, so that we can consider the Markov chain to be in stationarity with stationary probabilities
$$
\P(Z = 0)=\frac{1}{1+2p-p^2},\text{ }\P(Z = 1)=\frac{p}{1+2p-p^2},\text{ and }\P(Z = 0^{\ast})=\frac{p(1-p)}{1+2p-p^2}.
$$
The quantities $U_k:=f\left(Z_{k-1},\xi_k,\zeta_k\right),\text{ }n\geq1$ then have a \pmf $p_U(\cdot): =\mathbb{P}(U = \cdot)$ given by 
$$
p_U(0) =\frac{1+p(1-p)+p(1-p)^2(1-q)}{1+2p-p^2},\text{ }p_U(1)=\frac{pq(1-p)^2}{1+2p-p^2},\text{ and }p_U(2)=\frac{p^2+p^2(1-p)}{1+2p-p^2},
$$
and the net profit condition correspondingly reads
\begin{equation}\label{netpro}
\gamma = b\lambda\frac{qp(1-p)^2 + 4p^2-2p^3}{1+2p-p^2} - c>0.
\end{equation}
The profitability of selfish mining consequently depends on the interplay between the probabilities $p$ and $q$, and not all values of $p$ and $q$ will lead to positive expected profit. If we assume for a moment that the $U_k$'s are \iid with \pmf $p_U$ (which is of course a strongly simplifying assumption) and the net profit condition holds, then the infinite-time ruin probability is given by
$$
\psi(u) = e^{-\theta^\ast u},
$$
where $\theta^\ast$ is the unique positive root of the equation 
$$
{c\theta}+\lambda\left[p_U(0)+e^{-b \theta}p_U(1)+e^{-2b \theta}p_U(2)-1\right]=0.
$$
The proof is anologous to the one of Proposition \ref{prop:ruin_proba_and_value_func}. Deriving the actual ruin probability and the expected surplus under ruin constraints at some deterministic time horizon is more difficult than in the previous section. However, for the exponential time horizon (with mean $t$), we are able to derive an explicit formula for $\vh(u,t)$. Since the reward process is modulated by the Markov chain, we have to derive the value function depending on the current state of the Markov chain. The result below in principle provides a formula for
\begin{equation}\label{eq:value_function_selfish}
\vh_z(u,t)\equiv \E(V_z(u,T)) = \mathbb{E}\left(R_T\mathbb{I}_{\tau_u>T}\Big \rvert Z_0 = z\right)
\end{equation}
with $T\sim\text{Exp}(t)$ for any $z\in\{0,1, 0^{\ast}\}$. The most interesting among these is $\vh_0(u,t)$, which refers to the situation where the buffer is empty at the beginning, and we formulate the result in terms of the latter. 
\begin{theo}\label{theo:value_Function_Selfish}
For any $u\ge 0$, the value function under selfish mining as defined above is given by
\begin{equation}
\vh_0(u,t)={\it A_1}\,{{\rm e}^{\rho_1\,u}}+{{\rm e}^{\rho_2\,u}} \left[ {\it A_2}\,
\cos \left( \rho_3\,u \right) +{\it A_3}\,\sin \left( \rho_3\,u \right)\right] +u+C,\label{thisone}
\end{equation}
where $\rho_1$ and $\rho_2\pm i\rho_3$ are the only solutions in $\rho$ of the equation
\begin{equation}\label{solu}
\left( {\it D_1}\,\rho+{\it D_2} \right) {{\rm e}^{2\,\rho\,b}}+{\it D_3
}\,{{\rm e}^{\rho\,b}}= {\it D_4}\,{\rho}^{3}+{\it D_5}\,{\rho}^{2}+{\it 
D_6}\,\rho+{\it D_7}
\end{equation}
with negative real parts. The constants $C, D_1,\ldots, D_7$ are given by 
\begin{eqnarray}
C &=& -\frac{\lambda^{2} {t}^{3} \left\{\lambda pb \left[  \left( q-2 \right) {p
	}^{2}+ \left( 4-2\,q \right) p+q \right] +c \left( {p}^{2}-2\,
	p-1 \right)  \right\}}{{\lambda}^{2}{t}^{2} \left( {
		p}^{2}-2\,p-1 \right) - \left( p+2 \right) \lambda\,t-1}\nonumber\\
&&-\frac { \lambda\,t^2\, \left( 2b{p}^{2}\lambda-
	c\left( p+2 \right)  \right)-ct }{{\lambda}^{2}{t}^{2} \left( {
		p}^{2}-2\,p-1 \right) - \left( p+2 \right) \lambda\,t-1}\label{cconst}\\
D_1 &=& {\frac {pc}{1-p}}\text{, }D_2\hspace{0.2cm}=\hspace{0.2cm} {\frac {p \left( 1+t \left( 2-p \right) \lambda \right) }{t \left( 1-
p \right) }}\text{, }D_3\hspace{0.2cm} =\hspace{0.2cm} \left( 1-p \right) \lambda\,q,\text{ }D_4\hspace{0.2cm} =\hspace{0.2cm}{\frac {{c}^{3}}{{\lambda}^{2} \left( 1-p \right) p}},\nonumber\\
D_5 &=& {\frac { {c}^{2}\left( 3+\lambda\,t \left( p+2 \right)  \right) }{{
 \lambda}^{2}t \left( 1-p \right) p}},\text{ }D_6\hspace{0.2cm} =\hspace{0.2cm}{\frac { \left( 3+\lambda\,t \left( 2p+1 \right)  \right) c
 \left( \lambda\,t+1 \right) }{{t}^{2}{\lambda}^{2} \left( 1-p\right) p}}\text{, and }\nonumber\\
D_7 &=& {\frac {1+\lambda\,t \left( p+2 \right)+ {
			\lambda}^{2}{t}^{2}\left( 2p+1 \right)+{\lambda}^{3}{t}^{3}p \left( p(1-q) ( 2-p)+q \right)  }{{\lambda}^{2}{t}^{3}
 \left( 1-p \right) p}}.\nonumber
\end{eqnarray}
The constants $A_1, A_2$ and $A_3$ are the solution of the linear equation system 
\begin{equation}\label{system}\left(\begin{array}{ccc}
1 & 1 &0 \\
\rho_1 & \rho_2 & \rho_3 \\
B_1 &B_2& B_3
\end{array}\right)\left(\begin{array}{c}
A_1 \\A_2 \\A_3 
\end{array}\right)=\left(\begin{array}{c}
-C \\-1 \\-2b-C -\frac{1}{ct}
\end{array}\right),\end{equation}
with \begin{eqnarray}
	B_1&=&-\rho_1^2+\lambda^2p^2\,e^{2b \rho_1}/c^2,\nonumber\\
	B_2&=&-\rho_2^2+\rho_3^2+\lambda^2p^2\,e^{2b \rho_2}\cos(2b \rho_3)/c^2,\label{biis}\\
		B_3&=&-2\rho_2\rho_3+\lambda^2p^2\,e^{2b \rho_2}\sin(2b \rho_3)/c^2.\nonumber
\end{eqnarray}
\end{theo}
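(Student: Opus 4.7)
The plan is to extend the conditioning argument of Proposition \ref{prop:value_function_protocol} to the Markov-modulated setting induced by the chain $(Z_k)$. Fixing a small $h\in(0,u/c)$ and conditioning on the first event in $(0,h)$---expiration of the exponential horizon at rate $1/t$, or a block discovery at rate $\lambda$ split between Sam and his competitors according to Figure \ref{fig:transition_graph} and the reward rule \eqref{eq:gain_function_selfish}---one obtains integral equations that, upon differentiation at $h=0$, yield a coupled linear system of three advanced ODEs for $\vh_0$, $\vh_1$ and $\vh_{0^*}$. By construction, the advanced arguments that appear are only $\vh_0(u+b,t)$ and $\vh_0(u+2b,t)$, and the system is supplemented with the boundary conditions $\vh_z(0,t)=0$ for $z\in\{0,1,0^*\}$ together with the at-most-linear growth in $u$ inherited from the surplus process.

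The first equation expresses $\vh_1$ as an affine combination of $\vh_0$ and $\vh_0'$; substituting this (and its derivative) into the second equation expresses $\vh_{0^*}$ as a linear combination of $\vh_0$, $\vh_0'$, $\vh_0''$ and $\vh_0(u+2b,t)$; inserting both into the third equation eliminates the auxiliary unknowns and leaves a single linear advanced differential equation of third order for $\vh_0$, with advanced arguments $u+b$ and $u+2b$. Following the template of Proposition \ref{prop:value_function_protocol}, I then try the ansatz $\vh_0(u,t)=Ae^{\rho u}+Bu+C$. Matching polynomial terms forces $B=1$ (in line with the asymptotics $\vh_0(u,t)\sim u$ as $u\to\infty$, where ruin becomes negligible) and determines $C$ uniquely, producing the closed-form expression \eqref{cconst}; matching exponential terms yields exactly the characteristic equation \eqref{solu}, whose coefficients $D_1,\ldots,D_7$ arise through a straightforward but tedious bookkeeping of the elimination.

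Since $\vh_0$ must grow at most linearly, only roots of \eqref{solu} with non-positive real part are admissible. A Rouch\'e-type comparison---using that the cubic right-hand side dominates the exponential left-hand side as $\mathrm{Re}(\rho)\to-\infty$ while the reverse holds as $\mathrm{Re}(\rho)\to+\infty$---is expected to show that the set of such zeros consists of exactly three elements, and reality of the coefficients then forces these to be one real root $\rho_1<0$ together with one complex conjugate pair $\rho_2\pm i\rho_3$. Taking the real part of the associated linear combination then produces the general admissible form \eqref{thisone}.

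Finally, the three coefficients $A_1,A_2,A_3$ are pinned down by the three boundary conditions. The condition $\vh_0(0,t)=0$ is simply $A_1+A_2+C=0$, the first row of \eqref{system}. Using the formula for $\vh_1$ in terms of $\vh_0$ and $\vh_0'$, the condition $\vh_1(0,t)=0$ collapses to $\vh_0'(0,t)=0$, which upon differentiation of \eqref{thisone} at $u=0$ yields the second row. Likewise, using the formula for $\vh_{0^*}$ in terms of $\vh_0,\vh_0',\vh_0''$ and $\vh_0(u+2b,t)$ together with $\vh_0(0,t)=\vh_0'(0,t)=0$, the condition $\vh_{0^*}(0,t)=0$ reduces to an affine relation between $\vh_0''(0,t)$ and $\vh_0(2b,t)$; evaluating both from \eqref{thisone} and collecting terms produces the third row of \eqref{system} with the entries \eqref{biis}. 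I expect the main obstacle to lie in the root analysis: unlike the single-exponential equation \eqref{eq:equation_rho}, here a cubic polynomial is balanced against two distinct exponential terms, so arguing that exactly three zeros lie in the closed left half-plane (and ruling out further complex pairs) requires a careful complex-analytic argument; a secondary difficulty is simply the sheer volume of symbolic algebra needed to verify the explicit forms of $C$ and $D_1,\ldots,D_7$.
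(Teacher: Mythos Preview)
Your proposal is correct and follows essentially the same route as the paper: the same conditioning/elimination to reduce the coupled system to a single third-order advanced ODE for $\vh_0$, the same affine particular solution $u+C$, the same characteristic equation \eqref{solu} for the homogeneous part, and the same translation of $\vh_z(0,t)=0$ into the three conditions $\vh_0(0,t)=0$, $\vh_0'(0,t)=0$, and an affine relation between $\vh_0''(0,t)$ and $\vh_0(2b,t)$ that yields \eqref{system}. The paper's root-counting is exactly the ``careful complex-analytic argument'' you anticipate: it first applies the argument principle to the cubic $f(\rho)=D_4\rho^3+\cdots+D_7$ on a left half-plane semi-circular contour (verifying $D_4D_7<D_5D_6$ to get three windings), and then invokes Rouch\'e via the inequality $|g|<|f|$ on the imaginary axis, which in turn requires checking several algebraic inequalities among the $D_i$; the existence of one real negative root is then obtained by an intermediate-value argument on $f-g$ restricted to the real line.
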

\noindent The proof is delegated to Appendix \ref{Appendix}. 

\begin{rem}\label{rem32}\normalfont 
If the connectivity parameter $q$ equals $1$, then the selfish miner will not waste any blocks by withholding them first, he will just append them later. In that sense his surplus process, and correspondingly the value function, is very similar to the one when following the protocol, with the subtle difference that by receiving the rewards later, the likelihood of ruin is increased (reducing the value function). When the initial capital increases indefinitely, that effect evaporates and then the only remaining difference between the two is that for the expected surplus at the evaluation time horizon $T\sim Exp(t)$ the initial state still matters for the selfish miner, since starting -- with the same initial capital $u$ -- in state $0^*$ or $1$ will lead to overall one more block than when starting in state $0$ (i.e., with an empty buffer). This latter difference should then itself disappear when $t$ increases indefinitely. Checking this, $\vh(u,t)-u$ in \eqref{eq:value_function_exp_time_horizon_formula} converges to $(p\lambda b-c) t$, and $\vh_0(u,t)-u$ in \eqref{thisone} converges to $C$, as $u\to\infty$. While $C$ defined in \eqref{cconst} is a rather involved function of the parameter $t$, a simple calculation shows that, for $q=1$ (and only then!), $C/t\to p\lambda b-c$ as $t\to\infty$, so that the two expressions are indeed asymptotically equivalent. For a numerical comparison of $\vh_0(u,t)$ and $\vh(u,t)$ for finite values of $u$ and $t$, see Section \ref{sec:numerical_illustrations}. \hfill$\Box$
\end{rem}

\noindent Finally, the ruin probability for the selfish miner up to an exponential time horizon can be obtained in an analogous (and slightly simpler) way. 	
\begin{coro}\label{coro2}
	For any $u\ge 0$, the ruin probability $\ph_0(u,t)$ up to an exponential time horizon $T\sim \text{Exp}(t)$ for a selfish miner starting in state $0$ is given by
\begin{equation}\label{ruin_selfish}
\ph_0(u,t) =
{\it C_1}\,{{\rm e}^{\rho_1\,u}}+{{\rm e}^{\rho_2\,u}} \left[ {\it C_2}\,
\cos \left( \rho_3\,u \right) +{\it C_3}\,\sin \left( \rho_3\,u \right)\right],
\end{equation}
where $\rho_1$ and $\rho_2\pm i\rho_3$ are (again the same) only solutions in $\rho$ of Equation \eqref{solu} with negative real parts. 
The constants $C_1, C_2$ and $C_3$ are the solution of the linear equation system 
\begin{equation}\label{systemruin}\left(\begin{array}{ccc}
1 & 1 &0 \\
\rho_1 & \rho_2 & \rho_3 \\
B_1 &B_2& B_3
\end{array}\right)\left(\begin{array}{c}
C_1 \\C_2 \\C_3 
\end{array}\right)=\left(\begin{array}{c}
1 \\-\frac{1}{ct}\\\frac{\lambda^2p^2}{c^2}-\frac{1}{c^2t^2}
\end{array}\right),\end{equation}
where $B_1,B_2,B_3$ are again given by \eqref{biis}.
\end{coro}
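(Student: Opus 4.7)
The plan is to mimic the derivation of Corollary \ref{next} from the protocol case, now in the Markov-modulated setting. I would condition on the events in a small interval $(0,h)$ separately for each initial state $z\in\{0,1,0^{\ast}\}$: either no block is found and $T>h$ (with transition to $\ph_z(u-ch,t)$), or $T$ arrives before any block (contributing nothing to ruin for $u>ch$), or a block is discovered at time $s<h$ causing a state transition. Differentiating at $h=0$ would produce a coupled system of three advanced functional differential equations, of the form
\begin{align*}
c\,\ph_0'(u,t) + (p\lambda+1/t)\,\ph_0(u,t) - p\lambda\,\ph_1(u,t) &= 0,\\
c\,\ph_1'(u,t) + (\lambda+1/t)\,\ph_1(u,t) - (1-p)\lambda\,\ph_{0^{\ast}}(u,t) - p\lambda\,\ph_0(u+2b,t) &= 0,\\
c\,\ph_{0^{\ast}}'(u,t) + (\lambda+1/t)\,\ph_{0^{\ast}}(u,t) - p\lambda\,\ph_0(u+2b,t) - (1-p)q\lambda\,\ph_0(u+b,t) - (1-p)(1-q)\lambda\,\ph_0(u,t) &= 0,
\end{align*}
together with the initial conditions $\ph_z(0,t)=1$ and boundary conditions $\ph_z(u,t)\to 0$ as $u\to\infty$.

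Next I would observe that this system is precisely the homogeneous counterpart of the one derived in the appendix for $(\vh_0,\vh_1,\vh_{0^{\ast}})$ in the proof of Theorem \ref{theo:value_Function_Selfish}. Consequently, the exponential characteristic equation is exactly \eqref{solu}, and the root analysis already performed there applies verbatim: there are exactly three roots with negative real part, namely the real root $\rho_1$ and the complex-conjugate pair $\rho_2\pm i\rho_3$. The decay at infinity forces discarding all roots with positive real part, leaving the three-parameter ansatz \eqref{ruin_selfish} with unknowns $C_1,C_2,C_3$.

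To determine the three constants, I would impose three conditions at $u=0$. The first is the initial condition $\ph_0(0,t)=1$, which evaluates the ansatz at zero and gives $C_1+C_2=1$ (row one of \eqref{systemruin}). The second comes from evaluating the $\ph_0$-equation at $u=0$: using $\ph_0(0,t)=\ph_1(0,t)=1$ one reads off $\ph_0'(0,t)=-1/(ct)$ (row two). For the third, I would differentiate the $\ph_0$-equation in $u$, evaluate at $u=0$, and use the $\ph_1$-equation at $u=0$ to eliminate $\ph_1'(0,t)$ in favour of $\ph_0(2b,t)$; a short simplification yields
\[
\ph_0''(0,t) - \frac{\lambda^2 p^2}{c^2}\,\ph_0(2b,t) \;=\; -\frac{\lambda^2 p^2}{c^2} + \frac{1}{c^2 t^2}.
\]
Substituting the ansatz on the left, with $\ph_0''(0,t)=\rho_1^2 C_1+(\rho_2^2-\rho_3^2)C_2+2\rho_2\rho_3 C_3$ and $\ph_0(2b,t)$ expanded in the same basis, regroups the coefficients precisely into $B_1,B_2,B_3$ from \eqref{biis}, producing row three.

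The main obstacle is not the algebra but the structural fact that \eqref{solu} has exactly three roots with negative real part, so that the ansatz is exhaustive and the $3\times 3$ system \eqref{systemruin} is non-singular. This is the delicate part of the appendix argument for Theorem \ref{theo:value_Function_Selfish} and can be invoked here without modification; everything else reduces to substitution and bookkeeping identical in spirit to the protocol case treated in Corollary \ref{next}.
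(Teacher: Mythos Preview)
Your proposal is correct and follows essentially the same route as the paper: you derive the homogeneous version of the coupled system from Theorem~\ref{theo:value_Function_Selfish} by conditioning on a small time interval, reuse the characteristic-root analysis from the appendix to justify the three-term ansatz, and then obtain the boundary conditions $\ph_0(0,t)=1$, $\ph_0'(0,t)=-1/(ct)$ and $\ph_0''(0,t)=\frac{\lambda^2p^2}{c^2}(\ph_0(2b,t)-1)+\frac{1}{c^2t^2}$ exactly as the paper does, leading to \eqref{systemruin}. Your derivation of the third boundary condition via differentiating the $\ph_0$-equation and eliminating $\ph_1'(0,t)$ through the $\ph_1$-equation is in fact slightly more explicit than what the paper spells out, but the content is identical.
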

\noindent The proof is delegated to Appendix \ref{AppendixB}.

\section{Numerical illustrations and sensitivity analysis}\label{sec:numerical_illustrations}
In this section, we are going to use the formulas for the ruin probabilities and expected surplus in case ruin did not occur derived in the previous sections for numerical illustrations and in particular also for sensitivity analysis with respect to the involved parameters. We will first study the profitability of following the protocol in Section \ref{sub:numerical_illustration_pro} and proceed to the profitability of selfish mining in Section \ref{sub:numerical_illustration_self}. We define the expected profit as the expected surplus under ruin constraints minus the initial capital.\\

\noindent Throughout this section, the time unit is one hour. Since the Bitcoin blockchain protocol is designed to ensure that one block of confirmed transactions is added to the blockchain about every ten minutes, this renders the block arrival intensity in our model to be $\lambda = 6$. The reward $b$ is determined by the number $n_{BTC}$ of bitcoins earned when finding a block and the price $\pi_{BTC}$ of the bitcoin. For the illustrations in this paper, we use the data of January 1, $2020$, when $n_{BTC}=12.5$ and $\pi_{BTC} =\$7,174.74$\footnote{Source:\href{https://www.blockchain.com/}{blockchain.com}}, so that the reward amounts to
$$
b = n_{BTC}\times \pi_{BTC} = \$89,684.30.
$$
We assume that the operational cost of mining reduces to the electricity consumed when computing hashes. On January 1,  $2020$, the yearly consumption of the network was estimated by the Cambridge Bitcoin Electricity Consumption Index\footnote{Source:\href{https://www.cbeci.org/}{CBEI}} to $72.1671$ TWh.\footnote{The choice of this concrete date for the illustrations in this paper is somewhat arbitrary. Choosing another date and estimate of the Bitcoin price will, however, not crucially change the conclusions as long as the reward for finding a block compensates the operational cost.    
} We denote by 
$$
W = \frac{72.1671\times 10^9}{365.25\times 24}
$$
the electricity consumption of the network expressed in kWh. We let the operational cost $c$ of a given miner be proportional to its share $p\in(0,1)$ of the network computing power with 
$$
c = p\times W \times \pi_W,
$$  
where $\pi_W$ denotes the price of the electricity where the miner is located, expressed in USD per kWh. 

\subsection{Expected profit computations for a miner following the protocol}\label{sub:numerical_illustration_pro}
We assume now the assumptions of risk model \eqref{eq:surplus_protocol} for the wealth of a miner that follows the protocol. The net profit condition $p\lambda b-c\geq0$ then translates, in terms of the price of electricity, to 
$$
\pi_W<\frac{\lambda b}{W} = 0.065.
$$
In the sequel, unless otherwise stated, we set $\pi_W = 0.06$ (so the net profit condition is satisfied) and assume the miner to have a share 
$p = 0.1$ of the network hashpower. Figure \ref{sub:pro_rp_MC} displays the infinite-time and finite-time ruin probability (with chosen time horizon to be 6 hours) as a function of initial wealth using the formulas derived in Proposition \ref{prop:ruin_proba_and_value_func}. \\
\begin{figure}[h!]
	\begin{center}
		\subfloat[$\psi(u)$ (solid), $\psi(u, 6h)$ (dashed) and $\ph(u,6h)$ (dotted) as a function of initial wealth $u$.]{
			\includegraphics[width=0.45\textwidth]{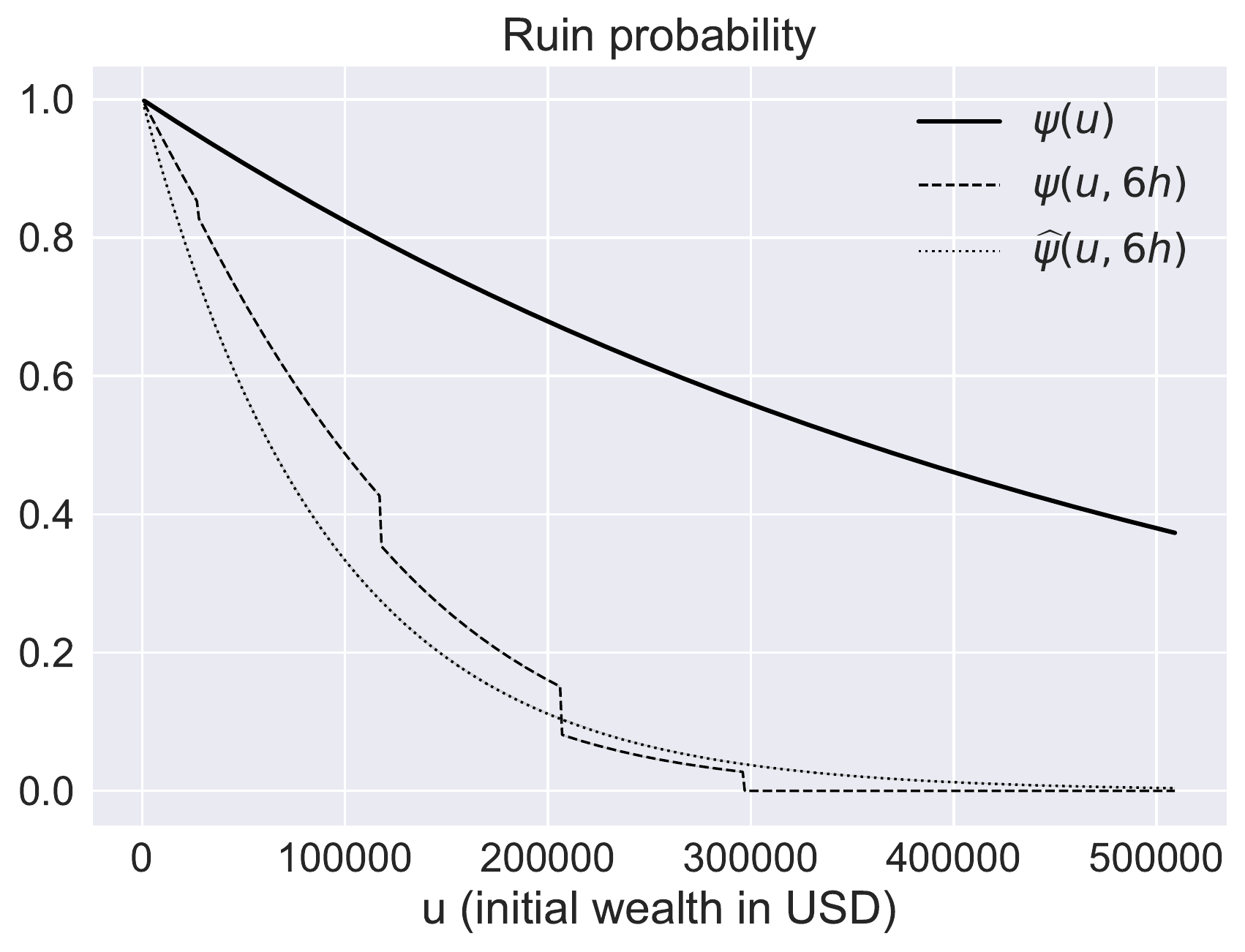}
			\label{sub:pro_rp_MC}
		}
		\hskip1em
		\subfloat[$\mathbb{E}(R_{6h})-u$ (solid) , $V(u,6h)-u$ (dashed), $\vh(u,6h)-u$ (dotted) and as a function of initial wealth $u$.]{
			\includegraphics[width=0.45\textwidth]{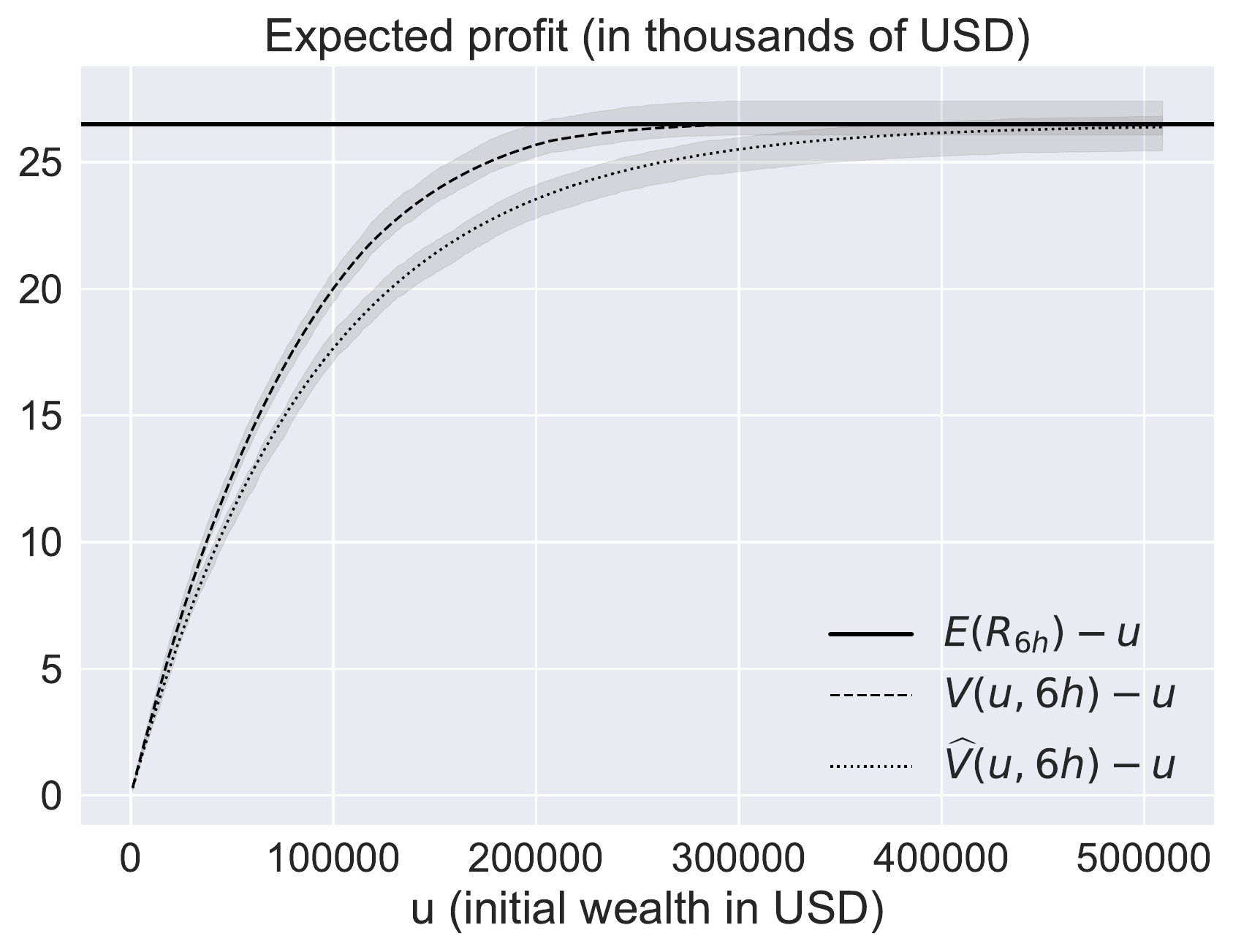}
			\label{sub:pro_rev_MC}
		}
		\caption{Ruin probabilities and expected profit as a function of initial wealth for a miner that follows the protocol with hashpower $p=0.1$ and electricity price $\pi_W=0.06$, together with the confidence bands of a Monte-Carlo simulation.}
		\label{fig:pro_ruin_rev_MC}
	\end{center}
\end{figure}

\noindent Figure \ref{sub:pro_rev_MC} plots the expected profit at that deterministic time horizon $t=6$ hours (recall that the expected profit corresponds to the expected surplus under ruin constraints net of the initial wealth $u$) using formula \eqref{eq:value_function_finite_time_prop} and compares it to $\mathbb{E}(R_t)-u = (p\lambda b -c)t$, which is the expected surplus, again net of the initial wealth, of a miner without that ruin constraint (we will refer to that latter quantity as the \textit{target profitability} in the sequel). One sees that with increasing initial capital the two quantities get closer, as the event of ruin then becomes more and more unlikely. In Figure \ref{sub:pro_rev_MC} we also give the quantity $\vh(u,t)$ as derived in Proposition \ref{prop:value_function_protocol} for an exponential time horizon $T$ with expected value 6 hours, net of initial capital $u$ (dashed line). When comparing the deterministic time horizon to the random time horizon with the same expected value, it is clear that the exponential distribution overweights time horizons below the expected value of 6 hours, so that it is intuitive that the expected gain is below the one for the deterministic time horizon (as, due to the
net profit condition, the expected income increases with the time horizon). But this difference diminishes for large values of $u$, as they both approach the target profitability. In order to double-check the validity of the obtained explicit formulas, we also plot in Figure \ref{fig:pro_ruin_rev_MC} asymptotic 95\%-confidence bands of an independent Monte Carlo simulation of 
all the quantities using 250'000 sample paths, and the exact formulas are nicely in the center of these bands (in Figure \ref{sub:pro_rp_MC} these bands are so narrow that one can barely observe them with the naked eye). Another way to read Figure \ref{sub:pro_rp_MC} is that it quantifies the amount of initial capital $u$ needed in order to ensure survival within 6 hours, or at any time in the future, for any given probability. Note that as $u$ increases, more and more terms in the finite-sum expression \eqref{eq:finite_time_ruin_proba} for $\psi(u,t)$ disappear, causing the jumps in the curve. Figure \ref{sub:pro_rp_MC} also suggests that switching from a deterministic to a random exponential time horizon with the same mean does not impact the resulting ruin probability substantially. This further justifies the switch from a deterministic to a random time horizon with its enormous computational advantages, particularly for the selfish mining case. The time horizon of 6 hours in Figure \ref{fig:pro_ruin_rev_MC} was in part chosen to ensure a satisfactory accuracy when evaluating formula \eqref{eq:value_function_finite_time_prop}. 
While there is no issue for $\psi(u)$ and $\psi(u,t)$, the recursive computation of Abel-Gontcharov polynomials in that formula for $V(u,t)$ becomes numerically unstable for larger time horizons.  \\

\noindent Figure \ref{fig:pro_ruin_rev_u} displays the ruin probability and expected profit for the larger time horizons 1 day, 1 week and 2 weeks for the otherwise same set of parameters. Figure \ref{sub:pro_rp_u}  illustrates quantitatively how  $\ph(u,t)$ approaches the infinite-time ruin probability $\psi(u)$ as the expected time horizon $t$ increases.\\

\begin{figure}[h!]
  \begin{center}
    \subfloat[$\ph(u,t)$ as a function of initial wealth $u$ for varying time horizon: (dotted) $t=1d$, (dashed) $t=1w$, (dash-dotted) $t = 2w$, and (solid) $t=\infty$.]{
      \includegraphics[width=0.45\textwidth]{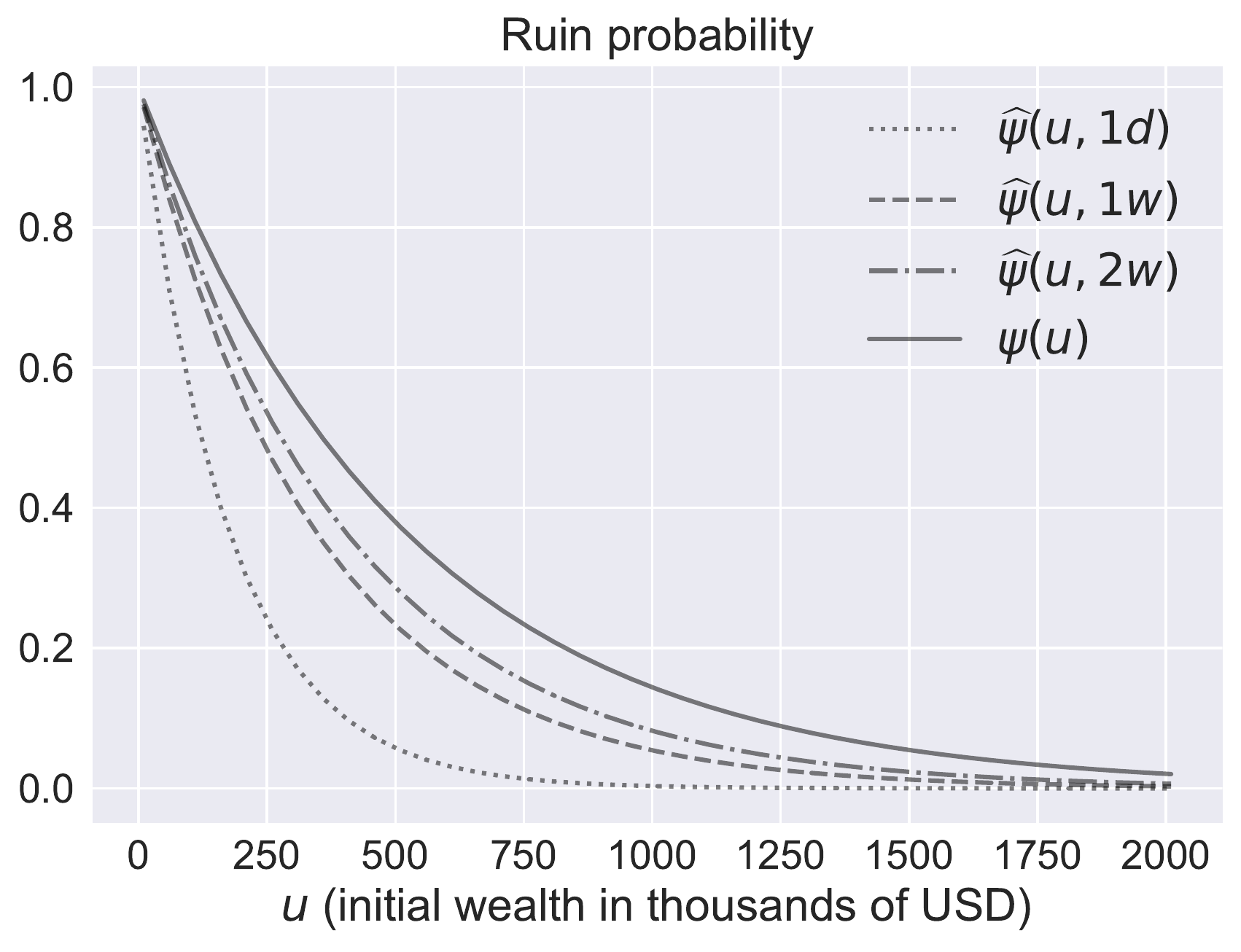}
      \label{sub:pro_rp_u}
                         }
                         \hskip1em
    \subfloat[$\vh(u,t)-u$ as a function of initial wealth $u$ for varying time horizons: (dotted) $t = 1d$, (dashed) $t = 1w$, and (dash-dotted) $t=2w$.]{
      \includegraphics[width=0.45\textwidth, height=5.8cm]{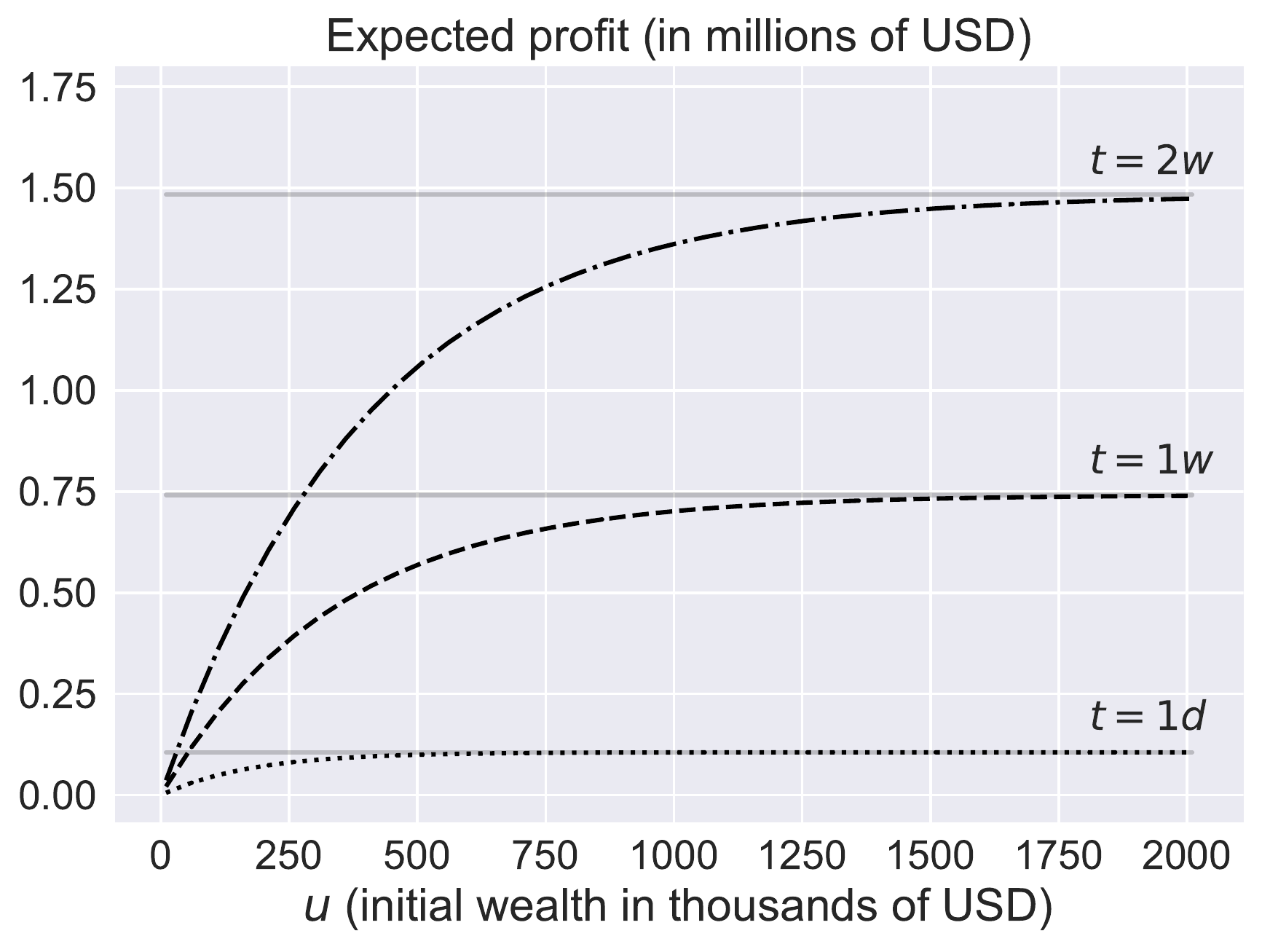}
      \label{sub:pro_rev_u}
                         }
    \caption{Ruin probabilities and expected profit over an exponential time horizon as a function of initial wealth for varying time horizon with hashpower $p=0.1$ and electricity price $\pi_W = 0.06$.}
    \label{fig:pro_ruin_rev_u}
  \end{center}
\end{figure}

\noindent Figure \ref{sub:pro_rev_u} depicts the expected profit over an exponential time horizon with varying mean, using the explicit formula of Proposition \ref{prop:value_function_protocol}. It also contains a comparison with the respective target profitability without the ruin constraint for each case. Note that the latter increases with the time horizon due to the net profit condition. For smaller time horizons, the risk of going to ruin is quickly mitigated when increasing the initial reserve, which makes the convergence of the expected profit towards the targeted one swifter.\\

\noindent Let us now investigate the effect of the electricity price and the hashpower on the expected surplus under ruin constraints. We consider a random time horizon with mean 2 weeks now, and compute the expected surplus for various initial wealth levels. Figure \ref{sub:pro_rev_pkW} plots the respective surplus $\vh(u,t)$ determined in Proposition \ref{prop:value_function_protocol} as a function of the 
electricity price for a fixed hashpower $p=0.1$ (in this plot we do not subtract $u$ in order to show the decline of $\vh(u,t)$ more prominently). Rising electricity prices clearly make mining less profitable, and reduces the surplus to virtually zero for $\pi_W$ around $0.08$ even for quite large initial capital. Figure \ref{sub:pro_rev_p} then displays the expected profit as a function of hashpower, when the electricity price is fixed at $\pi_W = 0.06$. One observes that the expected profit increases almost linearly in the hashpower when the net profit condition holds.    \\

\begin{figure}[h!]
  \begin{center}
    \subfloat[$\vh(u,2w)$ as a function of electricity price $\pi_W$ with hashpower $p=0.1$ for varying initial wealth: (dotted) $u =1.1$, (dashed) $u = 4.1$, and (dash-dotted) $u = 8.1$.]{
      \includegraphics[width=0.45\textwidth]{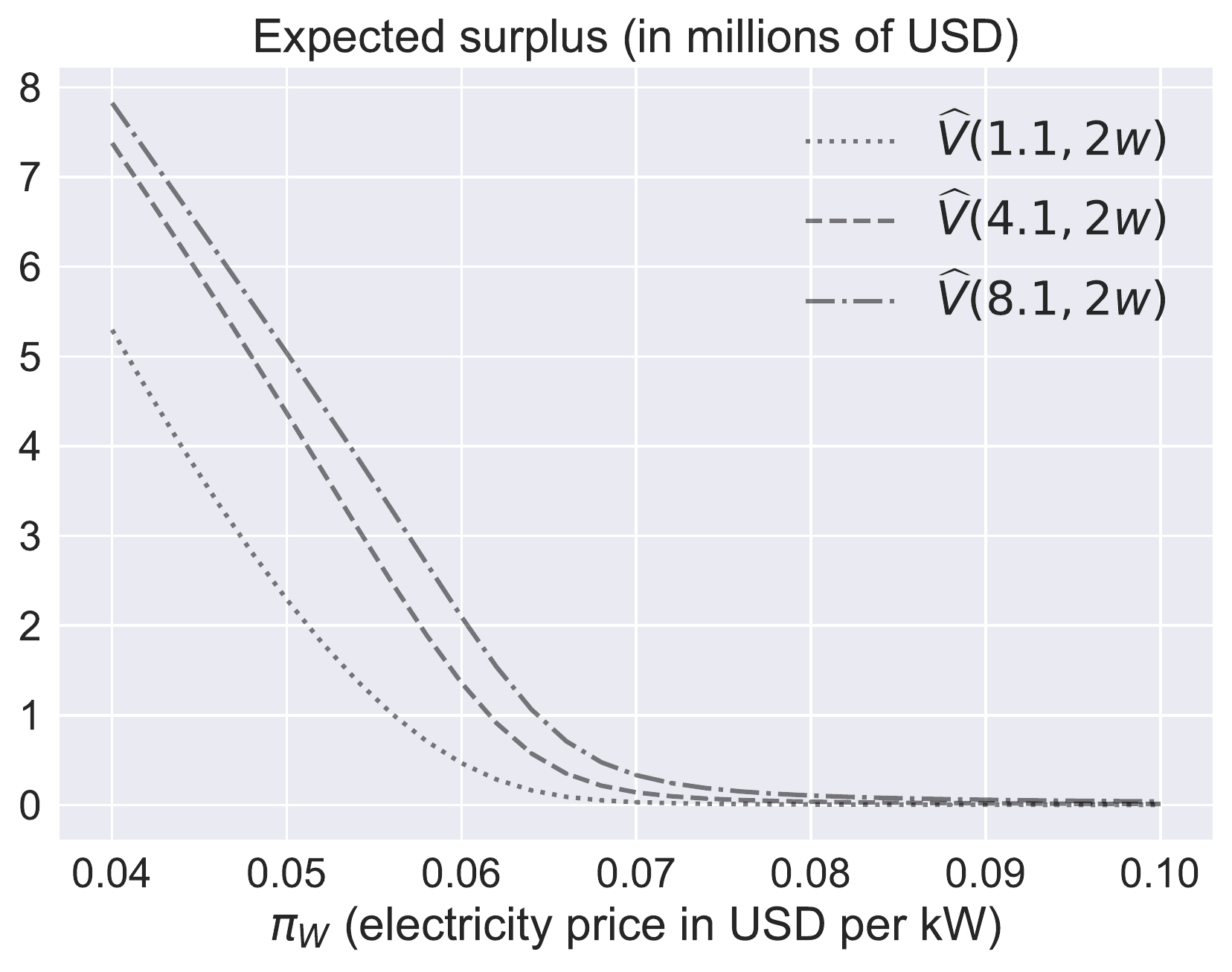}
      \label{sub:pro_rev_pkW}
                         }
                         \hskip1em
    \subfloat[$\vh(u,2w)-u$ as a function of hashpower $p$ with electricity price $\pi_W = 0.06$ for varying initial wealth: (dotted) $u =1.1$, (dashed) $u = 4.1$, and (dash-dotted) $u = 8.1$.]{
      \includegraphics[width=0.45\textwidth, height=5.7cm]{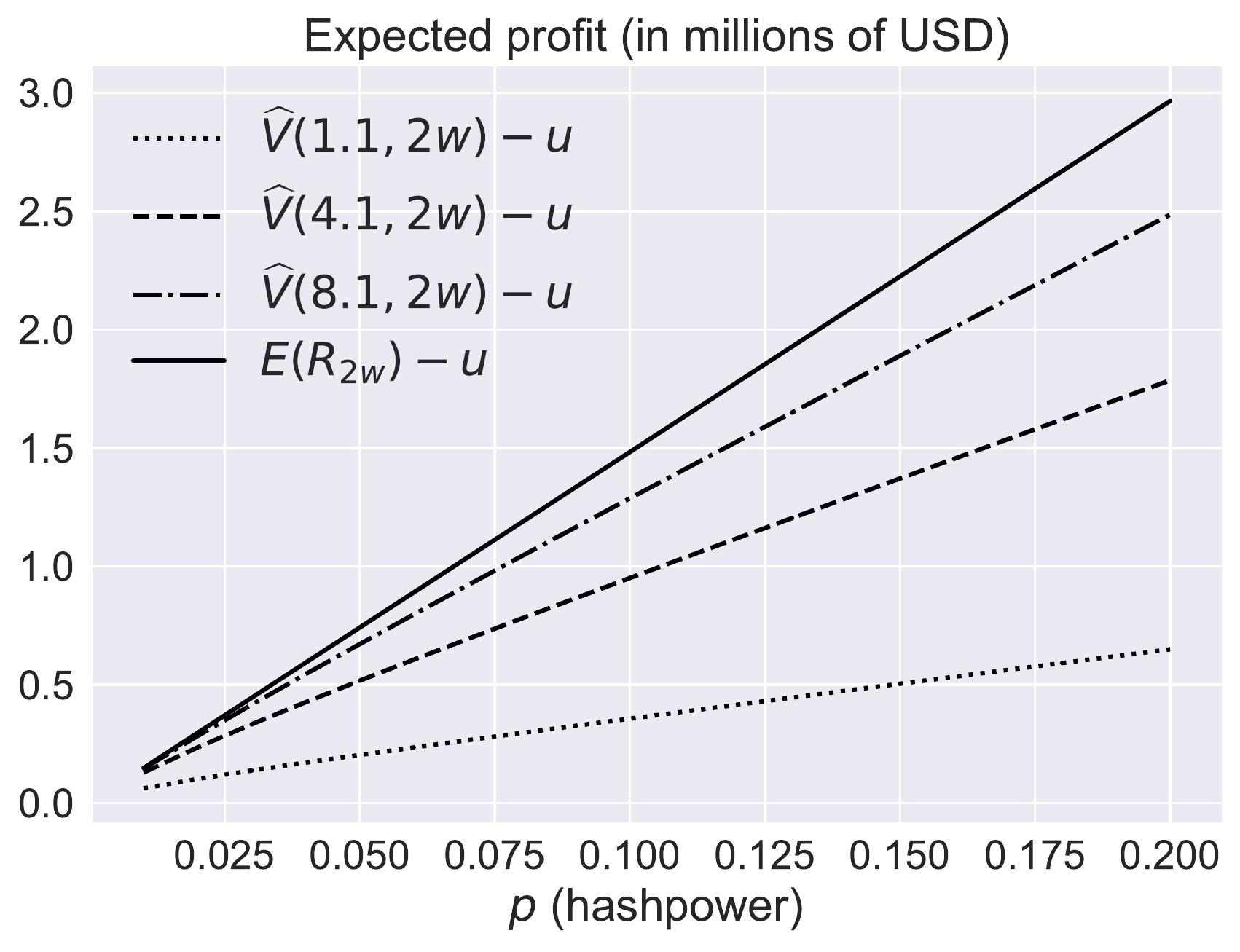}
      \label{sub:pro_rev_p}
                         }
    \caption{Expected surplus as a function of electricity price (left) and expected profit as a function of  hashpower (right) under ruin constraints over an exponential time horizon (mean 2 weeks), for varying initial wealth (in tens of thousands of USD).}
    \label{fig:pro_rev_pkW_p}
  \end{center}
\end{figure}

\noindent The magnitude of the reward for discovering a block makes mining very profitable in expectation, but the rarity of such an event also makes it very risky. Miners who seek a more steady income are then incentivized to form teams, called \textit{mining pools} (Rosenfeld \cite{Ro11}), which reduce the variance of the reward process. Our results may be used to gain insight into how beneficial it is for a miner to join a mining pool. Consider $100$ miners that start looking for blocks in isolation. Each of them owns $1\%$ of the computing power and has an initial capital of amount $\$10,000$. We wish to compare the ruin probability and the expected profit of mining pools of increasing size. We simply assume that when two miners join forces, the resulting pool then has $2\%$ of the overall hashpower and $u^*=\$20,000$ as initial capital. Figure \ref{fig:pro_rev_rp_pool} displays the ruin probability and expected profit per miner over an exponential time horizon as a function of the size of the mining pool for varying time horizon $t\in\{1d, 1w, 2w\}$. The expected profit per miner plotted in Figure \ref{sub:pro_rev_pool} corresponds to the expected profit of the mining pool divided by the number of its members. One clearly can see the advantage of joining a pool, when the risk of ruin is added to the analysis.\\

\begin{figure}[h!]
  \begin{center}
    \subfloat[$\ph(u^*,t)$ as a function of the size of the mining pool for time horizon $t=1d$ (dotted),  $t=1w$ (dashed) and $t = 2w$ (dash-dotted).]{
      \includegraphics[width=0.45\textwidth]{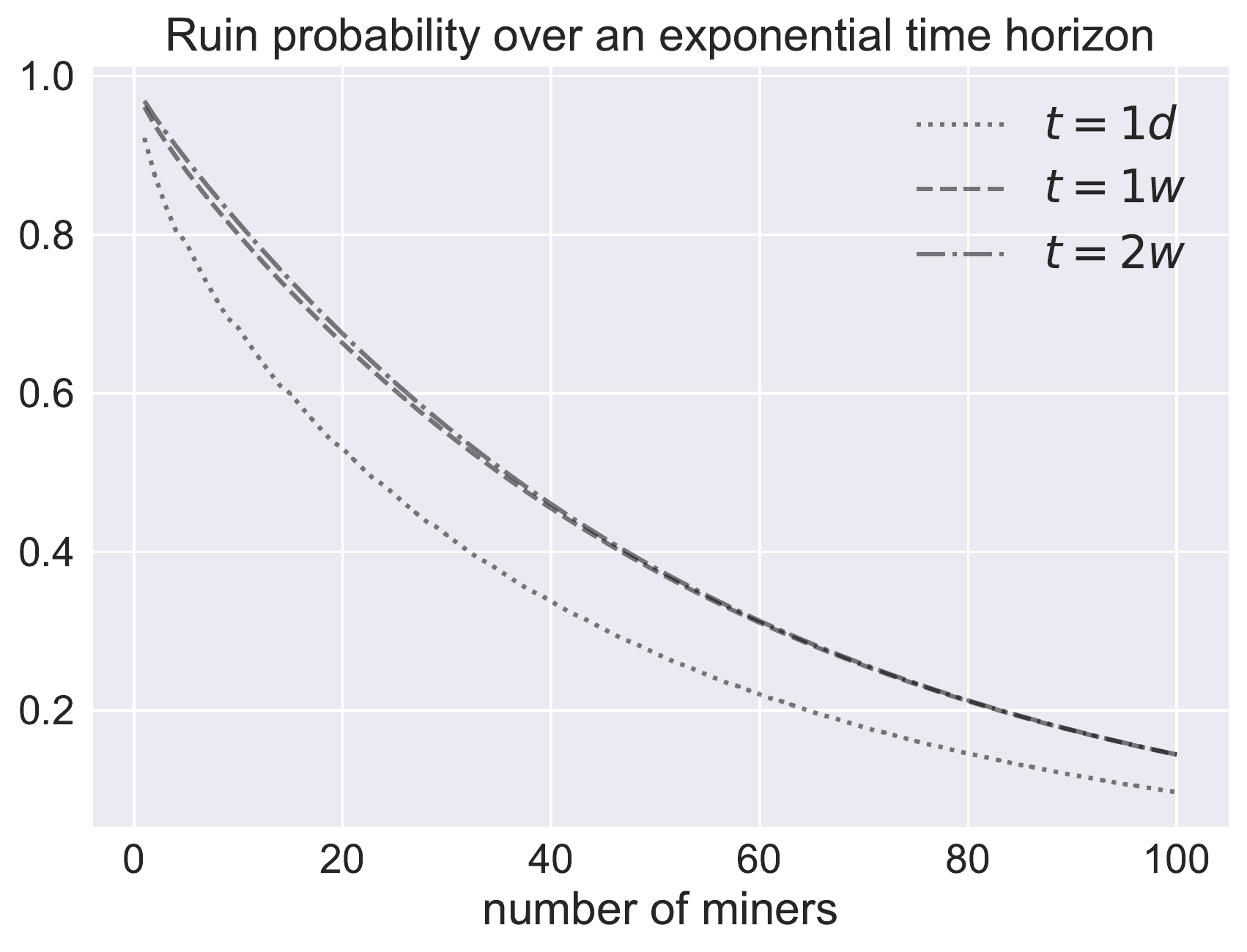}
      \label{sub:pro_rp_pool}
                         }
                         \hskip1em
    \subfloat[$\vh(u^*,t)-u^*$ divided by the number of miners as a function of that number with  $t=1d$ (dotted), $t=1w$ (dashed) and $t = 2w$ (dash-dotted).]{
      \includegraphics[width=0.45\textwidth, height=5.7cm]{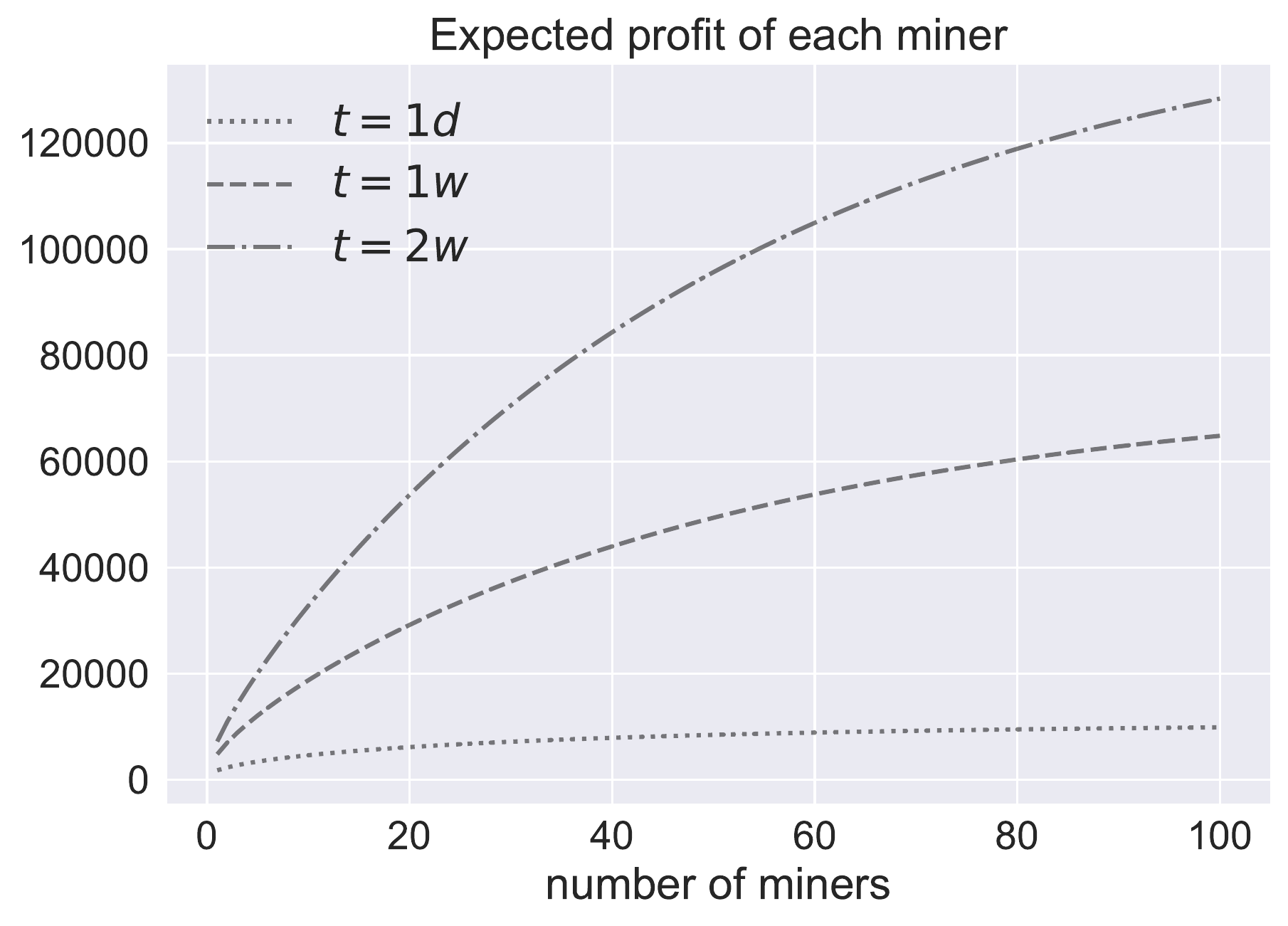}
      \label{sub:pro_rev_pool}
                         }
    \caption{Ruin probability and expected profit per miner over an exponential time horizon as a function of size of the mining pool for varying time horizon. }
    \label{fig:pro_rev_rp_pool}
  \end{center}
\end{figure}

\noindent Note that the proportional reward system considered here is slightly simplistic, as it does not prevent miners from switching the pool, which can be an issue in practice, see for instance Rosenfeld \cite{Ro11} and Lewenberg et al.\ \cite{10.5555/2772879.2773270}. 

\subsection{Expected profit computations for a selfish miner}\label{sub:numerical_illustration_self}
We now turn to the study of the profitability of selfish mining using model \eqref{eq:surplus_selfish}. Assuming again that $c = p\times \pi_W\times W$, the net profit condition \eqref{netpro} is satisfied if the electricity price is bounded by
$$ 
\pi_W \leq \frac{b\lambda}{pW}\frac{qp(1-p)^2 + 4p^2-2p^3}{1+2p-p^2},
$$
where $p$ denotes the hashpower and $q$ is the connectivity parameter. Figure \ref{fig:nbc_selfish_q} displays the net profit condition frontier as a function of hashpower and  electricity price for various values of $q$. For instance, one can read off from the plot that for a connectivity parameter  $q=0.5$ and hashpower $p=0.1$ the net profit condition is satisfied whenever $\pi_W\leq 0.43$.

\begin{figure}[h!]
  \begin{center}

  \includegraphics[width=0.45\textwidth]{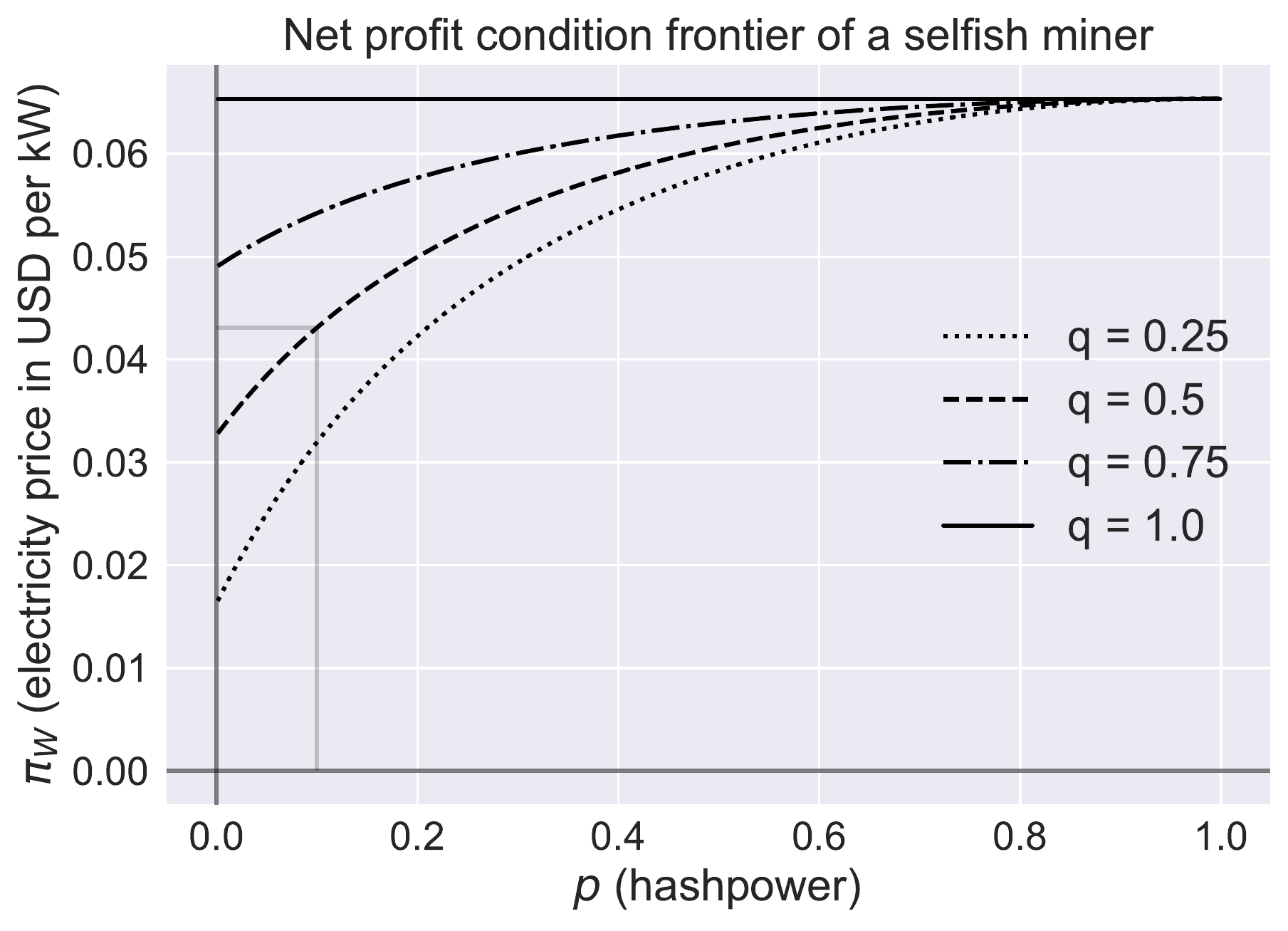}
    \caption{Net profit condition frontier of a selfish miner as a function of electricity price and hashpower for various values of the connectivity parameter $q$.}
    \label{fig:nbc_selfish_q}
  \end{center}
\end{figure}

\noindent Note that the net profit condition for $q=1$ is identical to that of a miner following the protocol, since it refers to the stationary distribution (cf. Remark \ref{rem32} for further details). \\

\noindent Figure \ref{fig:self_rev_u_MC} compares the ruin probability and expected profit of a selfish miner (as determined by Theorem \ref{theo:value_Function_Selfish}) and a miner following the protocol, plotted as a function of initial wealth for $q = 0.5$, $p = 0.1$ and $\pi_W = 0.04$ (so the net profit condition holds, cf.\  Figure \ref{fig:nbc_selfish_q}). 
\begin{figure}[h!]
	\begin{center}
	\subfloat[$\ph(u,t)$ (solid) and $\ph_0(u,t)$ (dashed) as a function of initial wealth $u$.]{
      \includegraphics[width=0.45\textwidth]{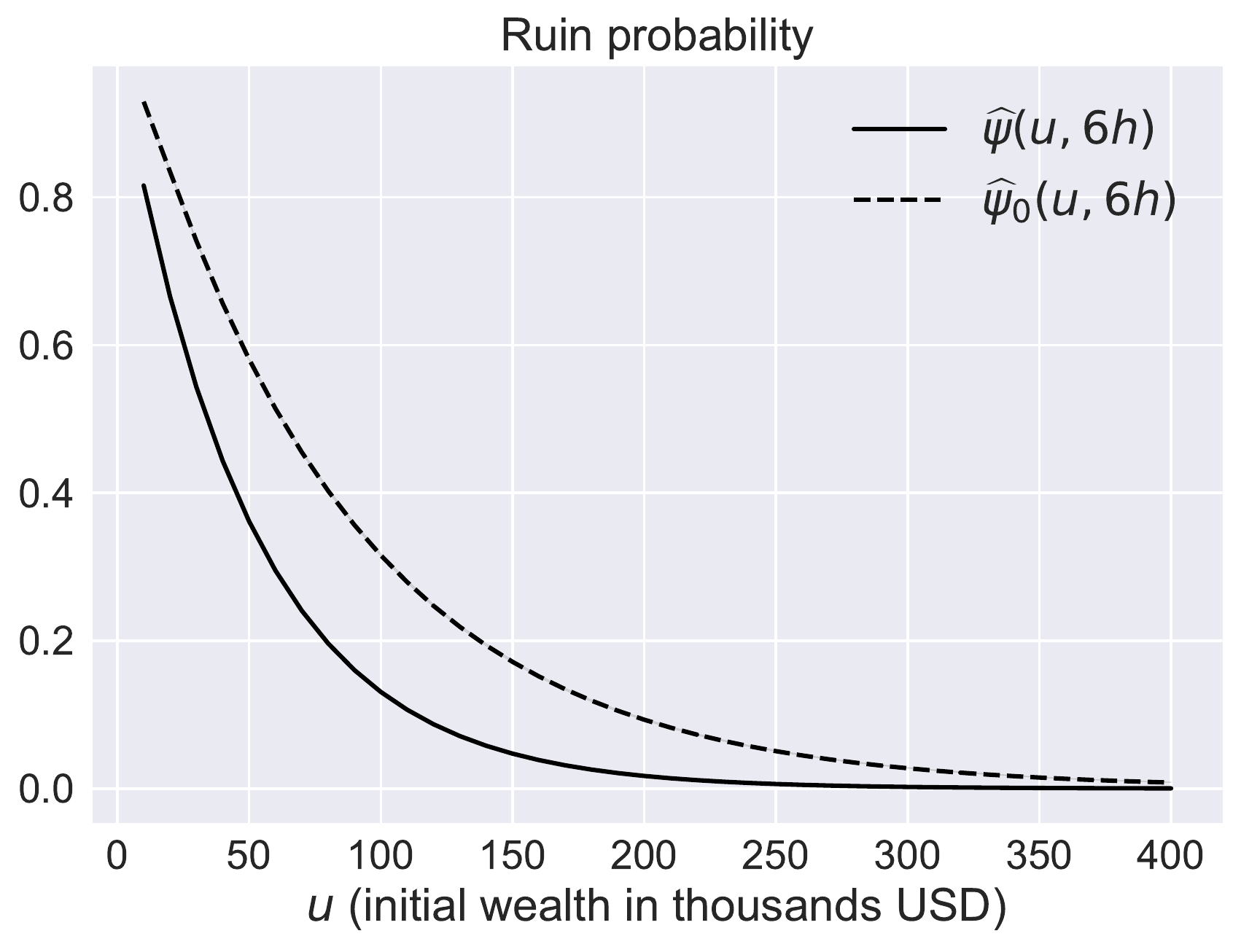}
      \label{sub:self_rp_u_MC}
                         }
                         \hskip1em
    \subfloat[$\vh(u,t)-u$ (solid) and $\vh_0(u,t)-u$ (dashed) as a function of initial wealth $u$.]{
      \includegraphics[width=0.45\textwidth, height=5.7cm]{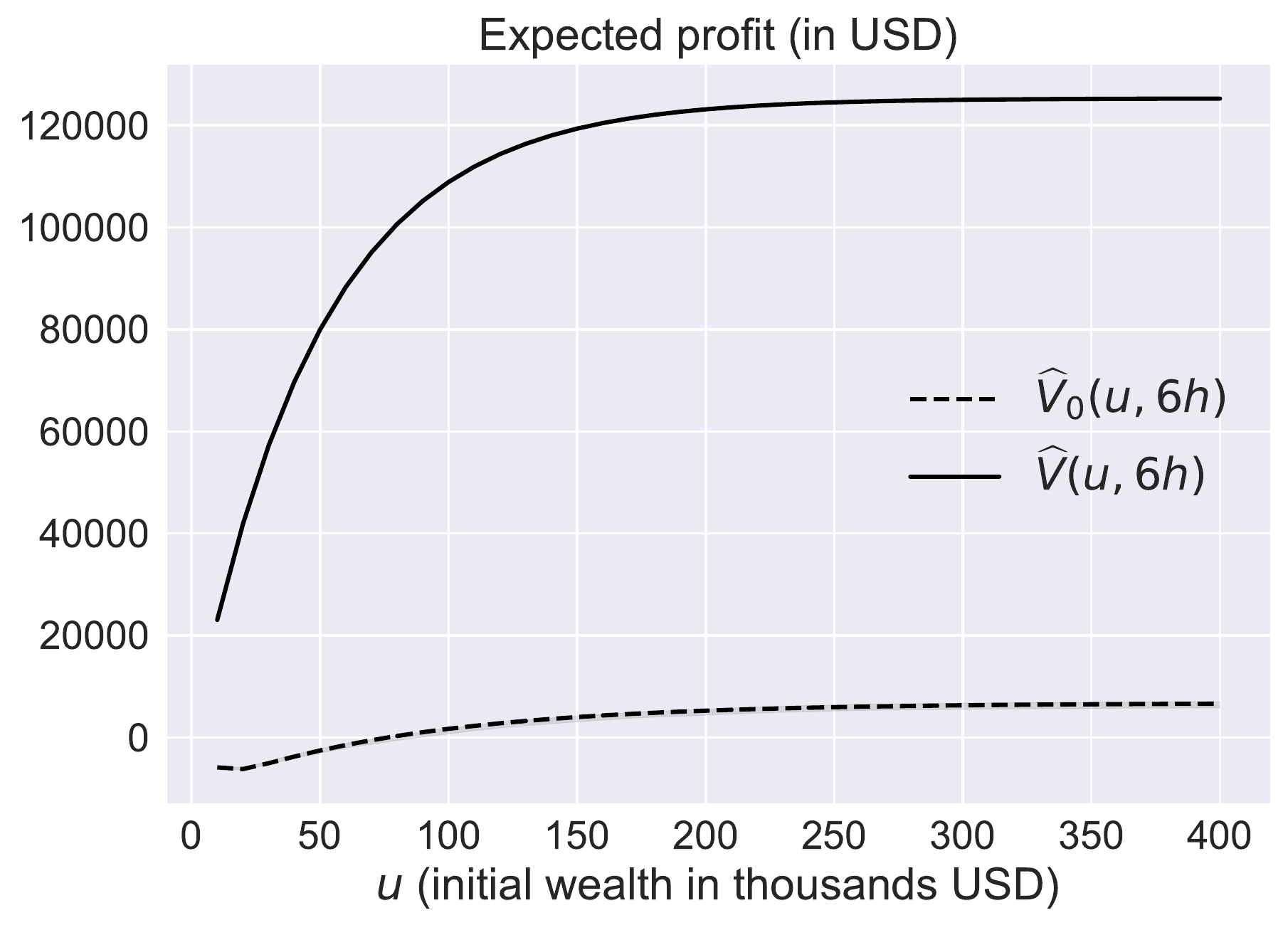}
      \label{sub:self_rev_u_MC}
                         }
	\caption{Ruin probability and expected profit over an exponential time horizon $T\sim\text{Exp}(6h)$ as a function of initial wealth for a miner following the protocol and for a selfish miner with electricity price $\pi_W = 0.04$, hashpower $p=0.1$, and connectivity $q = 0.5$, together with the confidence bands of a Monte Carlo simulation.}
	\label{fig:self_rev_u_MC}
	\end{center}
\end{figure}
Here the exponential time horizon is chosen to have a mean of 6 hours. The plot also gives (the very narrow) 95\% confidence bands of an independent Monte Carlo simulation with 250'000 sample paths for $\ph_0(u)$ and $\vh_0(u,t)$, which nicely shows that the exact formulas are within these bounds. Note the substantial gap between the honest and the selfish miner's ruin probability in Figure \ref{sub:self_rp_u_MC} and expected profit in Figure \ref{sub:self_rev_u_MC}.\\

\noindent Let us now look into the impact of the connectivity on the expected profit of a selfish miner. The connectivity depends on the topography of the network as well as on the information propagation delay between the nodes. Göbel et al.\ \cite{Goebel2016} showed that the connectivity strongly influences the relative revenue of selfish miners. Figure \ref{fig:self_rev_u_pkW_p} investigates the impact of $q$ on the absolute revenue of a selfish miner. In Figure \ref{sub:self_rev_u}, we plot $\vh_0(u,t)-u$ as a function of initial wealth $u$ for various levels of $q$ (the choice of the other parameters ensures that the net profit condition holds even under the lowest connectivity regime $q=0.25$). The plot also confirms the relative proximity of the value functions when following the protocol and when  withholding blocks with connectivity $q=1$, respectively (cf.\ Remark \ref{rem32}). However, for small values of $u$ the difference can still be substantial in absolute terms, see Figure \ref{sub:diff_rev_pro_rev_u} where we plot the difference $\vh(u,t)-\vh_0(u,t)$ directly.\\ 

\begin{figure}[h!]
  \begin{center}
    \subfloat[$\vh_0(u,2w)-u$ as a function of initial wealth $u$ with haspower $p=0.1$ and electricity price $\pi_W = 0.03$ for varying connectivity level: $q=0.25$ (dotted), $q = 0.5$ (dashed), $q=0.75$ (dash-dotted), $q=1$ (solid).]{
      \includegraphics[width=0.45\textwidth]{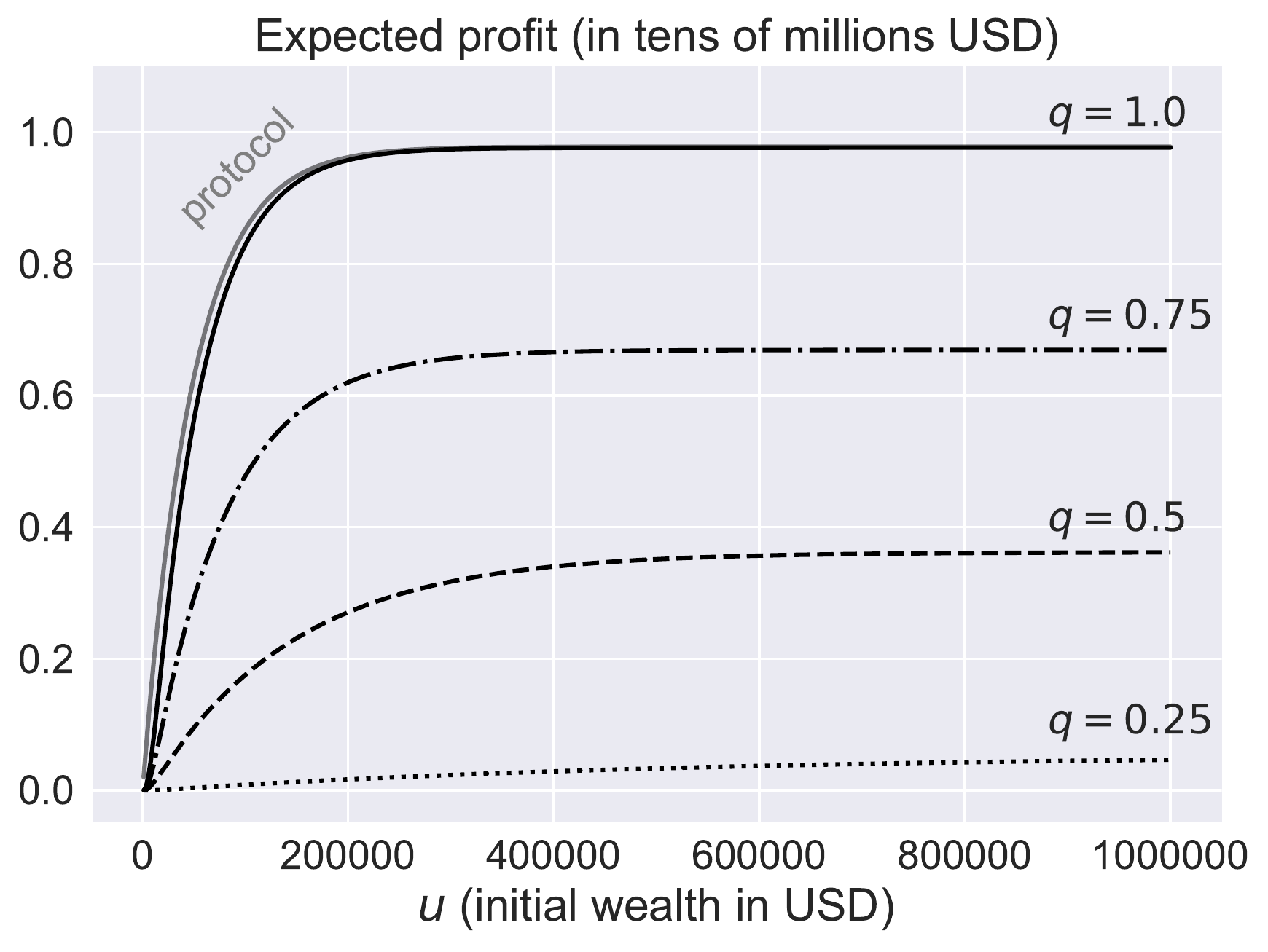}
      \label{sub:self_rev_u}
                         }
                         \hskip1em
    \subfloat[$\vh(u,2w)-\vh_0(u,2w)$ as a function of initial wealth $u$ with hashpower $p=0.1$, electricity price $\pi_W = 0.03$ and connectivity $q=1$.]{
      \includegraphics[width=0.45\textwidth]{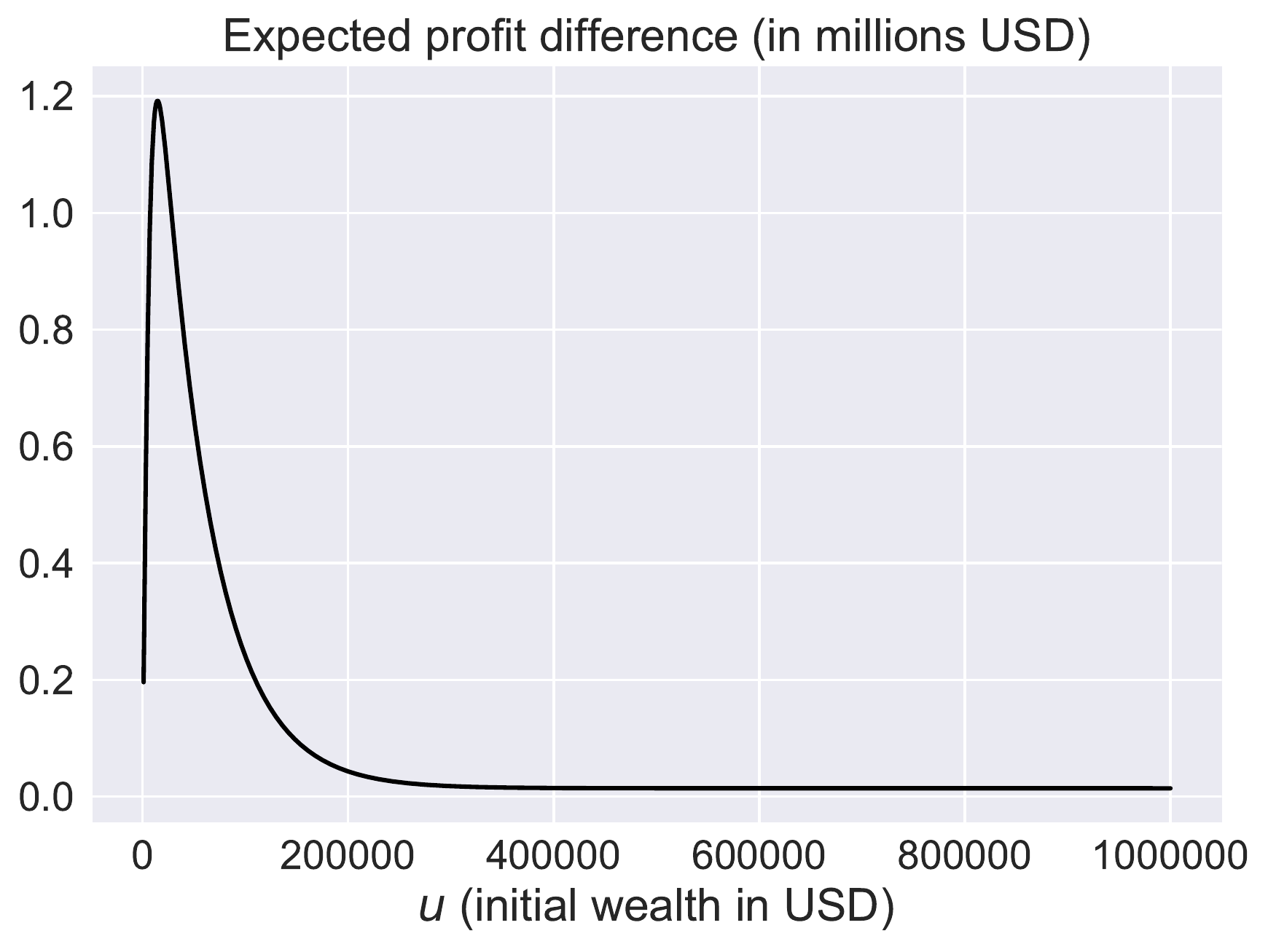}
      \label{sub:diff_rev_pro_rev_u}
                         }
    
    \caption{Expected profit of a selfish miner over an exponential time horizon (mean $2$ weeks) as a function of initial wealth (left) compared to the expected profit of a miner following the protocol (right).}
    \label{fig:self_rev_u_pkW_p}
  \end{center}
\end{figure} 

\noindent Figure \ref{sub:self_rev_pkW} plots the expected surplus under ruin constraints as a function of the electricity price, and in Figure \ref{sub:self_rev_p} we see how the choice of the hashpower influences the expected profit $\vh_0(u,t)-u$ of a selfish miner.\\

\begin{figure}[h!]
  \begin{center}
    \subfloat[$\vh_0(u,2w)$ as a function of electricity price $\pi_W$ with hashpower $p=0.1$ and initial wealth $u = \$410,000$.]{
      \includegraphics[width=0.45\textwidth,height=5.3cm]{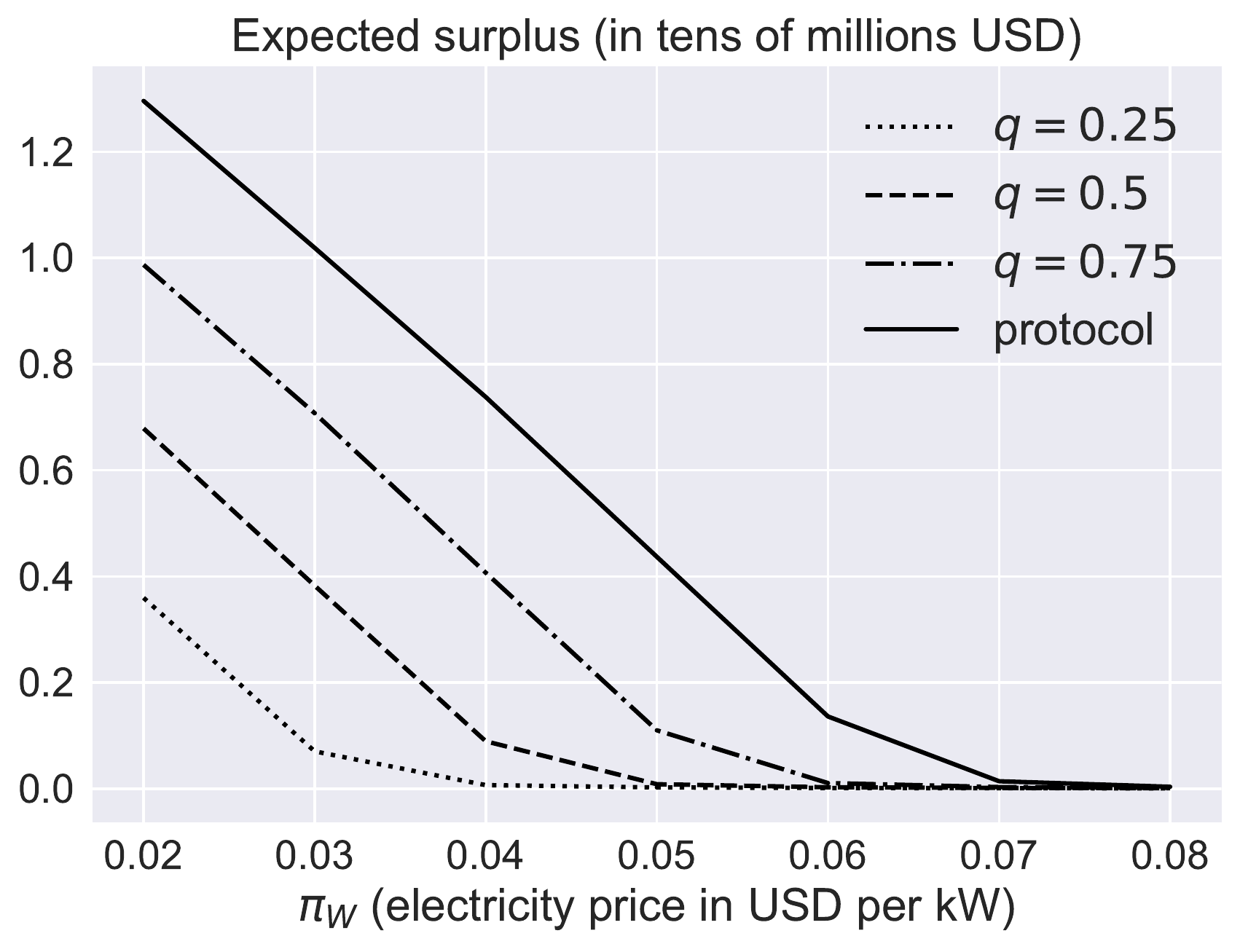}
      \label{sub:self_rev_pkW}
                         }
                         \hskip1em
    \subfloat[$\vh_0(u,2w)-u$ as a function of hashpower $p$ with electricity price $\pi_W = 0.03$ and initial wealth $u = \$410,000$.]{
      \includegraphics[width=0.45\textwidth]{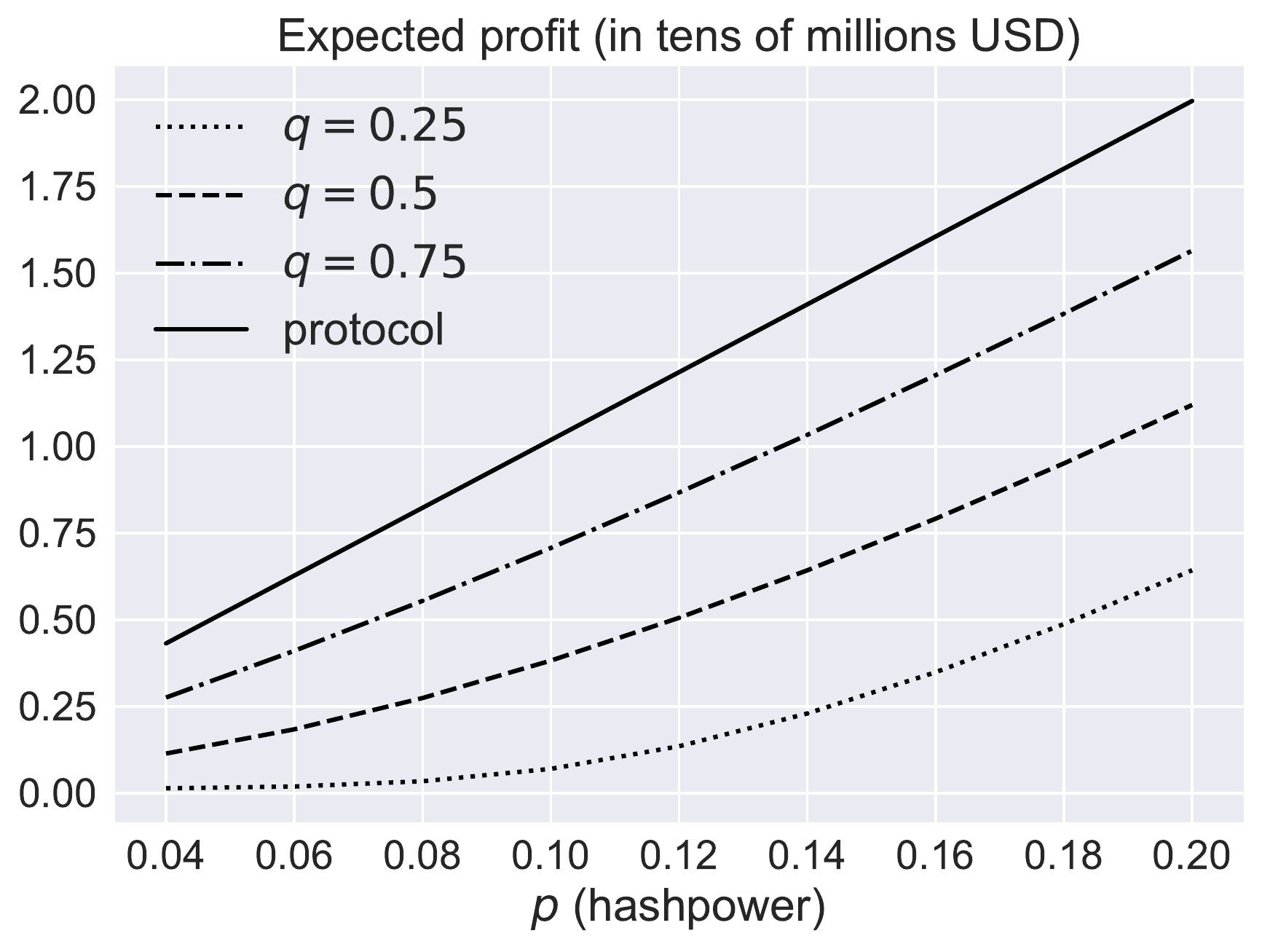}
      \label{sub:self_rev_p}
                         }
    \caption{Expected profit of a selfish miner over an exponential time horizon (mean $2$ weeks) as a function of electricity price (left) and hashpower (right) for varying connectivity level $q$, and a comparison with a miner following the protocol (solid).}
    \label{fig:self_rev_u_pkW_p}
  \end{center}
\end{figure} 

\noindent We observe that the influence of the initial wealth, electricity price and hashpower on the expected profit of a selfish miner is similar (in terms of shape) to that of a miner following the protocol, but -- compared in absolute terms -- withholding blocks leads to substantial losses in expected profit even in high connectivity regimes like $q=0.75$. This confirms that a miner is better off by following the protocol in terms of 'absolute' (as opposed to 'relative' in Eyal and Sirer \cite{EySi18}) revenue under solvency constraints. In the next section we want to check wether this also holds true when taking into account an adjustment in the cryptopuzzle difficulty.

\section{Expected profit when including a difficulty adjustment}\label{sec:diffi}
Mining (at least on the bitcoin blockchain) boils down to drawing random numbers, referred to as hashes uniformly inside the set $\{0,\ldots, 2^{256}-1\}$. The block is mined if the computed hash is smaller than some target $L$. The calibration of $L$ allows to maintain a steady flow of blocks in the blockchain. As each trial (of the system) for mining a block is independent of the others and leads to a success with very small probability, the overall number of successes is binomially distributed and will be very well approximated by a Poisson random variable. Hence the Poisson process assumption for the occurrence of mined blocks is in fact very natural. In the bitcoin blockchain, the target $L$ is set so as to ensure that $6$ blocks are generated per hour. The target may be estimated by comparing $2^{256}$ to the number of hashes computed by the network in ten minutes. On January 1, 2020, the network was computing $97.01$\footnote{Source: \href{https://www.blockchain.com/}{blockchain.com}} exahashes per second. The difficulty is then estimated by $L = 2^{256}/(H/6),$ where $H= 97.01 \times 10^{18} \times 3600$ is the number of hashes computed per hour by the network. The difficulty is adjusted every $2,016$ blocks by changing the target $L$ to 
$$
L^\ast = L\times \frac{t^\ast}{336},
$$
where $t^\ast$ is the time (in hours) it took to mine $2,016$ blocks (here $2016/6=336$ hours is the time it should have taken to mine $2,016$ blocks). The difficulty adjustment was studied in Bowden et al.\ \cite{bowden2018block} and led them to conclude that the block arrival process is well captured by a non-homogeneous Poisson process. Following their terminology, we refer to the time elapsed between two difficulty adjustments as a \textit{segment}.\\

\noindent We now want to quantitatively address whether selfish mining is worthwhile when considering the possibly implied adjustment of the cryptopuzzle difficulty. We do so in a simplified setup, where only Sam may switch between selfish mining and following the protocol, whereas everyone else follows the protocol. Concretely, we compute the expected profit of Sam over two segments when
\begin{enumerate}
\item[(i)]  he is following the protocol during both segments,
\item[(ii)] he applies selfish mining during the first segment and resumes to follow the protocol during the second segment. 
\end{enumerate}

\noindent For (i), we compute the expected profit using the result of Proposition \ref{prop:value_function_protocol} by setting the average time horizon to $t = 672$ (the number of hours in four weeks). We assume that the arrival intensity of the blocks in the blockchain remains unchanged over the two segments with $\lambda = 6$, as does the cryptopuzzle difficulty $L$.\\

\noindent For (ii), we proceed as follows. Selfish mining slows down the pace at which the blocks are added to the blockchain, and the number of blocks wasted exactly corresponds to the number of passages of the Markov chain $(Z_n)_{n\geq0}$ through the state $0^\ast$. We know that once stationarity is reached, the probability for  $(Z_n)_{n\geq0}$ to be in state $0^\ast$ is 
$$
\frac{p(1-p)}{1+2p-p^2}.
$$
We therefore approximate the arrival process in this first segment by a homogeneous Poisson process with intensity 
 $$
\lambda_1 = \lambda \times\left(1- \frac{p(1-p)}{ 1+2p-p^2}\right).
$$ 
When blocks are being withheld, the average time required to mine $2,016$ blocks increases from $336$ to $t_1 = 2016/\lambda_1$. We hence compute the expected surplus over the first segment using Theorem \ref{theo:value_Function_Selfish} with time horizon $t_1$. Selfish mining during the first segment then leads to a downward adjustment of the cryptopuzzle difficulty  to $L_2 = L\times{t_1}/{336}$ that will be in force during the second segment. As the miner resumes following the protocol on that second segment, the block arrival process becomes again a proper homogeneous Poisson process with intensity $\lambda_2$ to be determined as follows. Let $H$ be the number of hashes computed per hour (hashrate) by the network. The number of blocks mined per hour is then given by 
$$
\lambda_2 = \frac{2^{256}}{L_2\times H}.
$$
The miner's surplus over the second segment is now computed using the formula of Proposition \ref{prop:value_function_protocol} using as initial wealth the miner's expected surplus over the first segment, a block arrival intensity equal to $\lambda_2$ and a reduced time horizon equal to $t_2=2016/\lambda_2$.\\  

\noindent We consider a miner who owns a share $p = 0.1$ of the network computing power. If he follows the protocol, then the net profit condition holds if $\pi_W<0.065$. If he withholds blocks on the first segment, then the net profit condition frontier depends on the electricity price and the hashpower provided that his connectivity is fixed, for instance set to $q=0.5$. Figure \ref{fig:nbc_segments} shows the net profit condition frontiers for a selfish miner on Segment $1$ who resumes to follow the protocol on Segment $2$.\\

\begin{figure}[h!]
  \begin{center}
    \subfloat[Net profit condition frontier on Segment 1]{
      \includegraphics[width=0.45\textwidth]{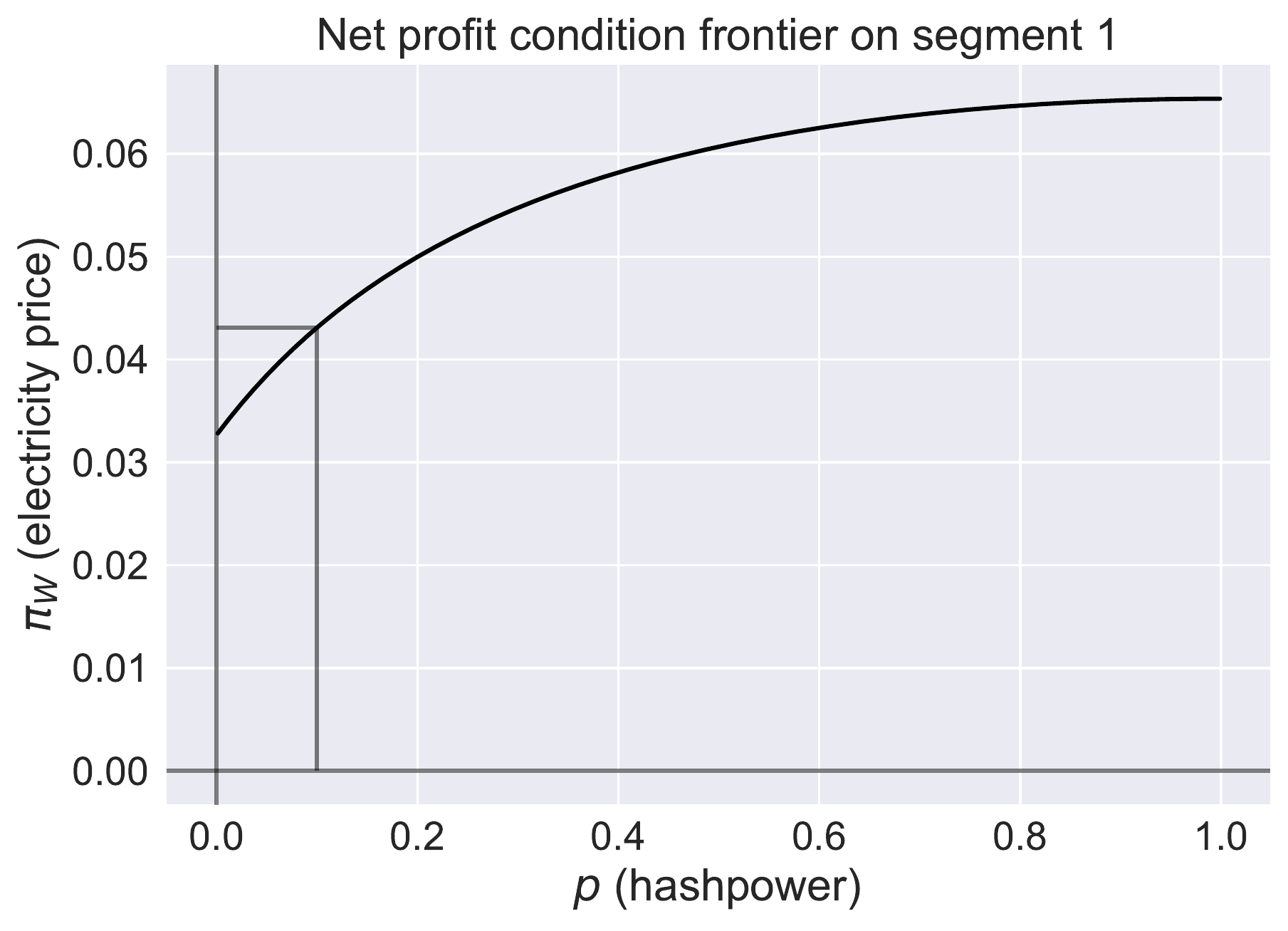}
      \label{sub:nbc_segment1}
                         }
                         \hskip1em
    \subfloat[Net profit condition frontier on Segment 2.]{
      \includegraphics[width=0.45\textwidth]{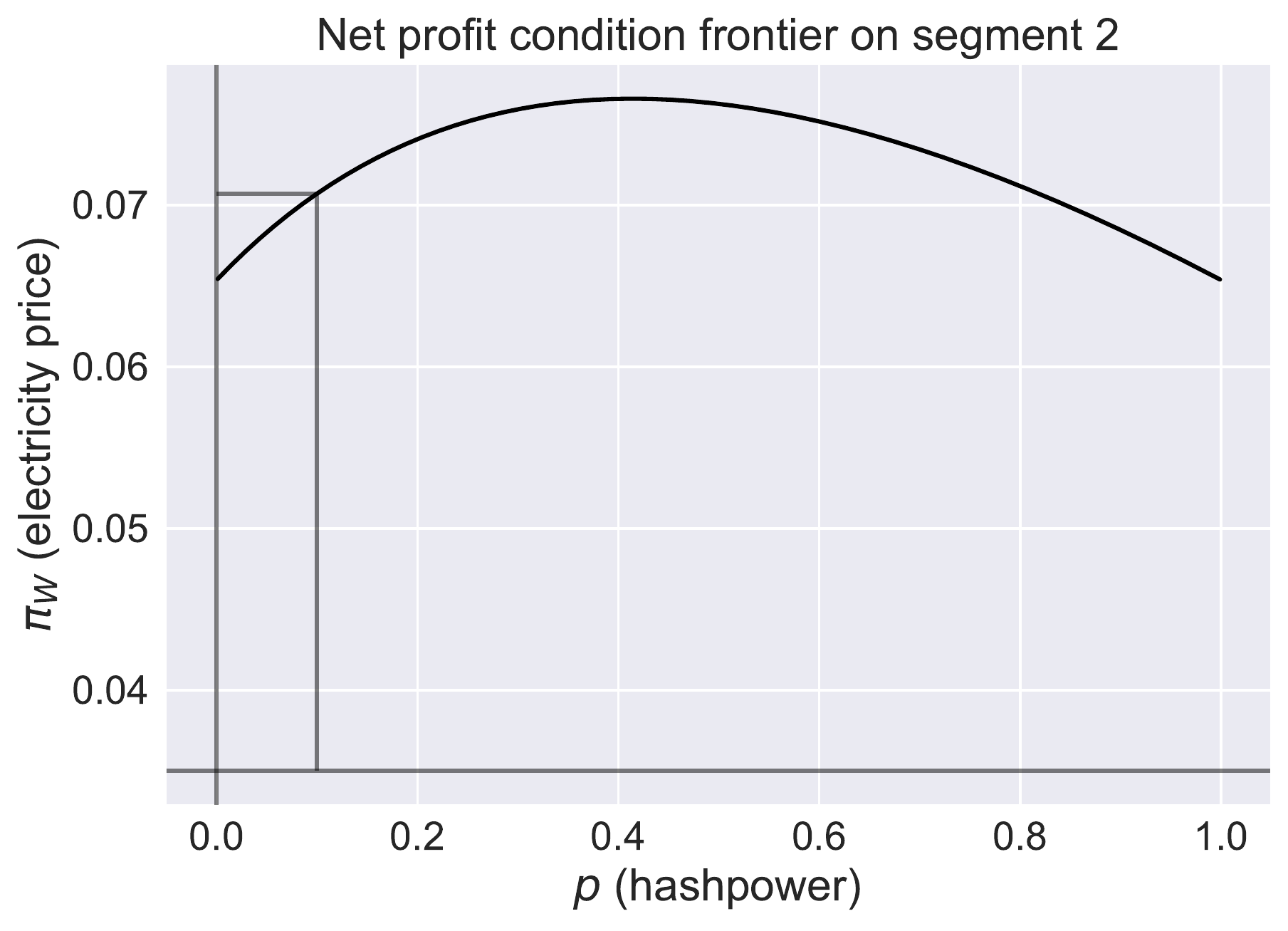}
      \label{sub:nbc_segment2}
                         }
    \caption{Net profit condition frontier on Segment 1 and 2 for a selfish miner.}
    \label{fig:nbc_segments}
  \end{center}
\end{figure}

\noindent In absence of ruin considerations, selfish mining on Segment $1$ is profitable only if the price of electricity is lower than $0.043$, see Figure \ref{sub:nbc_segment1}. Only when the selfish miner owns the totality of the hashpower, selfish mining is as profitable as following the protocol. On Segment $2$, the profitability is always greater than when following the protocol. The profitability on Segment $2$ holds in our case if the electricity price is lower than $0.071$, see Figure \ref{sub:nbc_segment2}. It is interesting to note that by increasing the hashpower beyond a certain threshold we actually lower the profitability during Segment $2$. At higher hashpower levels, the probability of the Markov chain $(Z_n)_{n\geq 0}$ visiting state $0^\ast$ becomes small. In that case fewer blocks are being wasted, which in turn reduces the downward adjustment of the cryptopuzzle difficulty and hence the profitability during Segment 2.\\

\noindent Figure \ref{fig:rev_difficulty_adjusment} displays the expected profit of a selfish miner and a miner following the protocol as a function of initial wealth for a range of electricity prices.
\begin{figure}[h!]
  \begin{center}
      \subfloat[$\pi_W = 0.04$.]{
      \includegraphics[width=0.45\textwidth]{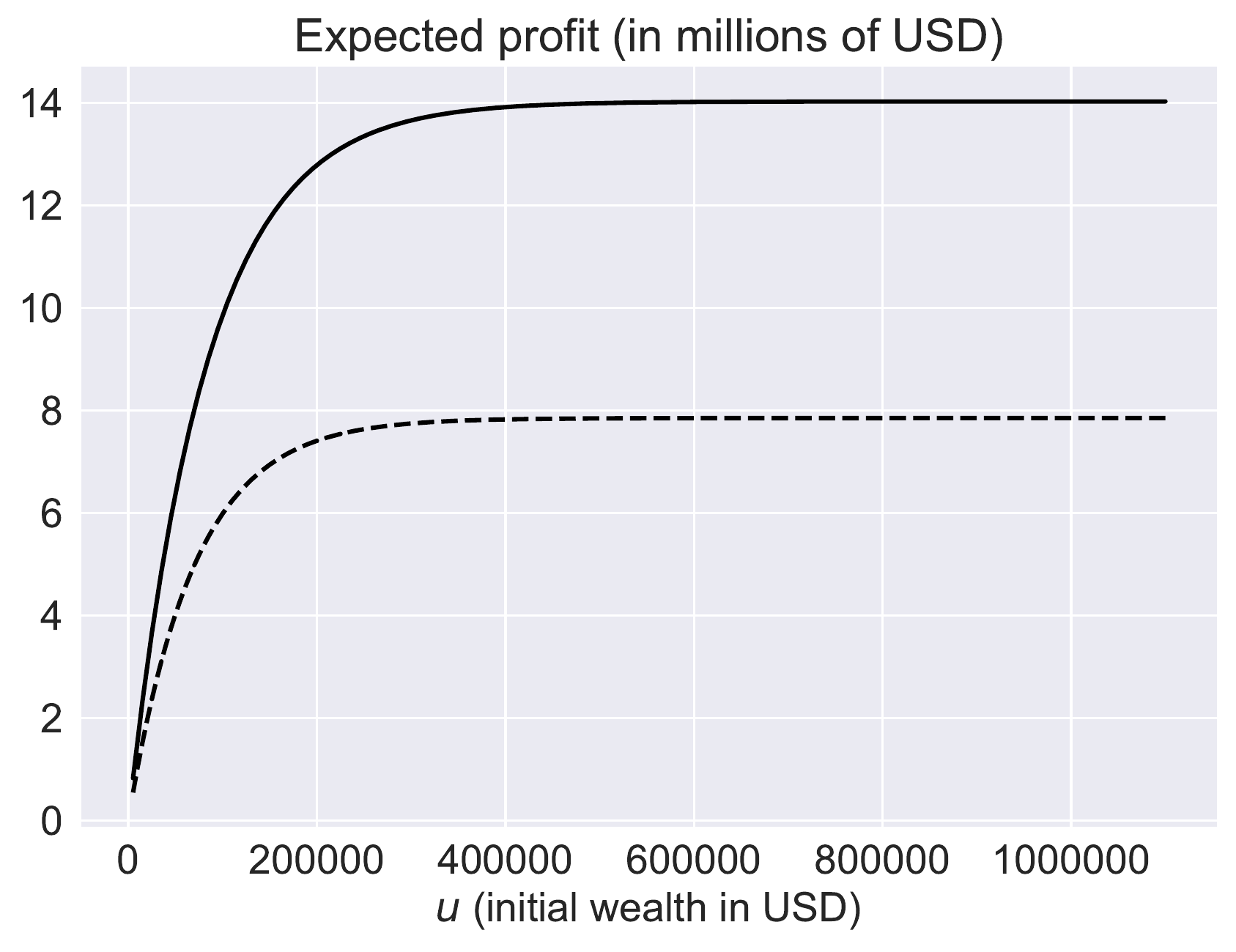}
      \label{sub:rev_difficulty_adjusment_4}
                         }
                         \hskip1em
    \subfloat[$\pi_W = 0.05$.]{
      \includegraphics[width=0.45\textwidth]{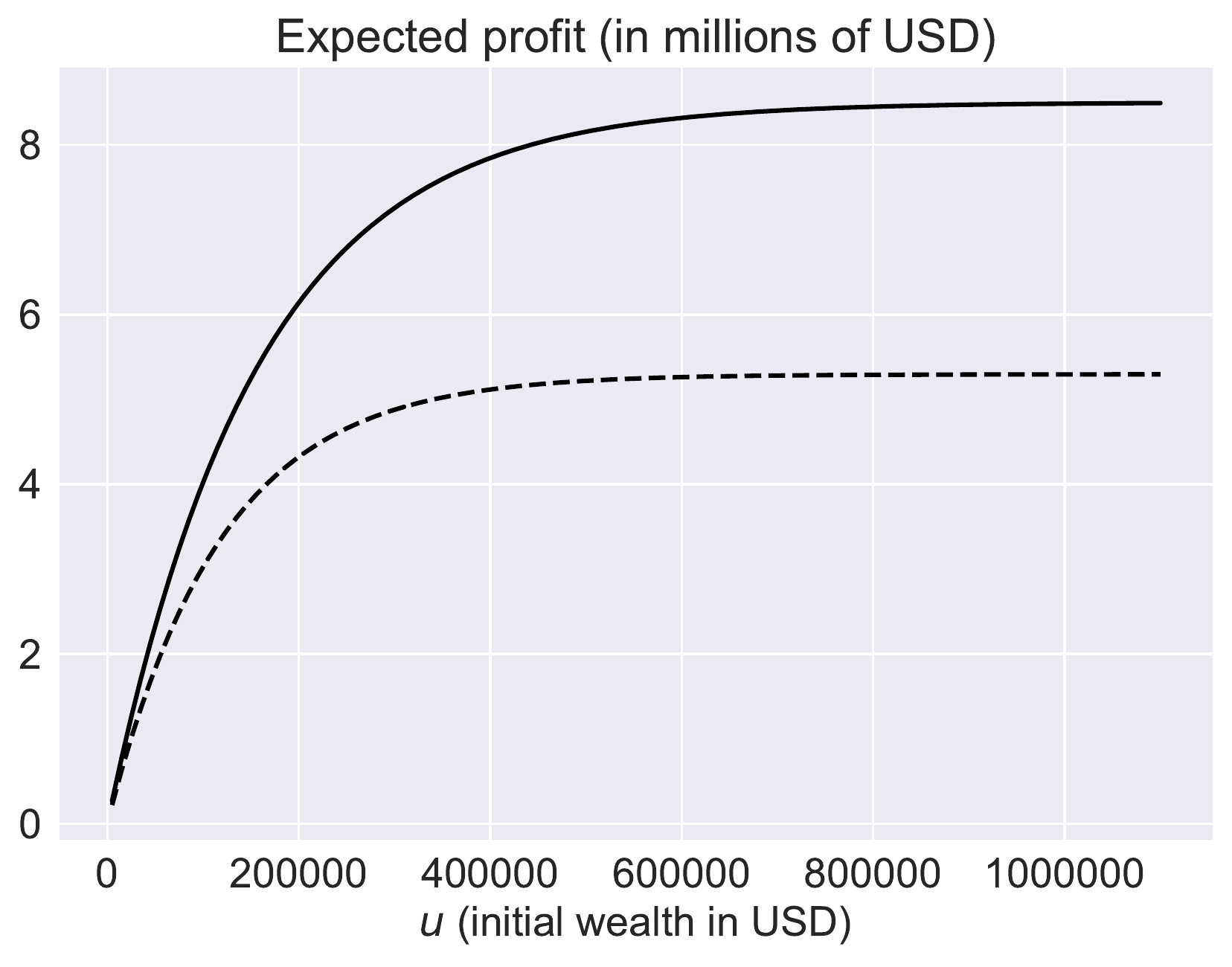}
      \label{sub:rev_difficulty_adjusment_5}
                         }
                         \hskip1em
       \subfloat[$\pi_W = 0.06$.]{
      \includegraphics[width=0.45\textwidth]{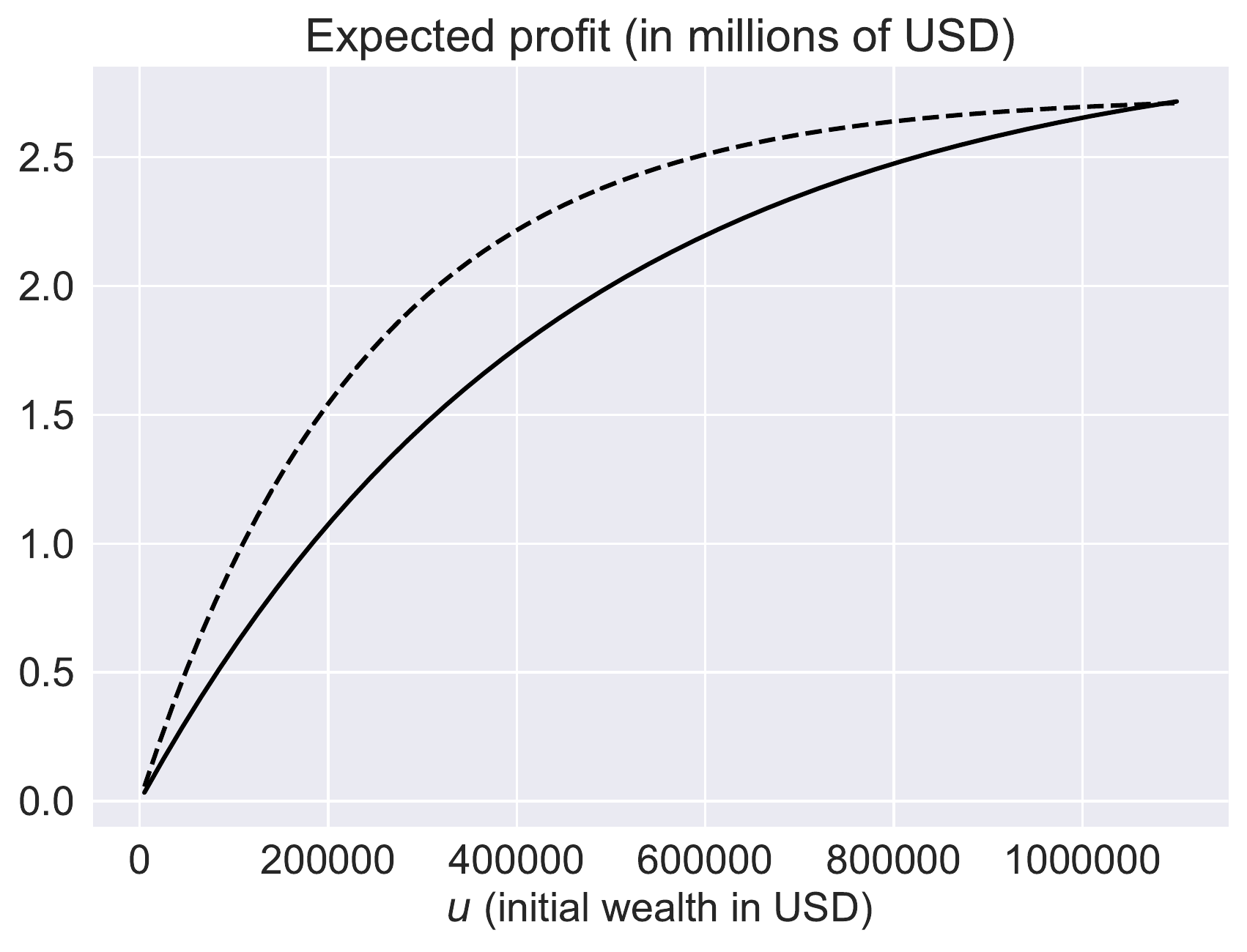}
      \label{sub:rev_difficulty_adjusment_6}
                         }
                         \hskip1em
    \subfloat[$\pi_W = 0.07$.]{
      \includegraphics[width=0.45\textwidth]{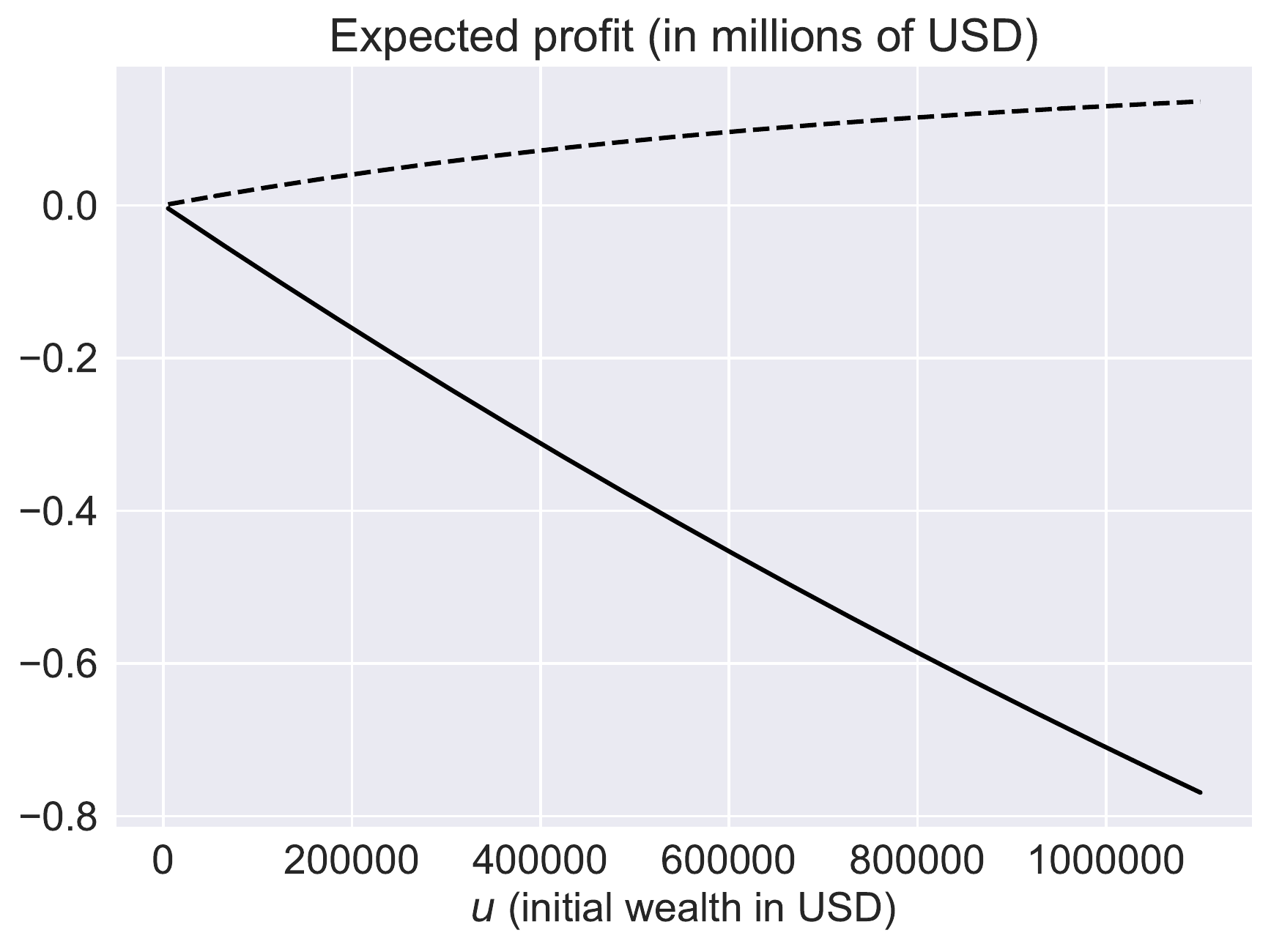}
      \label{sub:rev_difficulty_adjusment_7}
                         }
                         \hskip1em
       \subfloat[$\pi_W = 0.08$.]{
      \includegraphics[width=0.45\textwidth]{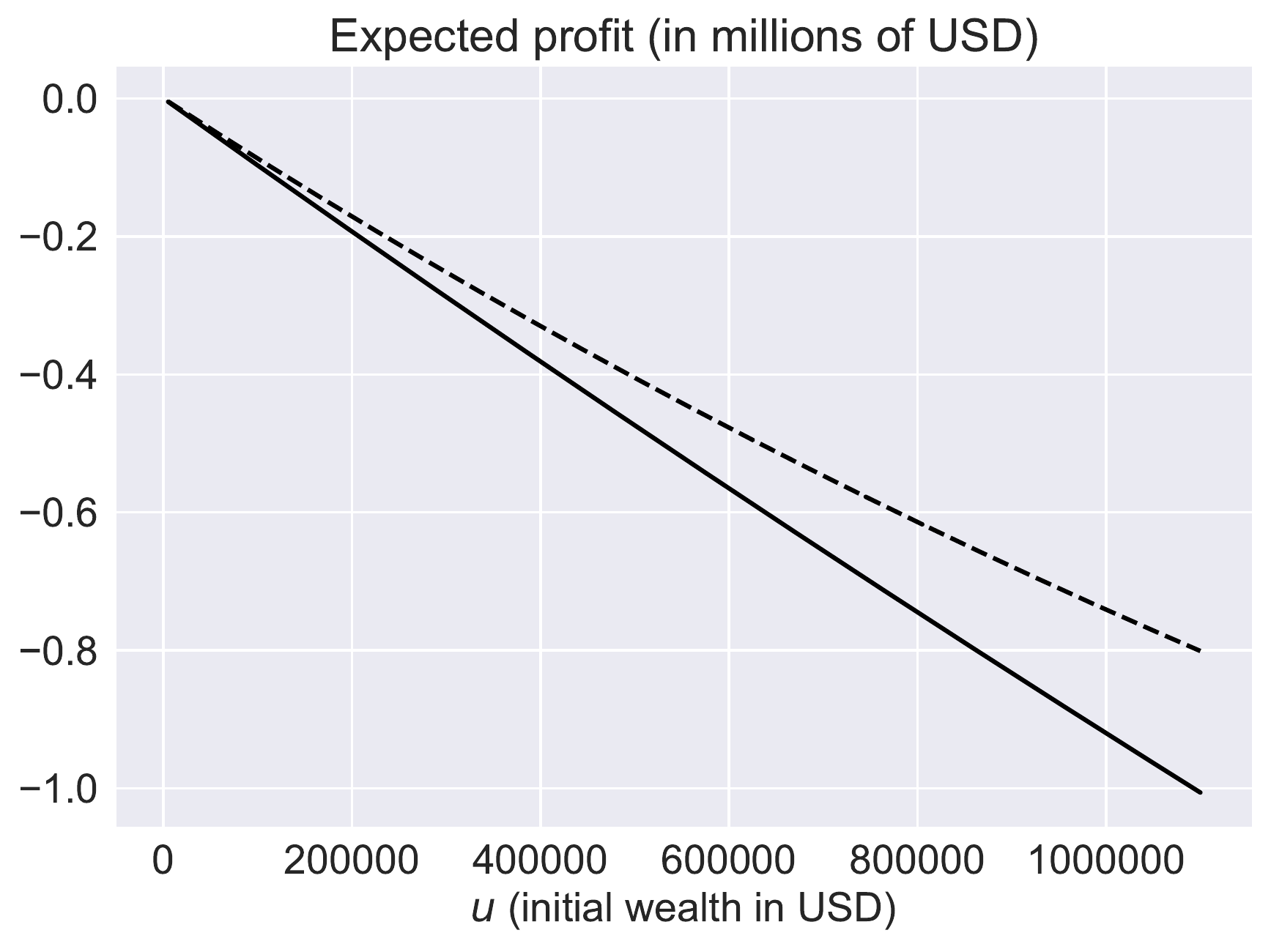}
      \label{sub:rev_difficulty_adjusment_8}
                         }
                         \hskip1em
    \subfloat[$\pi_W = 0.09$.]{
      \includegraphics[width=0.45\textwidth]{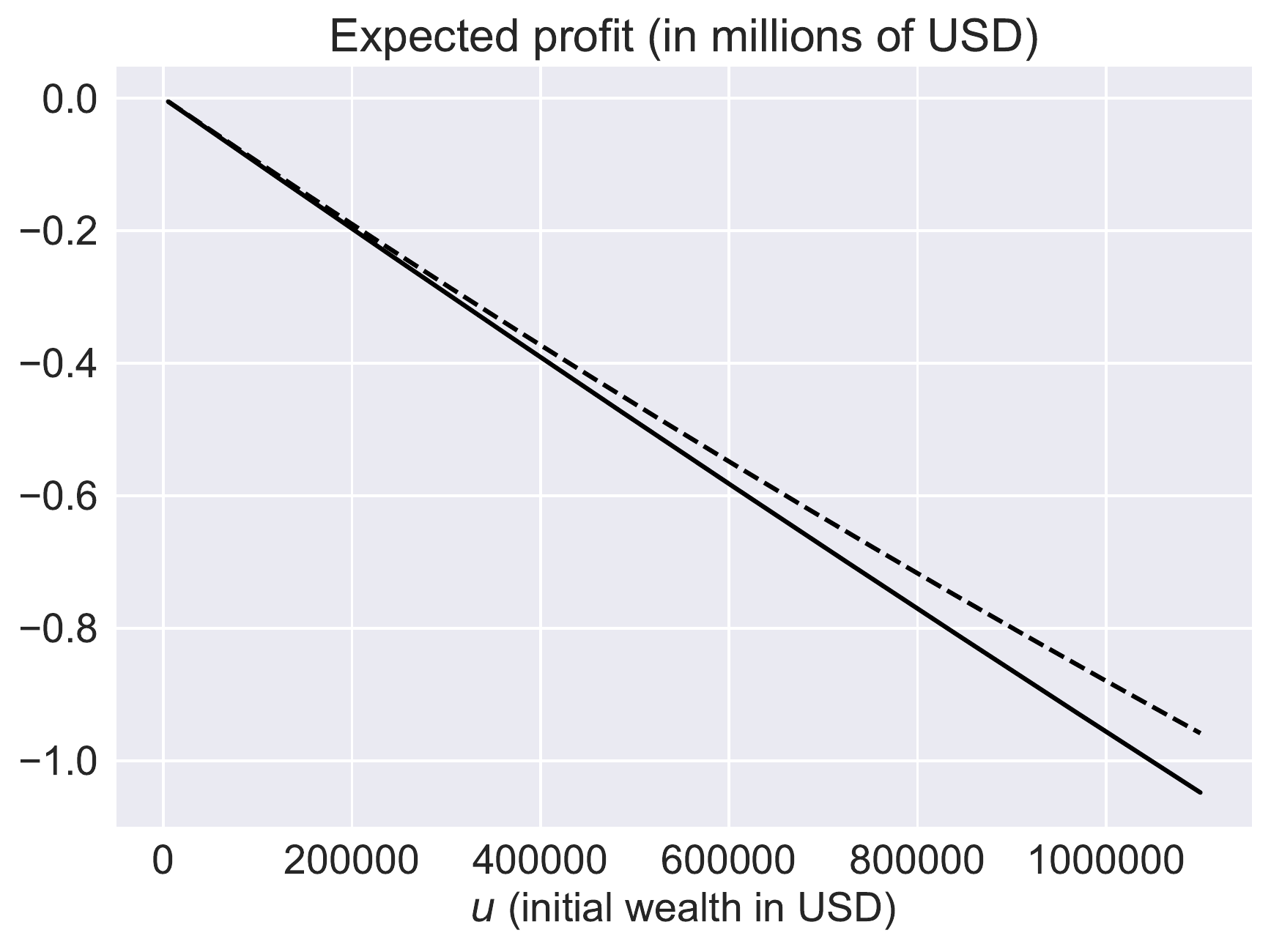}
      \label{sub:rev_difficulty_adjusment_9}
                         }
    \caption{Expected profit over two segments as a function of initial wealth of a miner following the protocol (solid) and a selfish miner (dashed) for various electricity prices with hashpower $p=0.1$ and connectivity $q=0.5$.}
    \label{fig:rev_difficulty_adjusment}
  \end{center}
\end{figure}
 One can observe the different profit and loss profile of a selfish miner compared to that of a miner following the protocol on both segments. If the net profit condition holds when following the protocol, then it also holds for the second segment when blocks were withheld during the first segment. For electricity prices $\pi_W = 0.04, 0.05, 0.06$, the expected profit as a function of $u$ reaches a plateau of level 
\begin{equation}\label{eq:plateau_protocol}
(c-b\lambda p)\times t
\end{equation}
when following the protocol, and 
\begin{equation}\label{eq:plateau_selfish}
(c-b\lambda_2 p)\times t_2,
\end{equation}
when selfish mining is applied during the first segment. The expected profit when following the protocol is greater than the one when applying selfish mining on the first segment for $\pi_W = 0.04,0.05$, see Figures \ref{sub:rev_difficulty_adjusment_4} and \ref{sub:rev_difficulty_adjusment_5}. For $\pi_W = 0.06$, the plateau when following the protocol \eqref{eq:plateau_protocol} is higher than the plateau when withholding blocks \eqref{eq:plateau_selfish}, but the expected profit at lower initial wealth is greater for the selfish miner, see Figure \ref{sub:rev_difficulty_adjusment_6}. For $\pi_W = 0.07$, the net profit condition no longer holds when following the protocol which entails a loss, it holds however on the second segment of the selfish miner, see Figure \ref{sub:rev_difficulty_adjusment_7}. The latter is probably the most desirable situation for a selfish miner. For electricity prices $\pi_W > 0.071$, the net profit condition breaks down in each case, resulting in huge losses for both the selfish and the honest miner (cf.\ Figures \ref{sub:rev_difficulty_adjusment_8} and \ref{sub:rev_difficulty_adjusment_9}). In these cases, we find that selfish mining helps at least in slightly mitigating the losses. This effect seems to fade out as the price of electricity increases.\\

\noindent The above analysis allowed us to distinguish situations where selfish mining can be considered worthwile and when it may not. In particular, it turns out that selfish mining can be advisable when following the protocol is not profitable. 

\section{Conclusion}\label{sec:conc}
In this paper we proposed a risk and profitability analysis framework adapted to the profit and loss profile of blockchain miners. The surplus of the miners is modelled as a stochastic process akin to the risk process in insurance risk theory. In addition to studying the standard ruin probability, we have defined a value function as the expected surplus over an exponentially distributed time horizon. This assumption allowed to work out closed form expressions for the expected profit of a miner that follows the prescribed protocol, and of a miner who seeks to optimize his expected profit by withholding blocks. The explicit solutions enabled a sensitivity analysis with respect to the model parameters and ingredients, and a profitability comparison between the two types of miners.\\

\noindent We find that mining is a business for risk lovers as large expected profits are compensated by great odds of running out of financial resources. While earlier studies suggested that selfish mining can be worthwhile, our analysis gives a quantitative description of the significantly increased risk caused by the delay of capital gains, so that in an analysis that includes the consideration of ruin, selfish mining becomes less attractive.  At the same time, we find that selfish mining can be a reasonable  strategy if following the protocol is not profitable. \\

\noindent The main purpose of the present study was to provide a concrete link between the analysis of mining strategies and modelling tools in applied probability and mathematics. There are of course various directions for future research concerning refinements and extensions of the approach proposed here. It will be interesting in future studies to  account for the variability of the electricity cost over time and the even higher variability of the miner's reward due to the well known volatility of the Bitcoins. The incorporation of transaction fees will also become more relevant in the future as the reward for finding blocks keeps being halved as the number of remaining Bitcoins to be issued declines. In addition, it would be nice to extend the present analysis to incorporate the original selfish mining procedure studied in Eyal and Sirer \cite{EySi18}, and to look into further consequences of selfish mining on the remaining network participants beyond the ones considered here. Finally, the organization of selfish mining in pools may lead to additional features beyond the ones considered in the present approach.  

\appendix
\section{Proof of Theorem \ref{theo:value_Function_Selfish}}\label{Appendix}

	Using the same reasoning as in the proof of Proposition \ref{prop:value_function_protocol} yields the following system of advanced differential equations
	\begin{equation}\label{eq:ODE_system}
	\begin{cases}
	0=&c\vh_0'(u,t)+(\lambda p +1/t)\vh_0(u,t)-\lambda p \vh_1(u,t)-u/t, \\
	0=&c\vh_1'(u,t)+(\lambda +1/t)\vh_1(u,t))-\lambda p \vh_0(u+2b,t) -\lambda(1-p)\vh_{0^\ast}(u,t) -u/t,\\
	0=&c\vh_{0^\ast}'(u,t)+(\lambda +1/t)\vh_{0^\ast}(u,t)-\lambda p \vh_{0^\ast}(u+2b,t) -\lambda(1-p)q\vh_{0^\ast}(u+b,t)\\
	&-\lambda(1-p)(1-q)\vh_{0}(u,t) -u/t.\\
	\end{cases}
	\end{equation}
	Here again the derivative is always with respect to the first argument.
	From the first equation we deduce that 
	\begin{equation}\label{eq:V1_prop}
	\vh_{1}(u,t) = \frac{c}{\lambda p}\vh_{0}'(u,t)+\frac{\lambda p +1/t}{\lambda p}\,\vh_{0}(u,t)-\frac{u}{t\lambda p}.
	\end{equation}
	Inserting \eqref{eq:V1_prop} into the second equation of the system \eqref{eq:ODE_system} yields
	\begin{eqnarray}\label{eq:V_0_ast_prop}
	\vh_{0^\ast}(u,t)& = &\frac{c}{\lambda(1-p)}\vh_{1}'(u,t)+\frac{\lambda +1/t}{\lambda(1-p)}\vh_{1}(u,t)-\frac{\lambda p}{\lambda(1-p)} \vh_{1}(u+2b,t)  \nonumber\\
	&&\quad-\frac{u}{\lambda(1-p)t}\nonumber\\
	&=&{\frac {{c}^{2}}{{\lambda}^{2} \left( 1-p \right) p}}\vh_{0}''(u,t)+{\frac { c \left( 2+
			\left( p+1 \right) \lambda\,t \right) }{{\lambda}^{2} \left( 1-p
			\right) tp}}\vh_{0}(u,t)\nonumber \\
	&&\quad +{\frac { \left( \lambda\,t+1 \right)  \left( \lambda\,t\,p+1 \right)  }{{\lambda}^{2} \left( 1-p \right) {t}^{2}p}
	}\vh_{0}(u,t) - {\frac {p}{1-p}}\vh_{0}(u+2b,t)\nonumber \\
	&&\quad-{\frac {  1+\lambda\,t
			\left( p+1 \right)  }{{\lambda}^{2} \left( 1-p
			\right) {t}^{2}p}}u-{\frac {c}{{\lambda}^{2} \left( 1-p \right) tp}},
	\end{eqnarray}
with boundary conditions $\vh_{0}(0,t) =\vh_{1}(0,t)=\vh_{0^\ast}(0,t)=0$. 	Substituting \eqref{eq:V_0_ast_prop} into the third equation of the system \eqref{eq:ODE_system} then yields an advanced differential equation for $\vh_0(u,t)$ with
	\begin{eqnarray}\label{eq:DDE_Order3}
	0&=& {\frac {{c}^{3}}{{\lambda}^{2} \left( 1-p \right) p}}\vh_0'''\left( u,t
	\right) 
	+{\frac { \left( 3+\lambda\,t
			\left( p+2 \right)  \right) {c}^{2}}{{\lambda}^{2} t\left( 1-p
			\right) p}}\vh_0''(u,t)\nonumber\\
	&&\quad +
	{\frac { \left( \lambda\,t+1 \right) c \left( 3+\lambda\,t
			\left( 2p+1 \right)  \right)}{{\lambda}^{2}{t}^{2} \left(1-p\right) p}}\vh'_0(u,t)\nonumber\\
	&&\quad+{\frac {  1+\lambda\,t \left( p+2 \right) +{\lambda}^{2}{t}^{2} \left( 2p+
			1 \right)+{\lambda}^{3}{t}^{3}p \left(  \left( q-1 \right) {p}^{2}+
			2p\left( 1-q \right) +q \right)    }{{t}^{3}{\lambda}^{2} \left( 1-p \right) p}}\vh_0\left( u,t \right) \nonumber\\
	&&\quad-\lambda\,q\left( 1-p \right)\vh_0\left( u+b,t \right) +{
		\frac { \left( \lambda\,t \left( p-2 \right) -1 \right) p}{ \left( 1-p \right)t }}\vh_0(u+2b,t)
	\nonumber \\
	&&\quad-{\frac {cp }{1-p}}\vh_0'\left( u+2\,b,t \right)-{\frac {  1+
			\lambda\,	t \left( p+2 \right)-{\lambda}^{2}{t}^{2} \left( {p}^{2}-2\,p-1 \right)   }{{t}^{3}{\lambda}^{2} \left( 1-p \right) p}}u\nonumber \\
	&&\quad-{\frac { \left( 2+\lambda\,t \left( p+2 \right)  \right) 
			c}{{\lambda}^{2}{t}^{2} \left( 1-p \right) p}}.
	\end{eqnarray}
	The above boundary conditions now translate to	\begin{eqnarray}\label{eq:boundary}
	\vh_0(0,t) = 0\text{, }\vh_0'(0,t)=0\text{ and }\vh_0''(0,t) = \frac{\lambda^2p^2}{c^2}\vh_0(2b,t)+\frac{1}{ct},\end{eqnarray}
	and from the problem contruction we again have the linear growth condition $\vh_0(u,t)\le k u$ for some $k\ge 0$. This equation constitutes another instance of an advanced functional differential equation, and concerning mathematical considerations of the existence and uniqueness of a solution to it, we refer to the corresponding comments in the proof of Proposition  \ref{prop:value_function_protocol}, where here we even have the additional complication of order 3. We will again construct its solution explicitly and assume it to be unique (which one can again confirm by comparing it to the outcome of a stochastic simulation of $V_0(u)$). 
Concretely, we seek a solution of the form $\vh_0(u,t) =\vh_0^{\text{part}}(u,t)+\vh_0^{\text{hom}}(u,t)$, where $\vh_0^{\text{part}}(u,t)$ is a particular solution of \eqref{eq:DDE_Order3} and $\vh_0^{\text{hom}}(u,t)$ solves the homogeneous equation associated to \eqref{eq:DDE_Order3}
	\begin{eqnarray}\label{eq:DDE_Order3_homogeneous}
	0&=& {\frac {{c}^{3}}{{\lambda}^{2} \left( 1-p \right) p}}\vh_0'''\left( u,t
\right) 
+{\frac { \left( 3+\lambda\,t
		\left( p+2 \right)  \right) {c}^{2}}{{\lambda}^{2} t\left( 1-p
		\right) p}}\vh_0''(u,t)\nonumber\\
		&&\quad +
{\frac { \left( \lambda\,t+1 \right) c \left( 3+\lambda\,t
		\left( 2p+1 \right)  \right)}{{\lambda}^{2}{t}^{2} \left(1-p\right) p}}\vh'_0(u,t)\nonumber\\
&&\quad+{\frac {  1+\lambda\,t \left( p+2 \right) +{\lambda}^{2}{t}^{2} \left( 2p+
		1 \right)+{\lambda}^{3}{t}^{3}p \left(  \left( q-1 \right) {p}^{2}+
		2p\left( 1-q \right) +q \right)    }{{t}^{3}{\lambda}^{2} \left( 1-p \right) p}}\vh_0\left( u,t \right) \nonumber\\
&&\quad-\lambda\,q\left( 1-p \right)\vh_0\left( u+b,t \right) +{
	\frac { \left( \lambda\,t \left( p-2 \right) -1 \right) p}{ \left( 1-p \right)t }}\vh_0(u+2b,t)\nonumber\\
	&&\quad-{\frac {cp }{1-p}}\vh_0'\left( u+2\,b ,t\right).
	\end{eqnarray}
	Inserting a particular solution of the form  $\vh_0^{\text{part}}(u,t) = Bu + C$ in \eqref{eq:DDE_Order3} yields $B = 1$ and 
	\begin{eqnarray*}
		C&=& -\frac{\lambda^{2} {t}^{3} \left\{\lambda pb \left[  \left( q-2 \right) {p
			}^{2}+ \left( 4-2\,q \right) p+q \right] +c \left( {p}^{2}-2\,
			p-1 \right)  \right\}}{{\lambda}^{2}{t}^{2} \left( {
				p}^{2}-2\,p-1 \right) - \left( p+2 \right) \lambda\,t-1}\\
		&&\quad-\frac { \lambda\,t^2\, \left( 2b{p}^{2}\lambda-
				c\left( p+2 \right)  \right)-ct }{{\lambda}^{2}{t}^{2} \left( {
					p}^{2}-2\,p-1 \right) - \left( p+2 \right) \lambda\,t-1}.
	\end{eqnarray*}
	Trying the ansatz $\vh(u,t)=e^{\rho u}$ in the homogeneous equation \eqref{eq:DDE_Order3_homogeneous}, we get the characteristic equation
	\begin{equation}\label{eq:caracteristic_equation}
	\left( {\it D_1}\,\rho+{\it D_2} \right) {{\rm e}^{2\,\rho\,b}}+{\it D_3
	}\,{{\rm e}^{\rho\,b}}={\it D_4}\,{\rho}^{3}+{\it D_5}\,{\rho}^{2}+{\it 
		D_6}\,\rho+{\it D_7},
	\end{equation}
	for $\rho$, where 
	\begin{eqnarray*}
		&&D_1 = {\frac {pc}{1-p}}\geq0\text{, }D_2 = {\frac {p \left( 1+\lambda\,t \left( 2-p \right)  \right) }{t \left( 1-
				p \right) }}\geq0\text{, }D_3 =  \left( 1-p \right) \lambda\,q\geq0,\\
		&&D_4 ={\frac {{c}^{3}}{{\lambda}^{2} \left( 1-p \right) p}}
		\geq0\text{, }\quad
		D_5 = {\frac { {c}^{2}\left( 3+\lambda\,t \left( p+2 \right)  \right) }{t{
					\lambda}^{2} \left( 1-p \right) p}}\geq 0,\\
		&&D_6 = {\frac { \left( 3+\lambda\,t \left( 2p+1 \right)  \right) c
				\left( \lambda\,t+1 \right) }{{t}^{2}{\lambda}^{2} \left( 1-p\right) p}}\geq0\quad\text{and }\\
		&&D_7 = {\frac {1+\lambda\,t \left( p+2 \right)+ {
					\lambda}^{2}{t}^{2}\left( 2p+1 \right)+{\lambda}^{3}{t}^{3}p \left( p(1-q) ( 2-p)+q \right)  }{{\lambda}^{2}{t}^{3}
				\left( 1-p \right) p}}\geq0.
	\end{eqnarray*}
	Note that due to the linear growth condition for $\vh_0(u,t)$ in $u$, any candidate for $\rho$ needs to have negative real part. In fact, in the following lemma we will prove that \eqref{eq:caracteristic_equation} has exactly three such solutions with negative real part. 
	
	\begin{lemma}
	For any choice of parameters $\lambda,c,b,t>0$ and $0<p,q<1$, Equation \eqref{eq:caracteristic_equation} has exactly three solutions in $\rho$ with negative real part, one of which is real-valued.
	\end{lemma}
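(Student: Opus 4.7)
My plan is to apply the argument principle to the entire function
$$F(\rho) := D_4\rho^3 + D_5\rho^2 + D_6\rho + D_7 - (D_1\rho + D_2)e^{2\rho b} - D_3 e^{\rho b},$$
on a contour enclosing the open left half-plane. All coefficients $D_i$ are real and positive (as noted after \eqref{eq:caracteristic_equation}), so non-real zeros of $F$ come in complex conjugate pairs; once the total number of zeros with negative real part is shown to be three, and one of them is real, the remaining two must form a conjugate pair $\rho_2 \pm i\rho_3$ with $\rho_2 < 0$.

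First I would establish the existence of a real negative root. Since the cubic term $D_4\rho^3 \to -\infty$ as $\rho \to -\infty$ while the two exponential terms vanish, $F(\rho) \to -\infty$. At the origin, a direct simplification yields
$$\lambda^2 t^3 (1-p)p\, F(0) = 1 + \lambda t(p+2) + \lambda^2 t^2(1 + 2p - p^2),$$
where the $\lambda^3 t^3$ contributions cancel identically. Since $1+2p-p^2>0$ for $p\in(0,1)$, this shows $F(0)>0$, and the intermediate value theorem supplies a real root $\rho_1 < 0$; a monotonicity check of $F$ restricted to $(-\infty,0]$ would confirm uniqueness along the real axis.

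Next I would count zeros in the left half-plane via the argument principle on the counterclockwise boundary $C_R$ of $\{\rho: |\rho|<R,\ \operatorname{Re}(\rho)<0\}$, i.e.\ the segment $[-iR, iR]$ followed by the left semicircle of radius $R$. On the semicircle $|e^{\rho b}|, |e^{2\rho b}|\leq 1$, whence $F(\rho) = D_4\rho^3(1+o(1))$ uniformly in the argument as $R\to\infty$, contributing $\Delta\arg F = 3\pi$. For the imaginary segment I would exploit the symmetry $F(-iy) = \overline{F(iy)}$ and the decomposition $F(iy) = A(y) + iB(y)$, where the dominant behaviours $A(y) \sim -D_5 y^2$ and $B(y) \sim -D_4 y^3$ for large $y$, together with $A(0) = F(0) > 0$ and $B(0) = 0$, control the endpoint arguments so that the contribution of the imaginary axis to $\Delta\arg F$ should equal $3\pi$, giving $6\pi$ in total and thus exactly three zeros in the open left half-plane.

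The main technical obstacle is ruling out extra full turns of $\arg F(iy)$ on the imaginary axis caused by the oscillations $e^{iby}$ and $e^{2iby}$ appearing in $A(y)$ and $B(y)$. I would handle this by showing that the amplitudes of the trigonometric parts of $A(y)$ and $B(y)$ grow at most linearly in $y$, so that the polynomial parts $-D_5 y^2$ and $-D_4 y^3$ strictly dominate above an explicit threshold $y_0$; below $y_0$ one counts the alternating sign changes of $A$ and $B$ directly, or applies a Pontryagin-type interlacing criterion for exponential polynomials. The same estimate simultaneously verifies that $F$ has no purely imaginary zeros, which is required for the argument principle to apply. As a cross-check, the fact that the polynomial part $g(\rho) = D_4\rho^3 + D_5\rho^2 + D_6\rho + D_7$ itself has all three zeros in the left half-plane can be obtained independently via the Routh--Hurwitz criterion (which reduces to verifying $D_5 D_6 > D_4 D_7$, a routine algebraic inequality in the given parameters), providing the natural limiting object for a homotopy deformation of $F$ that one may invoke in place of the direct winding count.
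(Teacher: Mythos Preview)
Your overall framework coincides with the paper's: both use the argument principle on the boundary of a large left half-disk, and both observe that non-real zeros come in conjugate pairs. The substantive difference is in how the imaginary-axis part is handled. You propose to track the winding of $F$ directly and acknowledge that ruling out extra turns caused by the oscillatory terms is the ``main technical obstacle'', which you leave essentially open (a Pontryagin-type criterion or a case analysis below some threshold $y_0$ is suggested but not carried out). The paper instead splits $F = f - g$ with $f(\rho)=D_4\rho^3+D_5\rho^2+D_6\rho+D_7$ and $g(\rho)=(D_1\rho+D_2)e^{2\rho b}+D_3 e^{\rho b}$, first shows by the argument principle that the cubic $f$ has all three zeros in the left half-plane (this is exactly the Routh--Hurwitz condition $D_5D_6>D_4D_7$ you mention as a cross-check), and then applies Rouch\'e's theorem by verifying the single inequality $|g(iy)|<|f(iy)|$ for all real $y$. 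That inequality is tractable because $|g(iy)|^2$ simplifies to $D_1^2y^2+D_2^2+D_3^2+2D_3(D_2\cos(by)-D_1y\sin(by))$, which grows only quadratically, whereas $|f(iy)|^2$ grows like $D_4^2y^6$; one then checks the low-order coefficients. In effect, what you relegate to a final homotopy ``cross-check'' is the paper's main device, and it dissolves precisely the obstacle you flag.

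There is also a small gap in your real-root argument: the claim that a ``monotonicity check of $F$ on $(-\infty,0]$'' yields uniqueness is not obvious, since $F'$ mixes polynomial and exponential terms. The paper avoids this by comparing $f$ and $g$ on the real line: $f$ has its two critical points $r_1<r_2<0$ with $f(r_1),f(r_2)\le 0$, so $f\le 0$ on $(-\infty,r_2]$, while $g>0$ there; together with $f(0)>g(0)$ (your computation of $F(0)>0$) this forces exactly one crossing in $(r_2,0)$.
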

	\begin{proof}
		Define the functions 
	$$
	f(\rho)={\it D_4}\,{\rho}^{3}+{\it D_5}\,{\rho}^{2}+{\it 
		D_6}\,\rho+{\it D_7}\text{ and }g(\rho)=\left( {\it D_1}\,\rho+{\it D_2} \right) {{\rm e}^{2\,\rho\,b}}+{\it D_3
	}\,{{\rm e}^{\rho\,b}},
	$$ 
	and note that both are holomorphic in the complex plane. 
	We will first show that all three zeros of $f(\rho)$ in the complex plane have negative real part. Define a closed curve $K$ by considering the arc $[-i R,+iR]$ on the imaginary axis together with the semi-circle of radius $R$ to the left of it that is centered in zero, with $R\to\infty$. Going on the semi-circle from $iR$ to $-iR$, the term ${\it D_4}\,{\rho}^{3}$ dominates the value of $f$, and $f(\rho)$ winds 1.5 times around the origin on the way (from $D_4(i x)^3=-i D_4 x^3$ to $D_4(-i x)^3=+i D_4 x^3$ for $x\in{\mathbb R}$). For the part of $K$ on the imaginary axis we have 
	\begin{equation}\label{eq11}
	f(i x)=D_7-D_5x^2+i(D_6 x-D_4x^3), \quad x\in{\mathbb R}.\end{equation}
	As $x$ goes from $-\infty$ to $+\infty$, the real part of \eqref{eq11} becomes positive at $x_1=-\sqrt{D_7/D_5}$ and again negative at $x_2=\sqrt{D_7/D_5}$, and the imaginary part of \eqref{eq11} is negative for $x_1$ and positive for $x_2$ if and only if 
	\[D_4D_7<D_5D_6.\]
	But the latter condition holds for any choice of parameters $\lambda,c,b,t>0$ and $0<p,q<1$, which can be checked with a little algebra (in fact, writing $D_5D_6-D_4D_7$ as a polynomial in the argument $\lambda t$, one can verify that all the coefficients of that polynomial are positive). Due to $D_4(i x)^3=-i D_4 x^3$, this then means that 1.5 further windings are added on the part of $K$ on the imaginary axis, so that altogether we have 3 windings around the origin when going along $K$. By Cauchy's argument principle for holomorphic functions, we hence have established that $f(\rho)$ has three zeros in the negative halfplane. \\

	\noindent In order to extend this result now to $f(\rho)-g(\rho)$ also having exactly three zeros in the negative halfplane, we invoke Rouch\'{e}'s theorem. The latter guarantees the same number of zeroes for $f(\rho)$ and  $f(\rho)-g(\rho)$ inside the closed curve $K$, if 
\begin{equation}\label{eqrouche}\vert -g(\rho)\vert<\vert f(\rho)\vert\end{equation} for all $\rho$ on the curve $K$. On the semi-circle, \eqref{eqrouche} is clear due to the dominance of ${\it D_4}\,{\rho}^{3}$ as $R\to\infty$. This is likewise true for the part of $K$ on the imaginary axis for sufficiently large $R$, so that we are only left to deal with $\rho\in[-iR,iR]$ for not so large $R$. Note that 
\[g(ix)=D_3\cos(xb)+D_2\cos(2xb)-D_1x\sin(2xb)+i(D_1x\cos(2xb)+D_2\sin(2xb)+D_3\sin(xb))\]
for any $x\in{\mathbb R}$. Together with \eqref{eq11}, the condition \eqref{eqrouche} now translates into
\begin{multline}\label{inequ}
(D_1x\cos(2xb)+D_2\sin(2xb)+D_3\sin(xb))^2+(D_3\cos(xb)+D_2\cos(2xb)-D_1x\sin(2xb))^2\\
<(D_7-D_5x^2)^2+(D_6 x-D_4x^3)^2
\end{multline}
for all $x\in{\mathbb R}$. Observe that both sides of \eqref{inequ} are symmetric around $x=0$. The right-hand side of \eqref{inequ} can be written as 
\[D_7^2+(D_6^2 - 2 D_5 D_7)x^2+(D_5^2 - 2 D_4 D_6)x^4+D_4^2 x^6.
\]
and the left-hand side of \eqref{inequ} has the following expansion around zero
\[(D_2+ D_3)^2 + (D_1^2 - 2 b D_1 D_3 - b^2 D_2 D_3)\, x^2 +O(x^4).\]

\noindent In order to show $D_7>D_2+D_3$, one writes $D_7-D_3-D_2$ again as a polynomial in the argument $\lambda t$ and can then verify that all coefficients of that polynomial are positive. In a similar way, one can show that $D_6^2> 2 D_5 D_7$ and $D_5^2 > 2 D_4 D_6$ as well as $D_5^2 - 2 D_4 D_6>D_1^2 - 2 b D_1 D_3 - b^2 D_2 D_3$ for all $b>0$. One then sees that \eqref{inequ} indeed holds for all $x\in{\mathbb R}$, so that \eqref{eqrouche} applies and we have established the existence of exactly three zeros of  \eqref{eq:caracteristic_equation} in the negative halfplane. \\

\noindent It only remains to show that exactly one of these zeros is real-valued. In fact, $g$ is positive and monotone increasing with $
\underset{\rho\rightarrow-\infty}{\lim}g(\rho) = 0$ and $\underset{\rho\rightarrow+\infty}{\lim}g(\rho) = +\infty.
$
On the other hand, with $D_4>0$ we see that $\underset{\rho\rightarrow-\infty}{\lim}f(\rho) = -\infty$ and $\underset{\rho\rightarrow+\infty}{\lim}f(\rho) = +\infty$. The derivative of $f$  
$$
f'(\rho) = 3D_4\rho^2+2D_5\rho+D_6
$$  
is a polynomial of degree $2$ with two negative real roots 
$$
r_1 =-{\frac {1}{ct}} -\frac {\lambda}{c}\text{ and }r_2 = -{\frac {1}{ct}} -\frac {\lambda}{c}{\frac {2\,p+1}{3}}>r_1.
$$
Since
$$
f(r_1) = - \left( 1-q \right)  \left( 1-p\right) \lambda\leq 0\text{ and }f(r_2) = -\,{\frac { \left( -27\,pq+23\,p+4 \right)  \left( 1-p \right) \lambda}{27p}}\leq0,
$$
the function $f$ is negative for $\rho\leq r_2$ and cannot intersect with $g$ there. At the same time,  
$$f(0)-g(0)={\frac {1+{t}^{2} \left(- {p}^{2}+2\,p+1 \right) {\lambda}^{2}+t
		\left( p+2 \right) \lambda}{{\lambda}^{2} \left( 1-p \right) p{t}^{3
}}}\geq0,
$$
which implies the existence of a unique negative  $\rho_1\in\left(r_2,0\right)$ such that $f(\rho_1)=g(\rho_1)$ (there must also exist some real solution  $\rho>0$ to \eqref{eq:caracteristic_equation} as $g$ grows to infinity faster than $f$, but a positive solution is not relevant here).	Since $f(\overline{\rho})=\overline{f(\rho)}$ and  $g(\overline{\rho})=\overline{g(\rho)}$, where $\overline{z}$ denotes the complex conjugate, for every non-real zero $\rho$ its complex conjugate $\overline{\rho}$ also must be a zero, so that the second and third root of \eqref{eq:caracteristic_equation} are complex conjugates $\rho_2\pm i \rho_3$. \end{proof}

	The solution of the homogeneous equation is then of the form 
	\begin{equation}\label{eq:homogeneous_solution}
	\vh_0^{\text{hom}}(u,t)={\it A_1}\,{{\rm e}^{\rho_1\,u}}+{{\rm e}^{\rho_2\,u}} \left[ {\it A_2}\,
	\cos \left( \rho_3\,u \right) +{\it A_3}\,\sin \left( \rho_3\,u \right)\right],
	\end{equation}
	which together with the particular solution $\vh_0^{\text{part}}$ provides the generic solutions of the equation \eqref{eq:DDE_Order3} with
	\begin{equation}\label{eq:general_solution}
	\vh_0(u,t)={\it A_1}\,{{\rm e}^{\rho_1\,u}}+{{\rm e}^{\rho_2\,u}} \left[ {\it A_2}\,
	\cos \left( \rho_3\,u \right) +{\it A_3}\,\sin \left( \rho_3\,u \right)\right] + bu +C.
	\end{equation}
	Finally, the boundary conditions \eqref{eq:boundary} lead, after some algebra, to the system of equations \eqref{system} for the determination of $A_1,A_2$ and $A_3$.\hfill $\Box$

\section{Proof of Corollary \ref{coro2}}\label{AppendixB}
The corollary follows indeed by a suitable adaptation of the proof of Theorem \ref{theo:value_Function_Selfish}. Denote by $\ph_z(u,t)$ the ruin probability up to the finite exponential time horizon (with mean $t$) when starting in state $z$ ($z\in\{0,1,0^\ast\}$). Adapting the above arguments to the case of the ruin probability, the system of advanced differential equations in this case is
 \begin{equation}\label{eq:ODE_system_ruin}
 \begin{cases}
 0=&c\,\ph_0'(u,t)+(\lambda p +1/t)\,\ph_0(u,t)-\lambda p \,\ph_1(u,t), \\
 0=&c\,\ph_1'(u,t)+(\lambda +1/t)\,\ph_1(u,t)-\lambda p \,\ph_0(u+2b,t) -\lambda(1-p)\,\ph_{0^\ast}(u,t) ,\\
 0=&c\,\ph_{0^{\ast}}'(u,t)+(\lambda +1/t)\,\ph_{0^{\ast}}(u,t)-\lambda p \,\ph_0(u+2b,t) -\lambda(1-p)q\,\ph_{0}(u+b,t)\\
 &\quad-\lambda(1-p)(1-q)\,\ph_{0}(u,t) ,\\
 \end{cases}
 \end{equation}
which is exactly the homogeneous version of\eqref{eq:ODE_system} (again the derivative is always with respect to the first argument). However, now the boundary conditions are $\,\ph_0(0,t)=\,\ph_1(0,t)=\,\ph_{0^{\ast}}(0,t)=1$, as in each state ruin is immediate when starting without initial capital. Analogously to above this translates into 
a differential equation of order 3 with advanced argument for $\ph_0(u,t)$, namely
\begin{eqnarray}\label{eq:DDE_Order3ruin}
0&=& {\frac {{c}^{3}}{{\lambda}^{2} \left( 1-p \right) p}}\,\ph_0'''\left( u,t
\right) 
+{\frac { \left( 3+\lambda\,t
		\left( p+2 \right)  \right) {c}^{2}}{{\lambda}^{2} t\left( 1-p
		\right) p}}\,\ph_0''(u,t)\nonumber\\
&&\quad+
{\frac { \left( \lambda\,t+1 \right) c \left( 3+\lambda\,t
		\left( 2p+1 \right)  \right)}{{\lambda}^{2}{t}^{2} \left(1-p\right) p}}\,\ph'_0(u,t)\nonumber\\
&&\quad +{\frac {  1+\lambda\,t \left( p+2 \right) +{\lambda}^{2}{t}^{2} \left( 2p+
		1 \right)+{\lambda}^{3}{t}^{3}p \left(  \left( q-1 \right) {p}^{2}+
		2p\left( 1-q \right) +q \right)    }{{t}^{3}{\lambda}^{2} \left( 1-p \right) p}}\,\ph_0\left( u,t \right) \nonumber\\
&&\quad-\lambda\,q\left( 1-p \right)\,\ph_0\left( u+b,t \right) +{
	\frac { \left( \lambda\,t \left( p-2 \right) -1 \right) p}{ \left( 1-p \right)t }}\,\ph_0(u+2b,t)\nonumber\\
	&&\quad-{\frac {cp }{1-p}}\,\ph_0'\left( u+2\,b,t \right).\nonumber
\end{eqnarray}
The latter corresponds to the homogeneous equation of \eqref{eq:DDE_Order3}. However, the boundary conditions are now slighty different and amount to 
\begin{eqnarray}\label{eq:boundaryruin}
\ph_0(0,t) = 1\text{, }\ph_0'(0,t)=-\frac{1}{ct}\text{ and }\ph_0''(0,t) = \frac{\lambda^2p^2}{c^2}\,(\ph_0(2b,t)-1)+\frac{1}{c^2t^2}.\end{eqnarray}
Furthermore, we know that $\lim_{u\to\infty}\ph_0(u,t)=0$. Like in the proof of Theorem \ref{theo:value_Function_Selfish}, the solution can then be found to be of the form
 	\begin{equation}\label{eq:homogeneous_solution}
 \ph_0(u,t)={\it C_1}\,{{\rm e}^{\rho_1\,u}}+{{\rm e}^{\rho_2\,u}} \left[ {\it C_2}\,
 \cos \left( \rho_3\,u \right) +{\it C_3}\,\sin \left( \rho_3\,u \right)\right],
 \end{equation}
 where the linear system \eqref{systemruin} for the constants $C_1,C_2,C_3$ is finally derived from \eqref{eq:boundaryruin} by some elementary calculations.   
\hfill $\Box$

\section*{Acknowledgements}  
The authors thank Hans Gerber for helpful discussions concerning the presentation of the manuscript. Hans\-j\"{o}rg Albrecher acknowledges financial support from the Swiss National Science Foundation Project 200021\_191984. Pierre-O. Goffard acknowledges the support of l'Université Claude Bernard Lyon 1  via the Bonus Qualité Recherche (BQR) starting grant. The code to reproduce all the plots from the paper are available on the following Github repository \url{https://github.com/LaGauffre/Blockchain_mining_risk_analysis}.

\bibliography{Albrecher_Goffard_Selfish_MineRb}

\begin{thebibliography}{10}

\bibitem{AsAl10}
S.~Asmussen and H.~Albrecher.
\newblock {\em Ruin Probabilities}.
\newblock World Scientific, Singapore, 2010.

\bibitem{asmussen2002}
S.~Asmussen, F.~Avram, and M.~Usabel.
\newblock Erlangian approximations for finite-horizon ruin probabilities.
\newblock {\em ASTIN Bulletin}, 32(2):267--281, 2002.

\bibitem{AVANZI2007111}
B.~Avanzi, H.~U. Gerber, and E.S.W. Shiu.
\newblock Optimal dividends in the dual model.
\newblock {\em Insurance: Mathematics and Economics}, 41(1):111 -- 123, 2007.

\bibitem{banas}
J.~Bana{\'s} and I.~J. Cabrera.
\newblock On solutions of a neutral differential equation with deviating
  argument.
\newblock {\em Mathematical and Computer Modelling}, 44(11-12):1080--1088,
  2006.

\bibitem{bowden2018block}
R.~Bowden, H.~P. Keeler, A.~E. Krzesinski, and P.~G. Taylor.
\newblock Block arrivals in the bitcoin blockchain.
\newblock {\em arXiv preprint arXiv:1801.07447}, 2018.

\bibitem{corless1996lambertw}
R.~M. Corless, G.~H. Gonnet, D.~E.~G. Hare, D.~J. Jeffrey, and D.~E. Knuth.
\newblock On the {L}ambert {W} function.
\newblock {\em Advances in Computational mathematics}, 5(1):329--359, 1996.

\bibitem{EySi18}
I.~Eyal and E.~G. Sirer.
\newblock Majority is not enough: Bitcoin mining is vulnerable.
\newblock {\em Communications of the ACM}, 61(7):95--102, 2018.

\bibitem{Goebel2016}
J.~Göbel, H.~P. Keeler, A.~E. Krzesinski, and P.~G. Taylor.
\newblock Bitcoin blockchain~dynamics: The selfish-mine strategy in the
  presence of propagation delay.
\newblock {\em Performance Evaluation}, 104:23--41, oct 2016.

\bibitem{Go19}
P.-O. Goffard.
\newblock Fraud risk assessment within blockchain transactions.
\newblock {\em Advances in Applied Probability}, 51:443--467, 2019.

\bibitem{GoLe17JMAA}
P.-O. Goffard and C.~Lef{\`e}vre.
\newblock Boundary crossing of order statistics point processes.
\newblock {\em Journal of Mathematical Analysis and Applications},
  447(2):890--907, 2017.

\bibitem{GrPM18selfish}
C.~Grunspan and R.~P{\'e}rez-Marco.
\newblock On profitability of selfish mining.
\newblock {\em arXiv preprint arXiv:1805.08281}, 2018.

\bibitem{GrPM18Stubborn}
C.~Grunspan and R.~P{\'e}rez-Marco.
\newblock On profitability of stubborn mining.
\newblock {\em arXiv preprint arXiv:1808.01041}, 2018.

\bibitem{GrPM18Trailing}
C.~Grunspan and R.~P{\'e}rez-Marco.
\newblock On profitability of trailing mining.
\newblock {\em arXiv preprint arXiv:1811.09322}, 2018.

\bibitem{GrPM19}
C.~Grunspan and R.~Perez-Marco.
\newblock On profitability of block withholding strategies.
\newblock {\em ACM SIGMETRICS Performance Evaluation Review}, 46(3):126--126,
  2019.

\bibitem{iakovleva}
V.~Iakovleva and C.~J. Vanegas.
\newblock On the solution of differential equations with delayed and advanced
  arguments.
\newblock 13:57--63, 2005.

\bibitem{jankowski}
T.~Jankowski.
\newblock First-order impulsive ordinary differential equations with advanced
  arguments.
\newblock {\em Journal of Mathematical Analysis and Applications},
  331(1):1--12, 2007.

\bibitem{KrDaFe13}
J.~A. Kroll, I.~C. Davey, and E.~W. Felten.
\newblock The economics of bitcoin mining, or bitcoin in the presence of
  adversaries.
\newblock In {\em Proceedings of WEIS}, volume 2013, page~11, 2013.

\bibitem{10.5555/2772879.2773270}
Y.~Lewenberg, Y.~Bachrach, Y.~Sompolinsky, A.~Zohar, and Jeffrey~S.
  Rosenschein.
\newblock Bitcoin mining pools: A cooperative game theoretic analysis.
\newblock In {\em Proceedings of the 2015 International Conference on
  Autonomous Agents and Multiagent Systems}, AAMAS ’15, page 919–927,
  Richland, SC, 2015. International Foundation for Autonomous Agents and
  Multiagent Systems.

\bibitem{Na08}
S.~Nakamoto.
\newblock Bitcoin: A peer-to-peer electronic cash system.
\newblock Available at
  \href{https://bitcoin.org/bitcoin.pdf}{https://bitcoin.org/bitcoin.pdf},
  2008.

\bibitem{NaKuMiSh16}
K.~Nayak, S.~Kumar, A.~Miller, and E.~Shi.
\newblock Stubborn mining: Generalizing selfish mining and combining with an
  eclipse attack.
\newblock In {\em Security and Privacy (EuroS\&P), 2016 IEEE European Symposium
  on}, pages 305--320. IEEE, 2016.

\bibitem{Ro11}
M.~Rosenfeld.
\newblock Analysis of bitcoin pooled mining reward systems.
\newblock {\em arXiv preprint arXiv:1112.4980}, 2011.

\bibitem{Ro14}
M.~Rosenfeld.
\newblock Analysis of hashrate-based double spending.
\newblock {\em arXiv preprint arXiv:1402.2009}, 2014.

\bibitem{SaSoZo16}
A.~Sapirshtein, Y.~Sompolinsky, and A.~Zohar.
\newblock Optimal selfish mining strategies in bitcoin.
\newblock In {\em International Conference on Financial Cryptography and Data
  Security}, pages 515--532. Springer, 2016.

\bibitem{smith}
H.~L. Smith.
\newblock {\em An introduction to delay differential equations with
  applications to the life sciences}.
\newblock Springer, New York, 2011.

\end{thebibliography}
\bibliographystyle{plain}

\end{document}